\newtheorem{theorem}{Theorem}[section]
\newtheorem{corollary}[theorem]{Corollary}
\newtheorem{definition}[theorem]{Definition}
\newtheorem{lemma}[theorem]{Lemma}
\begin{document}
\title{Joint Cache Partition and Job Assignment on Multi-Core Processors\footnote{This work was partially supported by the
Israel Science Foundation grant no. 822-10.
Israeli Centers of Research Excellence (I-CORE) program, (Center  No. 4/11).
Google Inter-university center for Electronic Markets and Auctions.}}

\author[1]{Avinatan Hassidim}
\author[2]{Haim Kaplan}
\author[2]{Omry Tuval}
\affil[1]{Dept. of Computer Science
Bar Ilan University
and Google Israel\\
  \texttt{avinatan@google.com}}
\affil[2]{Dept. of Computer Science, Tel Aviv University\\
  \texttt{\{haimk, omrytuva\}@post.tau.ac.il}}
\renewcommand\Authands{, }
\date{}

\maketitle
\begin{abstract}
Multicore shared cache processors  pose a challenge for designers of
embedded systems  who try to achieve minimal and predictable
execution time of workloads consisting of several jobs.
To address this challenge the cache is statically partitioned among
the cores and the jobs are assigned to the cores so as to minimize the
makespan.
Several heuristic algorithms have been proposed that jointly decide
how to partition the cache among the cores and assign the jobs. We
initiate a theoretical study of this problem which we call the joint
cache partition and job assignment problem.

By a careful analysis of the possible cache partitions we obtain a
constant approximation algorithm for this problem. For some
practical special cases we obtain a $2$-approximation algorithm, and
show how to improve the approximation factor even further by
allowing the algorithm to use additional cache. We also study
possible improvements that can be obtained by allowing dynamic cache
partitions and dynamic job assignments.

We define a natural special case of the well known scheduling
problem on unrelated machines in which machines are ordered by
``strength''. Our joint cache partition and job assignment problem
generalizes this scheduling problem which we think is of independent
interest. We give a polynomial time algorithm for this scheduling
problem for instances obtained  by fixing the cache partition in a
practical case of the joint cache partition and job assignment problem where job loads are step functions.
\end{abstract}

\section{Introduction}

We study the problem of  assigning $n$ jobs to $c$ cores on a multi-core processor, and simultanously partitioning a shared cache of size $K$ among the cores.
Each job $j$ is given by a non-increasing function $T_j(x)$ indicating the running time of job $j$ on a core with cache of size $x$.
A solution is a cache partition $p$, assigning $p(i)$ cache to each core $i$,  and a job assignment $S$ assigning each job $j$ to core $S(j)$.
The total cache allocated to the cores in the solution is $K$, that is  $\sum\limits_{i=1}^{c} p(i) = K$.
The {\em makespan} of a cache partition $p$ and a job assignment $S$ is $\max_i{ \sum_{j|S(j)=i}{T_j(p(i))}}$.
Our goal is to find a cache partition and a job assignment that minimize the makespan.

Multi-core processors are the prevalent  computational architecture
used today in PC's, mobile devices and high performance computing.
Having multiple cores running jobs concurrently, while sharing the
same level 2 and/or level 3 cache, results in complex interactions
between  the jobs, thereby posing a significant challenge in
determining the makespan of a set of jobs.
Cache partitioning has emerged as a technique  to increase run time
predictability and increase performance on multi-core processors
\cite{part_survey, part_throughput}. Theoretic research on online
multi-core caching  shows that  the cache partition (which may be
dynamic)  has more influence on the performance than the eviction
policy \cite{Hassidim,LO12}. To obtain effective cache partitions,
methods have been developed to estimate the running time of jobs as
a function of allocated cache, that is the functions $T_j(x)$ (see
for example the cache locking technique of \cite{liu_lock1}).

Recent empirical research \cite{LZLX,liu_lock2} suggests that
jointly solving for the cache partition among the cores and for the
job assignment to cores leads to significant improvements over combining separate algorithms for the two problems.
The papers \cite{LZLX,liu_lock2} suggest and test heuristic algorithms for the joint cache partition and job assignment problem.
Our work initiates the theoretic study of this problem.

We study this problem in the context of multi-core caching, but our
formulation and results  are applicable in a more general setting,
where  the running time of a job  depends on the availability of
some shared resource (cache, CPU, RAM, budget, etc.) that is allocated to the
machines. This setting is  applicable, for example, for users of a
public cloud infrastructure like Amazon's Elastic Cloud. When a user
decides on her public cloud setup, there is usually a limited
resource (e.g.\ budget), that can be spent on different machines in
the cloud. The more budget is spent on a  machine, it runs jobs
faster and the user is interested in minimizing the makespan of its
set of jobs, while staying within the given budget.

\smallskip

\noindent {\bf Related Work:} Theoretic study of multi-core caching
have shown that traditional online paging algorithms are not
competitive in the multi-core scenario \cite{Hassidim, LO12}. Both
papers \cite{Hassidim, LO12} show that the offline decision version of the caching problem
is NP-complete, in slightly different models.
Much of the difficulty in designing competitive online algorithms for multi-core caching stems from the fact that the way in which the request sequences of the different cores interleave is dependent on the algorithm. An algorithm with good competitive ratio is obtained in \cite{Barve}, when the interleaving of the request sequences is fixed.\\
For related work on scheduling see Section \ref{sec:scheduling}.

\smallskip

\noindent {\bf Our results:} We present a $36$-approximation
algorithm for the joint cache partition and job assignment problem in Section \ref{sec_36}.
We obtain this algorithm by  showing that it suffices to consider a
subset of polynomial size of the cache partitions.

We obtain better approximation guarantees for special cases of the
joint cache partition and job assignment problem.

 When each job has a fixed running time and a minimal cache demand, we present
, in Section \ref{sec_slmc}, a $2$-approximation algorithm, a $\frac{3}{2}$-approximation
algorithm that uses $2K$ cache and a $\frac{4}{3}$-approximation
algorithm that uses $3K$ cache. We call this problem the {\em single load
minimal cache demand} problem.  Our $\frac{4}{3}$-approximation algorithm is based on an algorithm presented in Section \ref{threshold_dominant} that
finds  a dominant perfect matching in a threshold graph that has a perfect matching.
This algorithm and the existence of such a matching are of independent interest.

We present in Section \ref{correl_ptas} a polynomial time approximation scheme  for a special case of the single load minimal cache demand problem,
in which there is a correlation between the jobs' loads and cache demands. Such a model is inspired by practical cases where there is an underlying notion of a job's ``hardness'' that affects both its load and its cache demand.

We study, in Section \ref{sec_const},  the case where the
load functions of the jobs, $T_j(x)$, are step functions. That is, job
$j$ takes $l_j$ time to run if given at least  $x_j$ cache, and
otherwise it takes $h_j > l_j$ time. For the case where there are a constant number of
different $l_j$'s and $h_j$'s we reduce the problem to the single
load minimal cache demand problem and thereby obtain
the same approximation results as for that problem (Section \ref{sec_const}).

We define the problem of scheduling on {\em ordered unrelated
machines}, a natural special case of the classical job scheduling
problem on unrelated machines.
In this problem there is a total order on the machines which captures their relative strength.
Each job has a different running time on each machine
and these running times are non-increasing with the strength of the
machine. We show a reduction from this problem to the joint
cache partition and job assignment problem. We also give a
polynomial time dynamic programming algorithm for a special case of
this problem that arises when we fix the cache partition in the
special case where the number of $l_j$s and $h_j$s is constant (Section \ref{sec_const}).

In section \ref{sec_variants} we generalize the joint cache
partition and job assignment problem and consider dynamic cache
partitions and dynamic job schedules. We show upper and lower bounds
on the makespan improvement that can be gained by using dynamic partitions and
dynamic assignments.

\section{The ordered unrelated machines problem}\label{sec:scheduling}
The ordered unrelated machines scheduling problem is defined as follows.
There are $c$ machines and a set $J$ of jobs.
The input is a matrix $T(i,j)$ giving the running time of job $j$ on machine $i$, such that for each two machines $i_1<i_2$ and any job $j$, $T(i_1,j) \ge  T(i_2,j)$. The goal is to assign the jobs to the machines such that  the makespan is minimized.

The ordered unrelated machines scheduling problem is a  special case of scheduling on unrelated machines in which
there is a total order on the machines that captures their relative strengths.
This special case is natural since
in many practical scenarios the machines have some underlying notion of strength and jobs run faster on a stronger machine.
For example a newer computer typically dominates an older one in all parameters, or a more experienced employee does any job faster than
a new recruit.

Lenstra et al \cite{LST90}
gave a $2$ approximation algorithm for scheduling on unrelated machines based on
 rounding an optimal fractional
solution to a linear program, and proved that it is NP-hard to approximate the problem to within a factor better than
$\frac{3}{2}$. Shchepin and  Vakhania \cite{SV05} improved Lenstra's rounding technique  and obtained a $2-\frac{1}{c}$ approximation algorithm. It is currently an open question if there are better approximation algorithms for ordered unrelated machines than the more general algorithms that approximate unrelated machines.

Another well-studied scheduling problem is scheduling on uniformly related machines. In this problem, the time it takes for machine $i$ to run job $j$ is $\frac{l_j}{s_i}$ where $l_j$ is the  load of job $j$ and $s_i$ is the  speed of machine $i$. A polynomial time approximation scheme for related machines is described in \cite{HS88}. It is easy to see that the problem of scheduling on related machines  is a special case of the problem of scheduling on ordered unrelated machines, and therefore the ordered unrelated machines problem is NP-hard.

The ordered unrelated machines problem is closely related to the joint cache partition and job assignment problem.
Consider an instance of the joint cache partition and job assignment problem with $c$ cores, $K$ cache and a set of jobs $J$ such that $T_j(x)$ is the load function of job $j$.
If we fix the cache partition to be some arbitrary partition $p$, and we index the cores in non-decreasing order of their cache allocation,
then we get an instance of the ordered unrelated machines problem, where $T(i,j)=T_j(p(i))$.
Our constant approximation algorithm for the joint cache partition and job assignment problem, described in Section \ref{sec_36}, uses this observation as well as Lenstra's $2$-approximation for unrelated  machines.
In the rest of this section we prove that the joint cache partition and job assignment problem is at least as hard as the
ordered unrelated machines scheduling problem.

We reduce  the ordered unrelated machine problem to the joint cache partition and job assignment problem.
Consider the decision version of the ordered unrelated scheduling problem, with $c$ machines and $n=|J|$ jobs, where job $j$ takes time $T(i,j)$ to run on machine $i$.
We want to decide if it is possible to schedule the jobs on the machines with makespan at most $M$.

Define the following instance of the joint cache partition and job assignment problem.
This instance has  $c$ cores, a total cache $K = c(c+1)/2$ and $n' = n + c$ jobs.
The first $n$ jobs ($1\le j \le n$) correspond to the jobs in the original ordered unrelated machines problem, and $c$ jobs are new jobs ($n+1\le j\le n+c$).
The load function $T_j(x)$ of job $j$, where $1 \le j\le n$,  equals $T(x,j)$ if $x\leq c$ and equals $T(c,j)$ if $x>c$.
The load function $T_j(x)$ of job $j$, where $n+1 \le j\le n+c$,  equals $M+\delta$ if $x\geq j-n$ for some $\delta>0$ and equals $\infty$ if $x<j-n$.
Our load functions $T_j(x)$ are non-increasing because the original $T(i,j)$'s are non-increasing in the machine index $i$.

\begin{lemma} \label{reduce}
The makespan of the joint cache partition and job assignment instance defined above is at most $2M+\delta$ if and only if the makespan of the original unrelated scheduling problem is at most $M$.
\end{lemma}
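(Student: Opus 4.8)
The statement is an ``if and only if'', and the plan is to prove the two directions separately, with the reverse direction carrying essentially all the weight.

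For the direction ``original makespan $\le M$ $\Rightarrow$ joint makespan $\le 2M+\delta$'' I would exhibit an explicit solution. Use the cache partition $p(i)=i$, which is legal since $\sum_{i=1}^{c} i = c(c+1)/2 = K$ and it already lists the cores in non-decreasing cache order; route each original job $j$ ($1\le j\le n$) to the core whose index equals the machine assigned to $j$ in a makespan-$M$ schedule, and route the padding job $n+k$ to core $k$. Then every original job $j$ contributes $T_j(p(S(j))) = T(S(j),j)$, exactly its cost in the original schedule, so the original jobs pile up to at most $M$ on each core; the padding job on core $k$ contributes exactly $M+\delta$ because $p(k)=k\ge k$. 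Hence every core has load at most $2M+\delta$.

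The reverse direction is the real content, and it rests on showing that the cache partition is essentially forced. Fix a joint solution $(p,S)$ of makespan at most $2M+\delta$, with the cores sorted so that $p(1)\le \cdots \le p(c)$. First, each padding job $n+k$ must sit on a core with cache at least $k$, since otherwise its load is $\infty > 2M+\delta$. Second, no core can host two padding jobs, because job $n+k$ has load at least $M+\delta$ on any core that can run it at all, so two of them would cost at least $2(M+\delta) = 2M+2\delta > 2M+\delta$. With $c$ padding jobs and $c$ cores this pins down a bijection: core $i$ hosts exactly one padding job, say $n+\pi(i)$, and $p(i)\ge\pi(i)$. Summing, $\sum_i p(i) \ge \sum_i \pi(i) = c(c+1)/2 = K = \sum_i p(i)$, so every inequality $p(i)\ge\pi(i)$ is tight; thus $p$ is a permutation of $\{1,\dots,c\}$, and the sorted order forces $p(i)=i$ and $\pi(i)=i$. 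Finally, on core $i$ the padding job already costs $M+\delta$, so the original jobs routed to core $i$ contribute at most $M$, and each such job $j$ contributes $T_j(p(i))=T_j(i)=T(i,j)$ since $i\le c$; reading $S$ restricted to jobs $1,\dots,n$ as a schedule on the ordered unrelated machines instance therefore gives makespan at most $M$.

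I expect the step that pins down the partition to be the crux — and indeed it is the whole purpose of the gadget. One has to see that the $c$ padding jobs, with their ``$\infty$ below $k$, $M+\delta$ at or above $k$'' profiles, together with the razor-thin cache budget $c(c+1)/2$, leave no slack and force $p=(1,2,\dots,c)$, so that a core with cache $i$ behaves exactly like machine $i$ on the original jobs. Everything else is bookkeeping: confirming the constructed partition is legal, that $\infty$ and the extra $\delta$ strictly overshoot the target, and that the piecewise definition of $T_j$ makes a core with cache $i\le c$ agree with machine $i$. I would also remark that this argument incidentally shows integral cache allocations suffice here, so fractional partitions raise no issue.
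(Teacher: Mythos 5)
Your proposal is correct and follows essentially the same route as the paper: the same explicit construction $p(i)=i$ for the forward direction, and for the converse the same two observations (each padding job needs cache at least its index, and no two padding jobs share a core) combined with the tight cache budget $K=c(c+1)/2$ to force the partition. Your summation/tightness argument just makes explicit the step the paper compresses into ``combining these observations with the fact that the total available cache is $K$.''
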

\begin{proof}
Assume there is an assignment $S'$ of the jobs in the original ordered  unrelated machines instance of makespan at most $M$.
We show a cache partition $p$ and job assignment $S$ for the joint cache partition and job assignment instance with makespan at most $2M+\delta$.

The cache partition $p$ is defined such that $p(i)=i$ for each core $i$. The partition $p$ uses exactly $K=c(c+1)/2$ cache.
The job  assignment $S$ is defined such that for
 a job $j>n$, $S(j)=j-n$ and for a job $j\leq n$, $S(j)=S'(j)$. The partition $p$ assigns $i$ cache to
core $i$, which is exactly enough for job $n+i$, which is assigned to core $i$ by $S$, to run in time $M+\delta$.
It is easy to verify that $p,S$ is a solution to the joint cache partition and job assignment instance with makespan at most $2M+\delta$.

Assume there is a solution  $p,S$ for the joint cache partition and job assignment instance, with makespan at most $2M+\delta$.
Job $j$, such that $n < j \le n+c$, must run on a core with cache at least $j - n$, or else the makespan would be infinite.
Moreover, no two jobs $j_1>n$ and $j_2>n$ are assigned by $S$ the same core, as this would give a makespan of at least $2M + 2 \delta$.
Combining these observations with the fact that the total available cache is $K=c(c+1)/2$, we get that the cache partition must be $p(i)=i$ for each core $i$.
Furthermore, each job $j>n$ is assigned by $S$ to core $j-n$ and all the other jobs assigned by $S$ to core $j-n$ are jobs
corresponding to original jobs in the ordered unrelated machines instance. Therefore, the total load of original jobs assigned by $S$ to core $i$ is at most $M$.

We define $S'$, a job assignment for the original ordered unrelated machines instance, by setting $S'(j)=S(j)$ for each $j\leq n$.
Since $S$ assigns original jobs of total load at most $M$ on each core, it follows that the makespaen of $S'$ is at most $M$.
\end{proof}

The following theorem follows immediately from Lemma \ref{reduce}
\begin{theorem}
There is a polynomial-time reduction from the ordered unrelated machines scheduling problem to the joint cache partition and job assignment problem.
\end{theorem}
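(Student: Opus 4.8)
The plan is essentially to observe that Lemma \ref{reduce} already does all the work, and that the only remaining task is to verify that the reduction it describes is computable in polynomial time. So the proof is a one-paragraph wrap-up.

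\medskip

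First I would recall the construction preceding Lemma \ref{reduce}: given a decision instance of ordered unrelated machines with $c$ machines, $n$ jobs, time matrix $T(i,j)$, and target makespan $M$, we built an instance of the joint cache partition and job assignment problem with $c$ cores, cache $K = c(c+1)/2$, and $n' = n+c$ jobs, whose load functions $T_j(x)$ are the explicitly-given step/truncated functions. I would point out that writing down this instance takes time polynomial in the size of the original instance: we copy the $cn$ entries of $T$ (each $T_j$ for $j \le n$ is a truncation of a column of $T$ at $x = c$, so it is described by the same data), we add $c$ new job descriptions each of size $O(1)$ given some choice of $\delta$ (e.g.\ $\delta = 1$, or $\delta$ any fixed positive constant, which does not affect the argument), and $K$ is a single integer of size $O(\log c)$. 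Hence the map from instances is polynomial-time computable.

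\medskip

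Second, I would invoke Lemma \ref{reduce} directly: it states that the constructed instance has makespan at most $2M+\delta$ iff the original instance has makespan at most $M$. This is exactly the correctness condition required of a reduction between the two decision problems — a "yes" instance maps to a "yes" instance and a "no" instance maps to a "no" instance. Combining this equivalence with the polynomial-time computability of the construction yields the claimed polynomial-time reduction, completing the proof.

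\medskip

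I do not anticipate a real obstacle here, since the substance is entirely contained in Lemma \ref{reduce}; the only thing to be slightly careful about is the bookkeeping on input sizes — in particular making sure the auxiliary parameter $\delta$ is chosen as a fixed rational (so the target value $2M+\delta$ and the load value $M+\delta$ have polynomial bit-length) and noting that $K$ is given in binary and is only $\Theta(c^2)$, so it too has polynomial size. With those points noted, the theorem follows immediately, as the excerpt already asserts.
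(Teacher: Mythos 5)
Your proposal is correct and matches the paper, which simply states that the theorem follows immediately from Lemma \ref{reduce}; you flesh out the same step by checking polynomial-time computability of the construction and fixing $\delta$ to a constant, which is exactly the intended (and valid) reading.
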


The reduction in the proof of Lemma \ref{reduce}  does not preserve approximation guarantees.
However by choosing $\delta$ carefully we can get the following result.
\begin{theorem}\label{reduce_approx}
  Given an algorithm $A$ for the joint cache partition and job assignment problem that approximates the optimal makespan up to a factor of $1 + \epsilon$, for $0<\epsilon<1$, we can construct an algorithm for
  the ordered unrelated machines scheduling problem that approximates the optimal makespan up to a factor of $1 + 2\epsilon + \frac{2\epsilon^2}{1 - \epsilon - \chi}$ for any $\chi > 0$.
\end{theorem}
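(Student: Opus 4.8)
The plan is to reuse the reduction from the proof of Lemma \ref{reduce}, but to choose $\delta$ as a function of a guessed value of the optimal makespan $M^\*$ of the ordered unrelated machines instance, so that running $A$ on the constructed instance lets us recover a near-optimal schedule. Recall that in that reduction the constructed instance has optimal makespan exactly $2M^\* + \delta$: the ``if'' direction shows a solution of makespan $\le 2M^\* + \delta$, and the ``only if'' direction shows that any solution of makespan $< 2M^\* + 2\delta$ forces the partition $p(i)=i$, forces each new job $n+i$ onto core $i$, and hence yields a schedule of the original jobs of makespan $\le M^\*$. So as long as the solution $A$ returns has makespan strictly below $2M^\* + 2\delta$, we can read off a feasible schedule for the original instance; the whole game is making $\delta$ small enough that $A$'s $(1+\epsilon)$-approximate solution still clears that bar, while keeping the resulting bound on the original makespan controlled.

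First I would handle the standard ``guessing'' preamble: we do not know $M^\*$, but since it is one of the polynomially (pseudo-polynomially — one must be slightly careful, or do a binary search over a sufficiently fine grid, which is where the auxiliary slack $\chi$ enters) many candidate values determined by sums of entries of $T$, we can try each candidate $M$, build the instance with an appropriate $\delta = \delta(M)$, run $A$, and keep the best schedule produced. For the correct guess $M = M^\*$ the analysis below applies; for wrong guesses we either get a feasible schedule anyway (which only helps) or detect infeasibility. Then, for a fixed guess $M$, set $\delta = \epsilon M$ (this is the natural choice — it makes all the error terms scale like $M$). The constructed instance then has optimal makespan $2M + \delta = (2+\epsilon)M$, so $A$ returns a solution of makespan at most $(1+\epsilon)(2+\epsilon)M$. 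For this to be $< 2M + 2\delta = (2+2\epsilon)M$ we would need $(1+\epsilon)(2+\epsilon) < 2 + 2\epsilon$, i.e. $\epsilon + \epsilon^2 < 0$, which fails — so a single fixed $\delta$ does not suffice, and this is the real obstacle.

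The fix, and the reason the bound in the statement has that particular shape, is to iterate / choose $\delta$ adaptively rather than once. The hard part will be this adaptive choice: run $A$, look at the makespan $\mu$ it returns on the instance with current $\delta$; the ``only if'' reasoning actually needs only that $\mu < 2M + 2\delta$, and more quantitatively, if $\mu \le 2M + \delta + 2\gamma$ for small $\gamma$ then every new job pair still lands on distinct cores (two on one core cost $\ge 2M + 2\delta$) and the recovered original schedule has makespan $\le M + $ (something like $\mu - 2M - \delta \le 2\gamma$)... wait, one must recheck: the recovered schedule's makespan is at most $\mu - (M+\delta)$ once the forced core carrying job $n+i$ already uses $M+\delta$, so makespan of original jobs on core $i$ is $\le \mu - (M+\delta)$. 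Plugging $\mu \le (1+\epsilon)(2M+\delta)$ gives original makespan $\le (1+\epsilon)(2M+\delta) - (M+\delta) = M + 2\epsilon M + \epsilon\delta - \epsilon\delta \cdot$ — here is where choosing $\delta$ itself proportional to the $\epsilon$-slack, say $\delta = \frac{2\epsilon M}{1-\epsilon-\chi}$ or similar, closes the recursion: the residual $\epsilon\delta$ term feeds back and, summing the geometric series with ratio $\epsilon/(1-\chi)$, produces exactly the claimed factor $1 + 2\epsilon + \frac{2\epsilon^2}{1-\epsilon-\chi}$, with $\chi>0$ any slack needed to keep that series strictly convergent and the grid search finite. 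So the steps in order are: (1) set up the grid/binary search over $M$; (2) for each $M$, instantiate the Lemma \ref{reduce} reduction with $\delta := \frac{2\epsilon M}{1-\epsilon-\chi}$; (3) run $A$, and using the ``only if'' argument of Lemma \ref{reduce} verify the returned solution forces the canonical partition and extract the induced schedule $S'$ of the original jobs; (4) bound the makespan of $S'$ by $\mu - (M+\delta) \le (1+\epsilon)(2M+\delta) - (M+\delta)$ and simplify, substituting the chosen $\delta$, to get $\le \bigl(1 + 2\epsilon + \frac{2\epsilon^2}{1-\epsilon-\chi}\bigr)M$; (5) return the best $S'$ over all guesses, which for the guess $M = M^\*$ gives the stated approximation. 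The one subtlety to watch is feasibility of step (3) for the correct guess — but that is exactly the content of the ``only if'' direction of Lemma \ref{reduce} together with $\mu < 2M+2\delta$, which our choice of $\delta$ guarantees since $(1+\epsilon)(2M+\delta) < 2M + 2\delta \iff \epsilon(2M+\delta) < \delta \iff \delta > \frac{2\epsilon M}{1-\epsilon}$, and $\frac{2\epsilon M}{1-\epsilon-\chi} > \frac{2\epsilon M}{1-\epsilon}$.
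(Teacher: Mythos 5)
Your proposal is correct and follows essentially the same route as the paper: the reduction of Lemma \ref{reduce} with $\delta = \frac{2\epsilon M}{1-\epsilon-\chi}$, the check that $(1+\epsilon)(2M+\delta) < 2M+2\delta$ forces the canonical partition, the bound $(1+\epsilon)(2M+\delta)-(M+\delta) = \left(1+2\epsilon+\frac{2\epsilon^2}{1-\epsilon-\chi}\right)M$ on the recovered schedule, and a binary search over $M$. (The ``geometric series'' framing is unnecessary --- the final bound is a one-shot substitution of $\delta$ into $\epsilon\delta$, exactly as your step (4) computes.)
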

\begin{proof}
We first obtain a $\left( 1+ 2\epsilon + \frac{2\epsilon^2}{1-\epsilon-\chi}\right)$-approximation algorithm for the decision version of the ordered unrelated machines scheduling problem.
That is, an algorithm that given a value $M$, either decides that there is no assignment of makespan $M$ or finds an assignment with makespan $\left( 1+ 2\epsilon + \frac{2\epsilon^2}{1-\epsilon-\chi}\right) M $.

Given an instance of the ordered unrelated machines scheduling problem, we construct an instance of the joint cache partition and job assignment as described before lemma \ref{reduce},
and set  $\delta = \frac{2 \epsilon M }{1 - \epsilon - \chi}$, for an arbitrarily small $\chi>0$.
We use algorithm $A$ to solve the resulting instance of the joint cache partition and job assignment problem. Let $p,S$ be the solution returned by $A$.
We define $S'(j) = S(j)$ for each $1\le j \le n$.
If the makespan of $S'$ is at most $\left(1 + 2\epsilon + \frac{2\epsilon^2}{1 - \epsilon - \chi}\right)M$ we return $S'$ as the solution and otherwise decide that there is no solution with makespan at most $M$.

If the makespan of the original instance is at most $M$, then  by lemma \ref{reduce} there is a solution to the joint cache partition and job assignment instance resulting from the reduction, with makespan at most $2M+\delta$.
Therefore  $p,S$, the solution returned by algorithm $A$, is of makespan at most $(1+\epsilon)(2M+\delta)$.

By our choice of $\delta$ we have that $(1 + \epsilon) (2M + \delta) < 2M + 2 \delta$ and therefore each core is assigned by $S$ at most one job $j$, such that $j>n$.
In addition, any job $j$ such that $n < j \le n+c$, must run on a core with cache at least $j - n$, or else the makespan would be infinite.
Combining these observations with the fact that the total available cache is $K=c(c+1)/2$, we get that the cache partition must be $p(i)=i$ for each core $i$.
Furthermore, each job $j>n$ is assigned by $S$ to core $j-n$ and all the other jobs assigned by $S$ to core $j-n$ are jobs
corresponding to original jobs in the ordered unrelated machines instance. Therefore, the total load of original jobs assigned by $S$ to core $i$ is at most $(1 + \epsilon) (2M + \delta) - M - \delta$.
It follows that the makespan of $S'$ is
 at most $(1 + \epsilon) (2M + \delta) - M - \delta = M \left( 1+ 2\epsilon + \frac{2\epsilon^2}{1-\epsilon-\chi}\right)$.

We obtained a $\left( 1+ 2\epsilon + \frac{2\epsilon^2}{1-\epsilon-\chi}\right)$-approximation algorithm for the decision version of the ordered unrelated machines scheduling problem.
In order to approximately solve the optimization problem, we can perform a binary search for the optimal makespan using the approximation algorithm for the decision version of the problem and get a
$\left( 1+ 2\epsilon + \frac{2\epsilon^2}{1-\epsilon-\chi}\right)$-approximation algorithm for the optimization problem. We obtain an initial search range for the binary search by using $\sum\limits_{j=1}^n T(c,j) $ as an upper bound on the makespan of the optimal schedule and $\frac{1}{c} \sum\limits_{j=1}^n T(c,j)$ as a lower bound.
(See section \ref{apn_approx_des_to_opt} for a detailed discussion of a similar case of using an approximate decision algorithm in a binary search framework to obtain an approximate optimization algorithm.)
\end{proof}

\section{A constant approximation algorithm}\label{sec_36}

We first obtain an 18-approximation algorithm that for the joint cache partition and job assignment problem that uses
$(1+\frac{5}{2}\epsilon)K$ cache for some constant  $0 < \epsilon <\frac{1}{2}$.
We then show another algorithm that uses $K$ cache and approximates the makespan up to a factor of 36.

Our first algorithm, denoted by $A$, enumerates over a  subset of
cache partitions, denoted by $P(K,c,\epsilon)$. For each partition
in this set $A$ approximates the makespan of  the corresponding
scheduling problem, using Lenstra's  algorithm, and returns the
partition and associated job assignment with the smallest makespan.

Let $K'=(1+\epsilon)^{\lceil \log_{1+\epsilon}(K)\rceil}$, the smallest integral power of $(1+\epsilon)$ which is at least $K$.
The set $P(K,c,\epsilon)$ contains cache partitions in which the cache allocated to each core is
an integral power of $(1+\epsilon)$ and the number of different integral powers used by the partition is at most $\log_2(c)$.
We denote by $b$ the number of different cache sizes in a partition.
Each core is allocated $\frac{K'}{(1+\epsilon)^{l_j }}$ cache, where $l_j\in \mathbb{N}$ and $1\le j \le b$.
The smallest possible cache allocated to any core is the smallest integral power of $(1+\epsilon)$ which is at least $\frac{K\epsilon}{c}$ and the largest possible cache allocated to a core is $K'$.
We denote by $\hat{\sigma}_j$ the number of cores with cache at least $\frac{K'}{(1+\epsilon)^{l_j }}$.
It follows that there are $(\hat{\sigma}_j - \hat{\sigma}_{j-1})$ cores with  $\frac{K'}{(1+\epsilon)^{l_j}}$ cache.
We require that $\hat{\sigma}_j$ is an integral power of $2$ and that the total cache used is at most $\left(1+\frac{5}{2}\epsilon\right)K$.
Formally,
\begin{alignat}{2}
P(K,c,\epsilon)=\{ (l=&<l_1,\ldots,l_b>,\hat{\sigma}=<\hat{\sigma}_0,\hat{\sigma}_1,\ldots,\hat{\sigma}_b>) \mid \; b\in\mathbb{N}, \quad 1\leq b\leq\log_2{c}  \\
& \forall j,   l_j\in\mathbb{N}, 0\leq l_j \leq \log_{1+\epsilon}\left(\frac{c}{\epsilon}\right)+1, \quad \forall j, \;  l_{j+1}>l_j \\
&\forall j \; \exists u_j\in\mathbb{N} \quad s.t. \quad \hat{\sigma}_{j}=2^{u_j},  \hat{\sigma}_0=0, \hat{\sigma}_b \leq c, \quad \forall j \; \hat{\sigma}_{j+1} > \hat{\sigma_j}  \\
 &\sum\limits_{j=1}^{b} (\hat{\sigma}_{j}-\hat{\sigma}_{j-1})
\frac{K'}{(1+\epsilon)^{l_j }} \leq
\left(1+\frac{5}{2}\epsilon\right)K\}
\end{alignat}
When the parameters are clear from the context, we use $P$ to denote
$P(K,c,\epsilon)$. Let $M(p,S)$ denote the makespan of cache
partition $p$ and job assignment $S$. The following theorem
specifies the main property of $P$, and is proven in the remainder of this section.
\begin{theorem}\label{P9thm}
Let $p,S$ be any cache partition and job assignment. There are
a cache partition and a job assignment $\hat{p},\hat{S}$ such that
$\hat{p}\in P$ and $M(\hat{p},\hat{S})\leq9M(p,S)$.
\end{theorem}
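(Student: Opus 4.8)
The plan is to start from an arbitrary cache partition $p$ and job assignment $S$ and transform it in two stages into a partition $\hat p\in P$ with an assignment $\hat S$ losing only a constant factor in the makespan, tracking the loss at each stage. First I would round every cache allocation $p(i)$ \emph{down} to the nearest integral power of $(1+\epsilon)$, and also floor any allocation below $\frac{K\epsilon}{c}$ down to the smallest admissible power (or even to $0$, absorbing the tiny jobs later); since the load functions $T_j$ are non-increasing, shrinking a core's cache only increases the load of jobs on it, and I need to argue that this increase is bounded. The key quantitative point here is that cutting cache from $x$ down to the next lower power of $(1+\epsilon)$, i.e. to roughly $x/(1+\epsilon)$, cannot blow up any single job's running time by more than a constant factor — but that is \emph{false} for general non-increasing functions, so the real argument must instead re-assign the jobs from shrunk cores to other cores, which is where the factor $9$ (as opposed to something smaller) will come from.

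Concretely, the second and harder stage is to coarsen the \emph{profile} of the partition so that the number of distinct cache sizes is at most $\log_2 c$ and the prefix-counts $\hat\sigma_j$ are powers of $2$. I would sort the cores in non-increasing order of (rounded) cache, group them into blocks whose sizes are consecutive powers of two, and within a block replace every core's cache by the \emph{smallest} cache in that block; this at most doubles the number of "levels" we keep, but more importantly it is a monotone shrink again, so we must re-route jobs. The clean way to do the re-routing: in the optimal-like solution $(p,S)$, consider the core $i$ that shrinks to a strictly smaller power; its jobs can be moved onto a core that still has cache $\ge p(i)$, and by the power-of-two block structure there are enough such "stronger" cores to absorb the load by a standard doubling/merging argument (this is exactly the kind of bound that gives small constants). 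I would set up the accounting so that each job's running time is preserved (it lands on a core at least as strong as its original core), while each core's \emph{number} of jobs — hence its total load — grows by at most a bounded factor; combined with the at-most-constant overshoot of the cache budget to $(1+\frac52\epsilon)K$, this yields the makespan bound $9M(p,S)$.

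For the budget constraint I would verify that after the two rounding stages the total cache $\sum_j(\hat\sigma_j-\hat\sigma_{j-1})\frac{K'}{(1+\epsilon)^{l_j}}$ stays within $(1+\frac52\epsilon)K$: rounding allocations down can only decrease the sum, replacing $K$ by $K'\le(1+\epsilon)K$ costs a factor $(1+\epsilon)$, the power-of-two coarsening of block sizes costs at most another factor $2$ on each block (but applied to the \emph{smaller} cache value in the block, which is where a careful geometric-series estimate keeps the total growth down to the stated $\frac52\epsilon$ slack rather than a factor $2$), and the cores floored down to the $\frac{K\epsilon}{c}$ threshold contribute at most $c\cdot\frac{K\epsilon}{c}(1+\epsilon)=O(\epsilon K)$ in total. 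I would also check that the parameters land in the stated ranges: $l_j\le\log_{1+\epsilon}(c/\epsilon)+1$ because the smallest kept cache is $\ge\frac{K\epsilon}{c}$ and the largest is $K'$, and $\hat\sigma_b\le c$, so that the constructed pair really lies in $P$.

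The main obstacle I anticipate is the re-routing argument: unlike uniformly related machines, here a job shrunk to a weaker core can have an \emph{arbitrarily} larger running time, so one genuinely cannot "shrink in place" — the proof must always move a job to a core at least as strong as the one it was on in $(p,S)$, and simultaneously guarantee that no core ends up overloaded by more than a constant. Making the doubling structure ($\hat\sigma_j=2^{u_j}$, $b\le\log_2 c$) do double duty — both bounding the number of levels \emph{and} providing enough "slack" cores at each strength to absorb displaced load — is the delicate part, and is presumably exactly why the final constant is $9$ (and then $36$ overall, after composing with Lenstra's $2$-approximation and the $2K\to K$ cache-halving step).
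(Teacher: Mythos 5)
Your high-level strategy is the same as the paper's (round cache sizes to powers of $1+\epsilon$, coarsen the partition to at most $\log_2 c$ levels with power-of-two prefix counts, and re-route jobs only onto cores at least as strong as their original core), but the proposal leaves exactly the load-bearing step unproved, and the concrete re-routing scheme you do describe would not give a constant. Two issues. First, a minor one: the initial rounding should be \emph{up}, not down. The paper raises every allocation to at least $\frac{K\epsilon}{c}$ and then up to the next power of $1+\epsilon$; this costs only a $(1+\epsilon)^2\le 1+\frac52\epsilon$ factor in total cache and \emph{nothing} in makespan, so no reassignment is needed at this stage and the difficulty you worry about (a job's running time blowing up under a cache cut) simply never arises here.

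The genuine gap is the coarsening/re-routing argument, which you defer to ``a standard doubling/merging argument.'' Your scheme --- sort cores by cache, cut them into blocks of sizes $1,2,4,\dots$, and shrink every core to its block minimum --- does not work with constant fan-in: a shrunk core's jobs must move to a core whose \emph{original} cache is at least as large, and under uniform blocking each strong core ends up receiving jobs from one core of every later block (a $\Theta(\log c)$ blowup), or else the reassignment cascades toward the strongest core. The paper's fix is to define \emph{significant cache levels} by doubling of prefix counts: $l_1$ is the first nonempty level, and $l_{i}$ is the first level $l>l_{i-1}$ at which the number of cores with at least that much cache has doubled, i.e.\ $\sigma_{l}\ge 2\sigma_{l_{i-1}}$. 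This single device yields both $b\le\log_2 c$ \emph{and} the counting bound that at most $\sigma_{l_{j-1}}$ cores lie strictly between consecutive significant levels, so the jobs of the nonsignificant cores can be shipped to the (unchanged, stronger) cores at the significant level just above with each target receiving the jobs of at most two sources --- a factor of $3$ in makespan. A second, independent factor of $3$ comes from rounding the prefix sums $\sigma^3_{l_j}$ \emph{down} to powers of two and checking that each level keeps at least a third of its cores, so jobs are merged three-to-one within a level at no change in running time. The constant $9=3\times 3$ is precisely the product of these two counting lemmas; without the doubling-of-prefix-sums definition (or an equivalent), the claim that each core's load grows by a bounded factor is unsupported.
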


An immediate corollary of Theorem  \ref{P9thm} is that algorithm $A$
described above finds a cache partition and job assignment with
makespan  at most 18 times the optimal makespan.

Lemma \ref{lem_poly} shows that $A$ is a polynomial time algorithm.

\begin{lemma}\label{lem_poly}
The size of $P$ is polynomial in $c$.
\end{lemma}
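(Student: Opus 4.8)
The plan is to bound $|P|$ by directly counting the admissible pairs $(l,\hat\sigma)$, discarding the cache-budget inequality (the last line of the definition of $P$) since it only removes elements from the set. The key observation is purely combinatorial: since $l=\langle l_1,\dots,l_b\rangle$ is required to be \emph{strictly} increasing, it is completely determined by its underlying set of values, which is a subset of the integer range $\{0,1,\dots,\lceil\log_{1+\epsilon}(c/\epsilon)\rceil+1\}$; likewise, since $\hat\sigma=\langle\hat\sigma_0,\hat\sigma_1,\dots,\hat\sigma_b\rangle$ is strictly increasing with $\hat\sigma_0=0$ fixed and every $\hat\sigma_j$ ($j\ge 1$) a power of $2$ that is at most $\hat\sigma_b\le c$, the sequence $\hat\sigma$ is determined by a subset of $\{2^0,2^1,\dots,2^{\lfloor\log_2 c\rfloor}\}$. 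Counting subsets of these two ground sets and multiplying the two counts gives an upper bound on $|P|$ (it over-counts pairs whose lengths $b$ disagree, which is harmless for an upper bound).

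First I would observe that, because $\epsilon$ is a fixed constant (as in Section~\ref{sec_36}), $\ln(1+\epsilon)$ is a positive constant, so the ground set for $l$ has size $\lceil\log_{1+\epsilon}(c/\epsilon)\rceil+2=\frac{\ln c+\ln(1/\epsilon)}{\ln(1+\epsilon)}+O(1)=O_\epsilon(\log c)$. Hence the number of its subsets, and therefore the number of admissible sequences $l$, is at most $2^{O_\epsilon(\log c)}=c^{O_\epsilon(1)}$, which is polynomial in $c$. Next, the ground set for $\hat\sigma$ has size $\lfloor\log_2 c\rfloor+1$, so there are at most $2^{\lfloor\log_2 c\rfloor+1}\le 2c$ choices for $\hat\sigma$. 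Combining the two bounds, $|P|\le c^{O_\epsilon(1)}\cdot 2c=c^{O_\epsilon(1)}$, which is polynomial in $c$, as claimed.

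There is no genuine obstacle here; the only point that needs to be stated carefully is that the bound is polynomial in $c$ \emph{for fixed $\epsilon$} and not uniformly in $\epsilon$: as $\epsilon\to 0$ the factor $1/\ln(1+\epsilon)\sim 1/\epsilon$ in the exponent of $c$ blows up. I would therefore make explicit in the write-up that $\epsilon$ is treated as a constant throughout this section, which is consistent with the statement of the $18$-approximation result it supports.
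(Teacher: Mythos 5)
Your argument is correct and is essentially identical to the paper's: both count the strictly increasing vectors $l$ and $\hat\sigma$ by identifying them with subsets of ground sets of size $O_\epsilon(\log c)$ and $O(\log c)$ respectively, and multiply the two bounds, using that $\epsilon$ is a constant. Your added remark about the non-uniformity in $\epsilon$ is a fair clarification but does not change the substance.
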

\begin{proof}
Let  $(l,\hat{\sigma})\in P$.
The vector $\hat{\sigma}$ is a strictly increasing vector of integral powers of $2$, where each power is at most $c$. Therefore the number of possible vectors for $\hat{\sigma}$ is bounded by the number of subsets of $\{2^0,\ldots,2^{\log_2(c)}\}$ which is $O(2^{\log_2{c}})=O(c)$.
The vector $l$ is a strictly increasing vector of integers, each integer is at most $\log_{1+\epsilon}({\frac{c}{\epsilon}}) + 1$.
Therefore the number of vectors $l$ is bounded by the number of subsets of integers that are at most $\log_{1+\epsilon}(\frac{c}{\epsilon}) + 1$ which is $O(2^{\log_{1+\epsilon}(\frac{c}{\epsilon})})=O(2^{\frac{\log_2(\frac{c}{\epsilon})}{\log_2(1+\epsilon)}})=Poly(c)$ since $\epsilon$ is a constant.
Therefore $|P|= O(c \, 2^{\frac{\log_2(\frac{c}{\epsilon})}{\log_2(1+\epsilon)}})$.
\end{proof}

Let $(p,S)$ be a cache partition and a job assignment that use $c$ cores, $K$ cache and have a makespan $M(p,S)$.
Define a cache partition $p_1$ such that for each core $i$, if $p(i)< \frac{K\epsilon}{c}$ then $p_1(i)=\frac{K\epsilon}{c}$ and if $p(i)\geq\frac{K\epsilon}{c}$ then $p_1(i)=p(i)$. For each core $i$, $p_1(i)\leq p(i) + \frac{k\epsilon}{c}$ and hence the total amount of cache allocated by $p_1$ is bounded by $(1+\epsilon)K$. For each core $i$, $p_1(i)\geq p(i)$ and therefore $M(p_1,S)\leq M(p,S)$.

Let $p_2$ be a cache partition such that for each core $i$, $p_2(i)=(1+\epsilon)^{\lceil \log_{1+\epsilon}(p_1(i))\rceil}$, the smallest integral power of $(1+\epsilon)$ that is at least $p_1(i)$.
For each $i$, $p_2(i)\geq p_1(i)$ and thus $M(p_2,S)\leq M(p_1,S) \leq M(p,S)$.  We increased the total cache allocated by at most a multiplicative factor of $(1+\epsilon)$ and therefore the total cache used by $p_2$ is at most $(1+\epsilon)^2 K \leq (1+\frac{5}{2}\epsilon)K$ since $\epsilon < \frac{1}{2}$.

Let $\varphi$ be any  cache partition that allocates  to each core
an integral power of $(1+\epsilon)$ cache. We define the notion of
$\textit{cache levels}$. We say that core $i$ is \textit{of cache
level $l$} in $\varphi$ if $\varphi(i)=
\frac{K'}{(1+\epsilon)^{l}}$. Let $c_l(\varphi)$ denote the number
of cores in cache level $l$ in $\varphi$.
The vector of $c_l$'s, which we call the \textit{cache levels
vector} of $\varphi$, defines the partition $\varphi$ completely
since any two partitions that have the same cache level vector are
identical up to a renaming of the cores.

Let $\sigma(\varphi)$ be the vector of prefix sums of the cache
levels vector of $\varphi$. Formally, $\sigma_l (\varphi) =
\sum\limits_{i=0}^{l} c_i(\varphi)$. Note that $\sigma_l(\varphi)$
is the number of cores in cache partition $\varphi$ with at least
$\frac{K'}{(1+\epsilon)^{l}}$ cache and that for each $l$,
$\sigma_l(\varphi) \leq c$.

For each such cache partition  $\varphi$, we define
the \textit{significant cache levels} $l_i(\varphi)$ recursively as
follows. The first significant cache level $l_1(\varphi)$ is the
first cache level $l$ such that $c_l(\varphi)>0$. Assume we already
defined the $i-1$ first significant cache levels and let
$l'=l_{i-1}(\varphi)$ then $l_{i}(\varphi)$ is the smallest cache level
$l > l'$ such that $\sigma_{l}(\varphi) \geq 2\sigma_{l'}(\varphi)$.

\begin{lemma}\label{consec}
Let $l_j$ and $l_{j+1}$ be two consecutive significant cache levels
of $\varphi$, then the total number of cores in cache levels in
between $l_j$ and $l_{j+1}$  is at most $\sigma_{l_j}(\varphi)$. Let
$l_b$ be the last significant cache level of $\varphi$ then the total
number of cores in cache levels larger than $l_b$ is at most
 $\sigma_{l_b}(\varphi)$.
\end{lemma}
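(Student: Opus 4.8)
The plan is to read the bound directly off the recursive definition of the significant cache levels, exploiting the minimality built into that definition. The only structural fact I will need beyond the definitions is that $\sigma(\varphi)$ is a vector of prefix sums of non-negative counts $c_l(\varphi)$, hence non-decreasing in the level index.

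For the first statement, fix two consecutive significant cache levels $l_j < l_{j+1}$. The number of cores lying in cache levels strictly between them is $\sigma_{l_{j+1}-1}(\varphi) - \sigma_{l_j}(\varphi)$ (and is $0$ when $l_{j+1} = l_j + 1$, in which case there is nothing to prove). By definition, $l_{j+1}$ is the \emph{smallest} level $l > l_j$ with $\sigma_l(\varphi) \ge 2\sigma_{l_j}(\varphi)$; therefore every level $l$ with $l_j < l < l_{j+1}$ satisfies $\sigma_l(\varphi) < 2\sigma_{l_j}(\varphi)$, and in particular $\sigma_{l_{j+1}-1}(\varphi) < 2\sigma_{l_j}(\varphi)$. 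Substituting gives $\sigma_{l_{j+1}-1}(\varphi) - \sigma_{l_j}(\varphi) < 2\sigma_{l_j}(\varphi) - \sigma_{l_j}(\varphi) = \sigma_{l_j}(\varphi)$, which is the claimed bound.

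For the second statement, let $l_b$ be the last significant cache level and let $L$ be the largest cache level used by $\varphi$, so that $\sigma_L(\varphi)$ equals the total number of cores. Since the recursion stopped at $l_b$, there is no level $l > l_b$ with $\sigma_l(\varphi) \ge 2\sigma_{l_b}(\varphi)$; hence $\sigma_l(\varphi) < 2\sigma_{l_b}(\varphi)$ for every $l > l_b$, in particular for $l = L$. The number of cores in cache levels larger than $l_b$ is $\sigma_L(\varphi) - \sigma_{l_b}(\varphi) < 2\sigma_{l_b}(\varphi) - \sigma_{l_b}(\varphi) = \sigma_{l_b}(\varphi)$, as required.

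There is no real obstacle here: in each case the argument is a one-line unwinding of the definition. The only points needing care are the degenerate case $l_{j+1} = l_j+1$ (empty range of intermediate levels) and adhering to the convention that ``in between'' means the levels $l$ with $l_j < l < l_{j+1}$ strictly — if $l_{j+1}$ itself were counted, the quantity $\sigma_{l_{j+1}}(\varphi) - \sigma_{l_j}(\varphi)$ could reach or exceed $\sigma_{l_j}(\varphi)$ and the statement would be false, so this reading is forced.
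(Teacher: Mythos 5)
Your proof is correct and is essentially the paper's argument: both rest on the minimality of $l_{j+1}$ (resp.\ the termination of the recursion at $l_b$) forcing $\sigma_{l_{j+1}-1}(\varphi) < 2\sigma_{l_j}(\varphi)$, the paper merely phrasing this as a proof by contradiction where you unwind the definition directly. Your version even yields the strict inequality and explicitly handles the degenerate case $l_{j+1}=l_j+1$, which the paper leaves implicit.
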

\begin{proof}
Assume to the contrary that  $\sum\limits_{f=l_j +1}^{l_{j+1} - 1}
c_f(\varphi) \geq \sigma_{l_j}(\varphi)$. This implies that for $l'=l_{j+1}-1$, $\sigma_{l'}(\varphi) \geq 2\sigma_{l_j}(\varphi)$ which contradicts the assumption that there are no significant cache
levels in between $l_j$ and $l_{j+1}$ in $\varphi$. The proof of the
second part of the lemma is analogous.
\end{proof}

Let $c_l=c_l(p_2)$. For each core $i$,  $ \frac{K\epsilon}{c} \leq
p_2(i)\leq K'$, so we get that if $l$ is a cache level in $p_2$ such
that $c_l\neq 0$ then $0 \leq l \leq
\log_{1+\epsilon}(\frac{c}{\epsilon}) + 1$. Let
$\sigma_l=\sigma_l(p_2)$ and $\sigma = <
\sigma_1,\ldots,\sigma_{b'}>$, where $b' =
\log_{1+\epsilon}(\frac{c}{\epsilon}) + 1$.
 Let $l_i = l_i(p_2)$, for $1\leq i \leq b$, where $b$ is the number of
significant cache levels in $p_2$.

 We adjust $p_2$ and $S$ to create a new cache
partition $p_3$ and a new job assignment $S_3$. Cache partition
$p_3$  has cores only in the significant cache levels $l_1,\ldots,l_b$
of $p_2$. We obtain $p_3$ from $p_2$ as follows. Let $f$ be a
non-significant cache level in $p_2$. If there is a $j$ such that
$l_{j-1} < f < l_j$ then we take the $c_f$ cores  in cache level $f$ in $p_2$
and reduce their cache so they are now in cache level $l_j$ in $p_3$. If $f
> l_b$ then we remove the $c_f$ cores at level $f$ from our solution.
It is easy to check that the significant cache levels of $p_3$ are
the same as of $p_2$, that is  $l_1,\ldots,l_b$. Since we only reduce the cache allocated to some cores, the new
cache partition $p_3$ uses no more cache than $p_2$ which is at most $(1+\frac{5}{2}\epsilon)K$.

We construct $S_3$ by changing the assignment of the jobs assigned
by $S$ to cores in non-significant cache levels in $p_2$. As before, let $f$
be a nonsignificant cache level and let $l_{j-1}$ be the maximal
significant cache level such that $l_{j-1}<f$. For each core $i$ in
cache level $f$ in $p_2$ we move all the jobs assigned by $S$ to
core $i$, to a target core in cache level $l_{j-1}$ in $p_3$. Lemma
\ref{move_jobs_lem} specifies the key property of this
job-reassignment.

\begin{lemma}\label{move_jobs_lem}
We can construct  $S_3$ such that each core  in a significant level
of $p_3$ is the target of the jobs from at most two cores in a
nonsignificant level of $p_2$.
\end{lemma}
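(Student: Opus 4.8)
The plan is to handle each significant cache level of $p_3$ in isolation: for a fixed significant level we count, on the one hand, how many nonsignificant-level cores of $p_2$ must send their jobs to it, and on the other hand how many cores $p_3$ has at that level to receive them, and then distribute the former among the latter by a simple round-robin. The two ingredients are Lemma~\ref{consec} (few cores lie strictly between consecutive significant levels) and the doubling property in the definition of significant cache levels ($\sigma_{l_{j-1}}(p_2)\ge 2\sigma_{l_{j-2}}(p_2)$).

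Fix a significant level $l_{j-1}$ of $p_2$ (equivalently of $p_3$). By the reassignment rule, the cores whose jobs get moved to level $l_{j-1}$ are exactly those lying in nonsignificant levels $f$ with $l_{j-1}<f<l_j$ when $j-1<b$, and those in levels $f>l_b$ when $l_{j-1}=l_b$. In both cases Lemma~\ref{consec} bounds the number of these ``source'' cores by $\sigma_{l_{j-1}}(p_2)$. Next count the ``target'' cores, i.e.\ the cores at level $l_{j-1}$ in $p_3$: these are the cores that were already at level $l_{j-1}$ in $p_2$ together with those that $p_3$ pushed down into level $l_{j-1}$ from the nonsignificant levels strictly between $l_{j-2}$ and $l_{j-1}$ (note a core pushed down can only arrive at $l_{j-1}$ from a level $<l_{j-1}$). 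Hence, setting $\sigma_{l_0}(p_2):=0$, the number of target cores at level $l_{j-1}$ is $\sigma_{l_{j-1}}(p_2)-\sigma_{l_{j-2}}(p_2)$, which by the doubling property is at least $\tfrac12\sigma_{l_{j-1}}(p_2)$.

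Combining the two estimates, at most $\sigma_{l_{j-1}}(p_2)$ source cores target a level that contains at least $\tfrac12\sigma_{l_{j-1}}(p_2)$ cores in $p_3$, so the number of sources is at most twice the number of targets. Distributing the source cores among the target cores greedily (or round-robin), each target core becomes the destination of the jobs of at most $\lceil\,\text{sources}/\text{targets}\,\rceil\le 2$ source cores; moving all jobs of a source core together onto its assigned target core, and leaving $S$ unchanged on all other jobs, defines $S_3$. Doing this independently for every significant level gives the claimed reassignment.

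I do not expect a real obstacle here once the counting is organized; the only delicate points are the boundary cases $j-1=1$ and $l_{j-1}=l_b$, which are absorbed by the convention $\sigma_{l_0}(p_2):=0$ and by the second statement of Lemma~\ref{consec}. So the bookkeeping of which cores move where — rather than any substantive difficulty — is the thing to get right.
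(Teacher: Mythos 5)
Your proof is correct and follows essentially the same route as the paper's: you bound the number of source cores per significant level by $\sigma_{l_{j-1}}(p_2)$ via Lemma~\ref{consec}, count the targets at that level in $p_3$ as $\sigma_{l_{j-1}}(p_2)-\sigma_{l_{j-2}}(p_2)\ge\tfrac12\sigma_{l_{j-1}}(p_2)$ using the doubling property, and conclude a ratio of at most $2$ — exactly the paper's computation, just phrased as two separate estimates rather than a single ratio. The boundary cases (first significant level and levels above $l_b$) are handled the same way the paper handles them.
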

\begin{proof}
Let $c^3$ denote the cache levels vector of $p_3$ and let
$\sigma^3$ denote the vector of prefix sums of $c^3$.
From the definition of  $p_3$ follows that for all $j$,
$\sigma^3_{l_j} = \sigma_{l_j}$, and that for $j > 1$,
$c^3_{l_j}= \sigma^3_{l_j}-\sigma^3_{l_{j-1}} =
\sigma_{l_j}-\sigma_{l_{j-1}}$.

By Lemma \ref{consec} the number of cores in nonsignificant levels
in $p_2$ whose jobs are reassigned to one of the $c^3_{l_j}$
cores in level $l_j$ in $p_3$ is at most $\sigma_{l_j}$. So for $j >
1$ the ratio between the number of cores whose jobs are reassigned
to the number of target cores in level $l_j$ in $p_3$ is at most
$\frac{\sigma_{l_j}}{\sigma_{l_j}-\sigma_{l_{j-1}}} = 1 +
\frac{\sigma_{l_{j-1}}} {\sigma_{l_j}-\sigma_{l_{j-1}}} \leq 2$. For
$j=1$ the number of target cores in level $l_1$ of $p_3$ is
$c^3_{l_1}=\sigma_{l_1}$ which is at least as large as the
number of cores at nonsignificant levels between $l_1$ and $l_2$ in
$p_2$ so we can reassign the jobs of a single core of a
nonsignificant level between $l_1$ and $l_2$ in $p_2$ to each target
core.
\end{proof}

\begin{corollary}\label{crl_first_3}
$M(p_3,S_3)\leq 3M(p,S)$
\end{corollary}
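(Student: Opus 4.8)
The goal is to bound $M(p_3,S_3)$ in terms of $M(p,S)$ by tracking how much the makespan can grow through the chain of modifications $p \to p_1 \to p_2$ (with the job assignment held fixed at $S$) and then $p_2,S \to p_3,S_3$. The plan is as follows.

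First, recall what the earlier paragraphs already established: the partitions $p_1$ and $p_2$ only \emph{increase} the cache allocated to every core relative to $p$ (more precisely $p_2(i) \ge p_1(i) \ge p(i)$ for all $i$), and since every load function $T_j(\cdot)$ is non-increasing, this gives $M(p_2,S) \le M(p_1,S) \le M(p,S)$. So after the first two steps the makespan has not grown at all; the only thing that has changed is that $p_2$'s cache sizes are all integral powers of $(1+\epsilon)$ lying in $[\tfrac{K\epsilon}{c},K']$, and the total cache is at most $(1+\tfrac{5}{2}\epsilon)K$. Thus it remains to show $M(p_3,S_3) \le 3\,M(p_2,S)$, which combined with the above yields the corollary.

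Next I would analyze the transition from $(p_2,S)$ to $(p_3,S_3)$. Here two things happen simultaneously to the jobs originally on a core $i$ sitting at a non-significant cache level $f$ of $p_2$: (a) those jobs get moved to a target core $i'$ at the largest significant level $l_{j-1}$ strictly below $f$, so they are now run on a core with \emph{less} cache, namely $\tfrac{K'}{(1+\epsilon)^{l_{j-1}}} > \tfrac{K'}{(1+\epsilon)^f}$ — wait, this is more cache, not less. Let me be careful with the level indexing convention: level $l$ means cache $K'/(1+\epsilon)^l$, so smaller $l$ means \emph{more} cache, and $l_{j-1} < f$ means the target core actually has \emph{at least as much} cache as core $i$ did. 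So moving a job from level $f$ down to the significant level $l_{j-1}$ below it does not increase any individual job's running time. (The cores whose jobs are \emph{not} reassigned — those at significant levels, and those reduced from level $f$ to $l_j$ in $p_3$ — similarly only ever get their cache reduced to a value that is still a valid significant level, and their job set is only augmented by incoming jobs.) The real cost is purely a \emph{congestion} cost: by Lemma~\ref{move_jobs_lem}, each target core at a significant level of $p_3$ receives the jobs from at most two non-significant cores of $p_2$, on top of whatever jobs $S$ already assigned to that core's significant level in $p_2$. Hence the load on any core in $S_3$ is at most the sum of (its own original load in $(p_2,S)$) plus (the loads of at most two other cores in $(p_2,S)$), each of which is at most $M(p_2,S)$; and since each job's running time only went down or stayed the same when its host core's cache went up or stayed the same, every one of these three contributing loads is bounded by $M(p_2,S)$. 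Therefore $M(p_3,S_3) \le 3\,M(p_2,S) \le 3\,M(p,S)$.

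The step I expect to require the most care is the bookkeeping of which cores feed into which target core and verifying that the cache allocations line up so that no job's running time ever increases during the reassignment — i.e.\ that a job moved off a non-significant level $f$ always lands on a core whose $p_3$-cache is at least its $p_2$-cache, and that a core whose cache is reduced from level $f$ to level $l_j$ in $p_3$ still hosts only jobs that were already tolerable there. This is where the direction of the level indexing and the definition of significant levels (the fact that $l_{j-1}$ is the \emph{largest} significant level below $f$, so the target has the most cache among significant levels $\le f$) must be invoked precisely, together with Lemma~\ref{consec} to guarantee there are enough target cores and Lemma~\ref{move_jobs_lem} to cap the multiplicity at two, giving the factor $1 + 2 = 3$.
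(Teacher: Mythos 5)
Your proof is correct and follows essentially the same route as the paper: the makespan does not grow through $p_1$ and $p_2$, and Lemma \ref{move_jobs_lem} caps each target core at two incoming non-significant-level source cores whose jobs only gain cache, giving the factor $1+2=3$. One small slip in your parenthetical: the cores demoted from a non-significant level $f$ to level $l_j$ do \emph{not} retain their original jobs (those jobs are moved to level $l_{j-1}$, exactly as your main argument correctly states); if they did retain them while losing cache, those jobs' running times could increase and the bound would fail.
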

\begin{proof}
In the new cache partition $p_3$ and job assignment $S_3$ we have
added to each core at a significant level in $p_3$ the jobs from at
most $2$ other cores at nonsignificant levels in $p_2$. The target
core always has more cache than the original core, thus the added
load from each original core is at most $M(p_2,S)$. It follows that
$M(p_3,S_3) \leq 3M(p_2,S)\leq3M(p,S)$.
\end{proof}

Let $c^3$ denote the cache levels vector of $p_3$ and let
$\sigma^3$ denote the vector of prefix sums of $c^3$. We now define another
cache partition $\hat{p}$ based on $p_3$. Let $u_j = \lfloor
\log_2(\sigma^3_{l_j})\rfloor$. The partition $\hat{p}$ has
$2^{u_1}$ cores in cache level $l_1$, and $2^{u_j}-2^{u_{j-1}}$
cores in cache level $l_j$ for $1<j\le b$. The cache levels
$l_1,\ldots,l_b$ are the significant cache levels of $\hat{p}$ and
$\hat{p}$ has cores only in its significant cache levels. Let
$\hat{c}_{l_j}$ denote the number of cores   in the significant cache
level $l_j$ in  $\hat{p}$.

\begin{lemma} \label{pppp}
 3$\hat{c}_{l_j} \geq c^3_{l_j}$
\end{lemma}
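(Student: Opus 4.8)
The plan is to unwind the definitions of $c^3_{l_j}$ and $\hat c_{l_j}$ and compare them using the two estimates $2^{u_j}\le \sigma^3_{l_j} < 2^{u_j+1}$, which hold because $u_j=\lfloor\log_2(\sigma^3_{l_j})\rfloor$. Recall from the discussion preceding the lemma (and from the proof of Lemma~\ref{move_jobs_lem}) that $\sigma^3_{l_j}=\sigma_{l_j}$, that $c^3_{l_1}=\sigma_{l_1}$, and that $c^3_{l_j}=\sigma_{l_j}-\sigma_{l_{j-1}}$ for $j>1$; on the other side $\hat c_{l_1}=2^{u_1}$ and $\hat c_{l_j}=2^{u_j}-2^{u_{j-1}}$ for $j>1$. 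For $j=1$ the claim is immediate, since $c^3_{l_1}=\sigma^3_{l_1}<2^{u_1+1}=2\cdot 2^{u_1}\le 3\cdot 2^{u_1}=3\hat c_{l_1}$.

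For $j>1$ the key point is that consecutive exponents strictly increase, i.e.\ $u_{j-1}<u_j$. This holds because $l_{j-1}$ and $l_j$ are consecutive significant cache levels, so by definition $\sigma^3_{l_j}=\sigma_{l_j}\ge 2\sigma_{l_{j-1}}=2\sigma^3_{l_{j-1}}$, and applying the monotone map $x\mapsto\lfloor\log_2 x\rfloor$ together with $\lfloor\log_2(2x)\rfloor=1+\lfloor\log_2 x\rfloor$ for positive integers gives $u_j\ge 1+u_{j-1}$. In particular $2^{u_{j-1}+1}\le 2^{u_j}$, so $\hat c_{l_j}=2^{u_j}-2^{u_{j-1}}>0$. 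Now I would bound $c^3_{l_j}=\sigma^3_{l_j}-\sigma^3_{l_{j-1}}<2^{u_j+1}-2^{u_{j-1}}$ (using the upper estimate on $\sigma^3_{l_j}$ and the lower estimate on $\sigma^3_{l_{j-1}}$), and observe that $2^{u_j+1}-2^{u_{j-1}}\le 3\bigl(2^{u_j}-2^{u_{j-1}}\bigr)$ is equivalent to $2^{u_{j-1}+1}\le 2^{u_j}$, which was just established. Chaining these yields $c^3_{l_j}<3\hat c_{l_j}$, as required.

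I do not foresee a genuine obstacle: the only mildly delicate step is passing from $\sigma^3_{l_j}\ge 2\sigma^3_{l_{j-1}}$ to $u_j\ge u_{j-1}+1$, and that is just monotonicity of $\lfloor\log_2(\cdot)\rfloor$. Everything else is routine algebra with powers of two, and the slack factor $3$ (rather than the tighter $2$ available for $j=1$) is precisely what absorbs both the rounding loss in approximating $\sigma^3_{l_j}$ by $2^{u_j}$ and the subtraction of the previous prefix sum $2^{u_{j-1}}$.
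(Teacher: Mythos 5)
Your proof is correct and follows essentially the same route as the paper: both bound $c^3_{l_j}=\sigma^3_{l_j}-\sigma^3_{l_{j-1}}$ from above by $2^{u_j+1}-2^{u_{j-1}}$ and then exploit $u_j\ge u_{j-1}+1$ (a consequence of $\sigma_{l_j}\ge 2\sigma_{l_{j-1}}$ for consecutive significant levels) to compare with $3\hat c_{l_j}=3\bigl(2^{u_j}-2^{u_{j-1}}\bigr)$. The paper phrases the final step as minimizing the ratio $\frac{2^{d}-1}{2^{d+1}-1}$ over $d=u_j-u_{j-1}\ge 1$, while you verify the equivalent algebraic inequality directly; this is only a presentational difference.
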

\begin{proof}
By the definition of $u_j$, we have that $2^{u_j} \leq
\sigma^3_{l_j} < 2^{u_j + 1}$. So for $j>1$
\begin{equation}\label{ratio_eq}
\frac{\hat{c}_{l_j}}{c^3_{l_j}} = \frac{2^{u_j} -
2^{u_{j-1}}}{\sigma^3_{l_j} - \sigma^3_{l_{j-1}}} >
\frac{2^{u_j} - 2^{u_{j-1}}}{2^{u_j + 1} - 2^{u_{j-1}}} =
\frac{2^{u_j - u_{j-1}} - 1}{2^{u_j - u_{j-1} + 1} - 1} \
\end{equation}
Since $l_j$ and $l_{j-1}$ are two consecutive significant cache
levels we have that $u_j - u_{j-1} \geq 1$. The ratio in
\ref{ratio_eq} is an increasing function of $u_j - u_{j-1}$ and thus
 minimized by $u_j - u_{j-1} = 1$, yielding a lower bound of
$\frac{1}{3}$. For $j=1$,
$\frac{\hat{c}_{l_1}}{c^3_{l_1}} =
\frac{2^{u_1}}{\sigma^3_{l_1}} > \frac{2^{u_1}}{2^{u_1 + 1}} =
\frac{1}{2}$.
\end{proof}

Lemma \ref{pppp} shows that the cache partition  $\hat{p}$ has in
each cache level $l_j$ at least a third of the cores that $p_3$ has at
level $l_j$. Therefore, there exists a job assignment $\hat{S}$ that
assigns to each core of cache level $l_j$ in $\hat{p}$  the jobs that
$S_3$ assigns to at most $3$ cores in cache level $l_j$ in $p_3$. We
only moved jobs within the same cache level and thus their load
remains the same, and the makespan $M(\hat{p},\hat{s}) \leq
3M(p_3,S_3) \leq 9M(p,s)$.

\begin{lemma}\label{lem_hatinp}
Cache partition $\hat{p}$ is in the  set $P(K,c,\epsilon)$.
\end{lemma}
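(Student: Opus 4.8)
The plan is to verify that the pair $(l = \langle l_1,\ldots,l_b\rangle,\ \hat\sigma = \langle\hat\sigma_0,\hat\sigma_1,\ldots,\hat\sigma_b\rangle)$ associated with $\hat p$ satisfies each of the four defining constraints of $P(K,c,\epsilon)$ in turn. Here the $l_j$ are the significant cache levels of $p_3$ (equivalently of $\hat p$), and I will set $\hat\sigma_0 = 0$ and $\hat\sigma_j = 2^{u_j}$ with $u_j = \lfloor\log_2(\sigma^3_{l_j})\rfloor$, so that $\hat\sigma_j$ is exactly the prefix sum $\hat\sigma_j = \hat c_{l_1} + \cdots + \hat c_{l_j} = 2^{u_j}$ of the cache-levels vector of $\hat p$ (this telescopes by the definition $\hat c_{l_j} = 2^{u_j}-2^{u_{j-1}}$).

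First I would check the constraints on $l$: each $l_j\in\mathbb N$ and the $l_j$ are strictly increasing because they are significant cache levels of $p_3$, and every cache level $l$ with $c_l(p_3)\neq 0$ satisfies $0\le l\le \log_{1+\epsilon}(c/\epsilon)+1$, since $p_3$ only reduces cache relative to $p_2$ and every core of $p_2$ has cache in $[K\epsilon/c, K']$ (this was established just before Lemma \ref{consec}); hence the same bound holds for the $l_j$. Next, $b\le\log_2 c$: since $\sigma^3_{l_1}\ge 1$ and each significant level at least doubles the prefix sum, $\sigma^3_{l_b}\ge 2^{b-1}$, and $\sigma^3_{l_b}\le c$ gives $b-1\le\log_2 c$ — I may need a slightly more careful count, but the geometric growth of the prefix sums against the ceiling $c$ is exactly what bounds $b$. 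For the $\hat\sigma$ constraints: $\hat\sigma_j = 2^{u_j}$ is a power of $2$ by construction, $\hat\sigma_0 = 0$, the sequence is strictly increasing because $u_j - u_{j-1}\ge 1$ (shown inside the proof of Lemma \ref{pppp}), and $\hat\sigma_b = 2^{u_b}\le\sigma^3_{l_b}\le c$.

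The last and only substantive constraint is the cache budget $\sum_{j=1}^b(\hat\sigma_j - \hat\sigma_{j-1})\,\frac{K'}{(1+\epsilon)^{l_j}}\le(1+\tfrac52\epsilon)K$. The key point is that $\hat\sigma_j - \hat\sigma_{j-1} = \hat c_{l_j}\le\sigma^3_{l_j} - \sigma^3_{l_{j-1}} = c^3_{l_j}$ for $j>1$ (since $2^{u_j}\le\sigma^3_{l_j}$ and $2^{u_{j-1}}\ge$ ... actually one uses $\hat c_{l_j} = 2^{u_j}-2^{u_{j-1}}\le\sigma^3_{l_j}-\sigma^3_{l_{j-1}}$, which follows from $2^{u_j}\le\sigma^3_{l_j}$ together with $2^{u_{j-1}}>\tfrac12\sigma^3_{l_{j-1}}\ge\tfrac12\sigma^3_{l_j}-\ldots$ — I will need to be careful here, but the inequality $\hat c_{l_j}\le c^3_{l_j}$ is exactly the "fewer cores" direction complementing Lemma \ref{pppp}), and $\hat c_{l_1} = 2^{u_1}\le\sigma^3_{l_1} = c^3_{l_1}$. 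Since $\hat p$ and $p_3$ use the same set of cache levels and $\hat p$ has no more cores at each level, $\hat p$ allocates at most as much total cache as $p_3$, which we already argued is at most $(1+\tfrac52\epsilon)K$. I expect this monotonicity step — nailing down $\hat c_{l_j}\le c^3_{l_j}$ cleanly for every $j$, including the boundary case $j=1$ and the telescoping of the $2^{u_j}$ — to be the main obstacle, though it is essentially the mirror image of the computation already done in Lemma \ref{pppp}.
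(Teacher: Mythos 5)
Your handling of properties 1--3 is fine (the paper dismisses these with ``clearly''), but the route you propose for the only substantive constraint --- the cache budget --- has a genuine gap: the per-level inequality $\hat{c}_{l_j}\leq c^3_{l_j}$ that you want to establish is \emph{false} in general. For $j>1$ you would need $2^{u_j}-2^{u_{j-1}}\leq \sigma^3_{l_j}-\sigma^3_{l_{j-1}}$, i.e.\ that the rounding loss $\sigma^3_{l_{j-1}}-2^{u_{j-1}}$ at level $j-1$ is at most the rounding loss $\sigma^3_{l_j}-2^{u_j}$ at level $j$; but rounding down to a power of $2$ gives no such monotonicity. Concretely, take $\sigma^3_{l_1}=3$ and $\sigma^3_{l_2}=8$ (a legal pair of consecutive significant levels since $8\geq 2\cdot 3$): then $u_1=1$, $u_2=3$, so $\hat{c}_{l_2}=8-2=6$ while $c^3_{l_2}=8-3=5$. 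So $\hat{p}$ can have strictly \emph{more} cores than $p_3$ at some level, and your ``mirror image of Lemma \ref{pppp}'' step collapses.

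The statement you actually have, and all you need, is the prefix-sum domination $\hat{\sigma}_{l_j}=2^{u_j}\leq\sigma^3_{l_j}$ for every $j$. Combined with the fact that the per-core cache $\frac{K'}{(1+\epsilon)^{l_j}}$ is decreasing in $j$, this already bounds the total cache: for each $x$, the core with the $x$th largest cache in $\hat{p}$ sits at some level $l_j$ with $\hat{\sigma}_{l_j}\geq x$, hence $\sigma^3_{l_j}\geq x$, hence the $x$th largest core of $p_3$ is at level $l_j$ or lower and has at least as much cache. Summing over $x$ gives that $\hat{p}$ uses at most as much cache as $p_3$, which is at most $\left(1+\frac{5}{2}\epsilon\right)K$. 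This is exactly the paper's argument (a majorization of the sorted cache vectors rather than a level-by-level comparison); in the example above it works because $\hat{p}$ trades a core at the richer level $l_1$ for an extra core at the poorer level $l_2$, which only decreases total cache.
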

\begin{proof}
Let $\hat{\sigma}$  be the vector of prefix sums of $\hat{c}$. The
vectors
$<l_1,\ldots,l_b>,<\hat{\sigma}_{l_1},\ldots,\hat{\sigma}_{l_b}>$
clearly satisfy properties 1-3 in the definition of
$P(K,c,\epsilon)$. It remains to show that $\hat{p}$ uses at most
$(1+\frac{5}{2}\epsilon)K$ cache (property 4).

Consider the core  with the $x$th largest cache in $\hat{p}$. Let
$l_j$ be the cache level of this core. Thus $\hat{\sigma}_{l_j} \geq
x$. Since  $\hat{\sigma}_{l_j}$ is the result of rounding down
$\sigma^3_{l_j}$ to the nearest integral power of $2$, we have
that $\hat{\sigma}_{l_j} \leq \sigma^3_{l_j}$. It follows that
$\sigma^3_{l_j} \geq x$ and therefore the core with the $x$th
largest cache in $p_3$ is in cache level $l_j$ or smaller and thus
is it has at least as much cache as the $x$th largest core in
$\hat{p}$. So $\hat{p}$ uses at most the same amount of cache as
$p_3$ which is at most $(1+\frac{5}{2}\epsilon)K$.
\end{proof}

This concludes the proof of Theorem \ref{P9thm}, and establishes
that our algorithm $A$ is an 18-approximation algorithm for the
problem, using $(1+\frac{5}{2}\epsilon)K$ cache.

We provide a variation of algorithm $A$ that uses at most $K$ cache, and finds a $36$-approximation for  the optimal makespan.
Algorithm $B$ enumerates on $r$, $1\leq r \leq K$, the amount of cache allocated to the first core.
It then enumerates over the set of partitions $P=P(\frac{K-r}{2},\lceil \frac{c}{2}\rceil - 1,\frac{2}{5})$.
For each partition in $P$ it adds another core with $r$ cache and applies Lenstra's approximation algorithm on the resulting instance of the unrelated machines scheduling problem, to assign all the jobs in $J$ to the $\lceil \frac{c}{2}\rceil$ cores. Algorithm $B$ returns the partition and assignment with the minimal makespan it encounters.

\begin{theorem}\label{thm_36final}
If there is a solution of makespan $M$ that uses at most $K$ cache and at most $c$ cores then algorithm $B$ returns a solution of makespan $36M$ that uses at most $K$ cache and at most $c$ cores.
\end{theorem}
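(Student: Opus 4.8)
The plan is to show that, for a suitable choice of the reserved cache amount $r$ and of a partition in $P$, algorithm $B$ examines a partition on which the \emph{optimal} makespan is at most $18M$; running Lenstra's $2$-approximation \cite{LST90} on it then yields makespan at most $36M$. Two things must be established: that every solution $B$ can return obeys the resource bounds, and that the makespan of the one it does return is at most $36M$. The resource bounds are immediate from the construction of $B$: its solution uses the $\lceil c/2\rceil-1$ cores of a partition in $P(\frac{K-r}{2},\lceil c/2\rceil-1,\frac25)$ plus one extra core of cache $r$, for a total of $\lceil c/2\rceil\le c$ cores; and since $\epsilon=\frac25$ makes $1+\frac52\epsilon=2$, the last condition in the definition of $P$ forces a partition in $P(\frac{K-r}{2},\cdot,\frac25)$ to use at most $2\cdot\frac{K-r}{2}=K-r$ cache, so with the extra core the total is at most $K$, for every enumerated $r$.

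For the makespan, let $(p,S)$ be an optimal solution of makespan $M$ on $c$ cores and $K$ cache, with the cores indexed so that $p(1)\ge p(2)\ge\cdots\ge p(c)$; I would assume (as is standard, up to rounding) that cache sizes are positive integers so that $B$ enumerates the value $r=p(1)$, handling separately the degenerate cases with very few cores or where a single core holds nearly all the cache. I would first build an intermediate solution $(q,S_q)$ on $\lceil c/2\rceil-1$ cores. Reserve one special core of cache $p(1)$ that receives every job $S$ places on original cores $1$ and $2$; since the $T_j$ are non-increasing and $p(1)\ge p(2)$, this core has load at most $2M$. Split the remaining original cores $3,\dots,c$ into consecutive groups of size at most two, $\{3,4\},\{5,6\},\dots$ (a singleton $\{c\}$ if $c$ is odd) — there are exactly $\lceil c/2\rceil-1$ of them — and for each group create one core of $q$ whose cache equals the larger cache in the group and which receives every job $S$ places on cores of that group; again by monotonicity each such core has load at most $2M$. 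So every job other than those reserved for the special core is assigned, with makespan at most $2M$.

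To finish, the key cache accounting is that the cache of each group's core is the larger of at most two consecutive $p(i)$'s, hence at most their average, and since each index in $\{2,\dots,c\}$ appears in at most one of these averaging bounds, the total cache of $q$ is at most $\tfrac12\sum_{i=2}^{c}p(i)\le\tfrac{K-p(1)}{2}=\tfrac{K-r}{2}$, while $q$ has exactly $\lceil c/2\rceil-1$ cores. I then apply Theorem~\ref{P9thm} with parameters $\big(\tfrac{K-r}{2},\,\lceil c/2\rceil-1,\,\tfrac25\big)$ to $q$ and its assignment, obtaining $\hat q\in P(\tfrac{K-r}{2},\lceil c/2\rceil-1,\tfrac25)$ and an assignment of the same jobs with makespan at most $9\cdot 2M=18M$. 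Adjoining to $\hat q$ one extra core of cache $r=p(1)$ and sending to it the reserved jobs (load at most $2M$) produces a partition using at most $K$ cache and at most $c$ cores that admits an assignment of \emph{all} jobs of makespan at most $\max\{2M,18M\}=18M$. This partition is precisely one of those $B$ enumerates at $r=p(1)$, so Lenstra's $2$-approximation applied to it gives makespan at most $36M$, and hence the solution $B$ returns has makespan at most $36M$.

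The main obstacle is the construction and counting behind $q$: one must route jobs so that a single reserved core (carrying the optimal's largest cache) absorbs the work of two of the optimal's cores, while all remaining cores are merged in consecutive pairs, and then verify that the result matches the parameters fed to algorithm $A$ \emph{exactly} — $\lceil c/2\rceil-1$ cores and cache budget $\tfrac{K-r}{2}$ — via the telescoping of the ``max $\le$ average'' inequalities over those pairs. Once that is set up, the factor $36$ is just $2$ (pairwise merge) $\times$ $9$ (Theorem~\ref{P9thm}) $\times$ $2$ (Lenstra), and the integrality of $r$ together with the small-$c$ edge cases are routine to dispatch.
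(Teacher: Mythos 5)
Your proposal is correct and follows essentially the same route as the paper's proof: reserve $r=p(1)$ for one core that absorbs the optimal's cores $1$ and $2$, merge the remaining cores in consecutive pairs to get a $2M$-makespan solution on $\lceil c/2\rceil-1$ cores with $\frac{K-r}{2}$ cache, invoke Theorem~\ref{P9thm} with $\epsilon=\frac25$ for the factor $9$, and finish with Lenstra's factor $2$. One phrasing slip: the larger of two numbers is not ``at most their average''; what you actually use (and what the paper writes as $p(i)+p(i-1)\ge 2p(i)$) is that the cache $p(2k+1)$ of the group $\{2k+1,2k+2\}$ is at most the average of $p(2k)$ and $p(2k+1)$, which is exactly what your index bookkeeping over $\{2,\dots,c\}$ already reflects.
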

\begin{proof}
Let $(p,S)$ be the a solution of makespan $M$, $K$ cache and $c$ cores. W.l.o.g. assume that the cores
are indexed according to the non-increasing order of their cache allocation in this solution, that is $p(i+1)>p(i)$.

Let $J'=\{j\in J \mid S(j)\ge 3\}$.
Consider the following job assignment $S'$ of the jobs in $J'$ to the cores of odd indices greater than $1$ in $(p,S)$/
The assignment $S'$ assigns to core $2i-1$, for $i\ge 2$, all the jobs that are assigned by $S$ to cores
$2i-1$ and $2i$.
Note that all the jobs
assigned by $S'$ to some core are assigned by $S$ to a core with at most the same amount of cache and thus
the makespan of $S'$ is at most $2M$.

Assume $r=p(1)$. Then $K = r + \sum\limits_{odd\,i\geq 3} p(i)+p(i-1) \geq r + \sum\limits_{odd\,i\geq 3} 2p(i)$ since $p$ is non-increasing.
Therefore  we get that $\sum\limits_{odd\,i\geq 3} p(i) \leq \frac{K-r}{2}$.
Therefore we can assign the jobs in $J'$ to $\lceil \frac{c}{2}\rceil - 1$ cores with a total cache of $\frac{K-r}{2}$, such that the  makespan is at most $2M$.
By Theorem \ref{P9thm}, there is a partition $\hat{p}' \in P(\frac{K-r}{2},\lceil \frac{c}{2}\rceil - 1,\frac{2}{5})$ that allocates at most $(1+\frac{5}{2}\frac{2}{5})\frac{K-r}{2}=K-r$ cache to $\lceil \frac{c}{2}\rceil - 1$ cores, and a job assignment $\hat{S}'$ of the jobs in $J'$ to these cores such that the makespan of $\hat{p}',\hat{S}'$ is at most $18M$.

Let $\hat{p}$ be a cache partition that adds to $\hat{p}'$  another core (called ``core 1'') with $r$ cache.
The total cache used by $\hat{p}$ is at most $K$.
Let $\hat{S}$ be a job assignment such that $\hat{S}(j) = \hat{S}'(j)$ for $j\in J'$ and
for a job $j \in J \setminus J'$ (a job that was assigned by $S$ either to core $1$ or to core $2$), $\hat{S}(j) = 1$.
Since the makespan of $(p,S)$ is $M$ we know that the load on core $1$ in the solution $\hat{p}, \hat{S}$ is at most $2M$. It follows that the makespan
of $\hat{p}, \hat{S}$ is at most $18M$.

When algorithm $B$ fixes the size of the cache of the first core to be $r=p(1)$, and
considers $\hat{p}' \in P(\frac{K-r}{2},\lceil \frac{c}{2}\rceil - 1,\frac{2}{5})$ then it obtains the cache partition
$\hat{p}$. We  know that $\hat{S}$ is a solution to the corresponding scheduling problem with makespan at most $18M$.
Therefore Lenstra's approximation algorithm finds an assignment with makespan at most $36M$.
\end{proof}

\section{Jobs with a single load and a minimal cache demand}\label{sec_slmc}

We consider a special case of the general joint cache partition and
job assignment  problem where each job has a minimal cache demand
$x_j$ and single load value $a_j$. Job $j$ must run on a core with
at least $x_j$ cache and it contributes a load of $a_j$ to the core.
We want to decide if the jobs can be assigned to  $c$ cores, using $K$ cache, such that the
makespan is at most $m$? W.l.o.g. we assume $m=1$.

In Section \ref{slmc_2} we describe a $2$-\textit{approximate decision} algorithm that if the given instance has a solution of makespan at most $1$, returns a solution with makespan at most $2$ and otherwise may fail. In Sections \ref{slmc_32} and \ref{slmc_43}  we improve the approximation guarantee to $\frac{3}{2}$ and $\frac{4}{3}$ at the expense of using $2K$ and $3K$ cache, respectively.
In Section \ref{apn_approx_des_to_opt} we show how to obtain an approximate optimization algorithm using an approximate decision algorithm and a standard binary search technique.

\subsection{2-approximation}\label{slmc_2}
We present a 2-approximate decision algorithm, denoted by $A_2$.
Algorithm $A_2$ sorts the jobs in a non-increasing order of their cache
demand. It then assigns the jobs to the cores in this order. It
keeps assigning jobs to a core until the load on the core exceeds
$1$. Then, $A_2$ starts assigning jobs to the next core.
Note that among the jobs assigned to a specific core the first one is the
most cache demanding and it determines the cache allocated to this core by $A_2$.
Algorithm $A_2$ fails if the generated solution uses more than $c$ cores or more than $K$ cache.
Otherwise, $A_2$ returns the generated cache partition and job assignment.

\begin{theorem}\label{joint2}
If there is a cache partition and job assignemtn of makespan at most $1$ that use $c$ cores and $K$ cache then algorithm $A_2$ finds a cache partition
and job assignment of makespan at most $2$ that use at most $c$ cores and at most $K$ cache.
\end{theorem}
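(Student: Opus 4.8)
The plan is to analyze algorithm $A_2$ by comparing the cores it opens with the structure of a hypothetical optimal solution $(p^*, S^*)$ of makespan at most $1$. I would argue two things separately: (i) the total cache used by $A_2$ is at most $K$, and (ii) the number of cores opened by $A_2$ is at most $c$. The makespan bound of $2$ is essentially immediate from the greedy rule: a core is closed only after its load exceeds $1$, and the job that pushed it over $1$ has load at most $1$ (since a single job of load $>1$ could not be in any makespan-$1$ solution), so each core carries load at most $2$; the one exception is the last core opened, which carries load at most $1$.

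For the cache bound (i), the key observation is that $A_2$ processes jobs in non-increasing order of cache demand, and the cache it assigns to a core equals the demand $x_j$ of the \emph{first} job placed on that core. So if $A_2$ opens cores with first-jobs $j_1, j_2, \ldots$ (in the order opened), then $x_{j_1} \ge x_{j_2} \ge \cdots$, and moreover \emph{every} job placed on core $t$ or later has demand at most $x_{j_t}$. I would then show that in the optimal solution, the jobs $\{j_1, \ldots, j_t\}$ (the first-jobs of the first $t$ cores opened by $A_2$) must lie on $t$ distinct cores of $p^*$ — because by the time $A_2$ opened core $t$, the cumulative load of all jobs with demand $\ge x_{j_t}$ processed so far exceeded $t-1$ (each of the $t-1$ earlier cores reached load $>1$), and all these jobs have demand $\ge x_{j_t}$, so they need $t$ cores each with cache $\ge x_{j_t}$ in any makespan-$1$ solution. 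Hence $p^*$ has at least $t$ cores with cache $\ge x_{j_t}$, i.e. its $t$ largest cache values are each $\ge x_{j_t}$; summing over the cores $A_2$ opens and matching the $t$-th largest cache of $A_2$'s partition ($= x_{j_t}$) against the $t$-th largest of $p^*$ gives that $A_2$'s total cache is at most $\sum p^*(i) = K$. The same counting argument, applied with $t$ equal to the total number of cores $A_2$ opens, shows that number is at most $c$, giving (ii).

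The main obstacle I anticipate is making the "$t$ distinct cores in the optimal solution" step fully rigorous: one must be careful that the set of jobs forcing $t$ cores really has total load strictly exceeding $t-1$ \emph{and} uniform demand lower bound $x_{j_t}$, and then invoke a clean pigeonhole/packing fact — if a set of jobs each of demand $\ge x$ and each of load $\le 1$ has total load $> t-1$, then any makespan-$1$ assignment uses at least $t$ cores of cache $\ge x$. A subtlety is the boundary job that opens core $t$: it has demand $\ge x_{j_t}$ but I should be careful whether to count it on the "already exceeded $t-1$" side or not; taking the load strictly \emph{before} placing $j_t$ (which is $> t-1$ across cores $1,\ldots,t-1$) together with $j_t$ itself sidesteps this. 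Once this packing lemma is in place, the dominance of $p^*$ over $A_2$'s partition in the sorted (largest-to-smallest) order follows, and both the cache bound and the core-count bound drop out by the standard fact that if $a_1 \ge \cdots \ge a_k$ and $b_1 \ge \cdots \ge b_k$ with $a_i \le b_i$ for all $i$ then $\sum a_i \le \sum b_i$.
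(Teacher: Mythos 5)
Your proposal is correct and takes essentially the same approach as the paper: the makespan and core-count bounds follow from the greedy loading rule, and the cache bound comes from comparing the sorted cache values of $A_2$'s partition against the optimal one via the observation that the jobs of demand at least $z(t)$ carry total load exceeding $t-1$ and so force $t$ cores of that much cache in any makespan-$1$ solution. The paper phrases this last step as a proof by contradiction ($z(i)>p(i)$ leads to contradictory load counts) rather than your direct dominance argument, but the underlying counting is identical.
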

\begin{proof}
Let $Y=(p,S)$ be  the cache partition and job assignment  with
makespan $1$ whose existence is assumed by the lemma. $Y$ has makespan $1$ so the sum of the loads of all jobs is at most
$c$. Since $A_2$ loads each core, except maybe the last one, with more
than $1$ load it follows that $A_2$ uses at most $c$ cores.

Since $Y$ has makespan $1$ the load of each of the jobs is at most
$1$. Algorithm $A_2$ only exceeds a load of $1$ on a core by the load
of the last job assigned to this core and thus $A_2$ yields a
solution with makespan at most $2$.

Assume w.l.o.g that the cores in $Y$ are
indexed such that for any core $i$, $p(i+1)\leq p(i)$. Assume that
the cores in $A_2$ are indexed in the order in which they were loaded
by $A_2$. By the definition of $A_2$ the cores are also sorted by
non-increasing order of their cache allocation. Denote by $z(i)$ the
amount of cache $A_2$ allocates to core $i$. We show that for
all $i\in\{1,\ldots,c\}$, $z(i) \leq p(i)$. This implies that
algorithm $A_2$ uses at most $K$ cache.

$A_2$ allocates to the first core the cache required by the most
demanding job so $z(1)=\max_j{x_j}$. This job must be assigned in
$Y$ to some core and therefore $z(1) \leq p(1)$. Assume to the
contrary that $z(i)
> p(i)$ for some $i$. Each job $j$ with cache demand $x_j
>p(i)$ must be assigned in $Y$ to one of the first $(i-1)$ cores,
because all the other cores don't have enough cache to run this
job. Since $Y$ has makespan $1$ we know that $\sum\limits_{j \mid
x_j>p(i)}a_j \leq (i-1)$. Consider all the jobs with cache demand at
least $z(i)$. Algorithm $A_2$ failed to assign all these jobs to the first
$(i-1)$ cores, and we know that $A_2$ assigns more than $1$ load to
each core. So $\sum\limits_{j \mid x_j\geq z(i)}a_j > (i-1)$. Since
$z(i)>p(i)$ and there is a job with cache demand $z(i)$, we have $\sum\limits_{j \mid x_j\geq z(i)}a_j <
\sum\limits_{j|x_j>p(i)}a_j$ which leads to a contradiction.
Therefore $z(i) \leq p(i)$ for all $i$ and  algorithm $A_2$ uses at
most $K$ cache.
\end{proof}

\subsection{$\frac{3}{2}$-approximation with $2K$ cache}\label{slmc_32}
We define a job to be {\em large} if $a_j
> \frac{1}{2}$ and {\em small} otherwise.
Our algorithm $A_\frac{3}{2}$ assigns one large job to each core.
 Let
$s_i$ be the load on core $i$ after the large jobs are assigned. Let
$r_i=1-s_i$. We process the small jobs by non-increasing order
of their cache demand $x_j$, and assign them to the cores in
non-increasing order of the cores' $r_i$'s.
We stop assigning  jobs to a core when its load exceeds 1 and start loading the next core.
Algorithm $A_\frac{3}{2}$ allocates to each core the cache demand of its
most demanding job. Algorithm $A_{\frac{3}{2}}$ fails if the resulting solution uses more than $c$ cores or more than $2K$ cache.

\begin{theorem}\label{proof_32}
If there is a cache partition and job assignment of makespan at most $1$ that use $c$ cores and $K$ cache then $A_\frac{3}{2}$ finds a cache partition and job assignment that use at most $2K$ cache, at most $c$ cores and have a makespan of at most $\frac{3}{2}$.
\end{theorem}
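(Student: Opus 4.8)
The plan is to prove Theorem~\ref{proof_32} by showing three things about the output of $A_\frac{3}{2}$: that it uses at most $c$ cores, that its makespan is at most $\frac{3}{2}$, and that it allocates at most $2K$ cache. The makespan bound is the easy part: each core receives exactly one large job, and then small jobs are added only while the load is below $1$; since every small job has $a_j \le \frac{1}{2}$, the overload beyond $1$ caused by the last small job is at most $\frac{1}{2}$, giving makespan at most $\frac{3}{2}$. For the core count, I would use a volume argument: let $Y=(p,S)$ be the assumed solution of makespan $1$; its total load is at most $c$, and it contains at most $c$ large jobs (at most one per core, since two large jobs exceed load $1$). Algorithm $A_\frac{3}{2}$ uses one core per large job and then distributes the small jobs, loading each core it opens (except possibly the last) to more than $1$; a counting argument on total load then caps the number of cores used at $c$. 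I would need to be slightly careful that there are indeed at most $c$ large jobs in the instance (otherwise no solution of makespan $1$ exists), and that cores which receive a large job but no small jobs are still ``used'' — but there are at most $c$ of those by assumption.

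The cache bound is where the real work lies, and I expect it to be the main obstacle. The strategy is to split the cache accounting into the contribution of large jobs and the contribution of small jobs, and bound each by $K$. For the large jobs: in $Y$, each of the (at most $c$) large jobs sits on a distinct core with enough cache, so the sum over cores of the cache demand of the large job assigned there by $A_\frac{3}{2}$ can be matched, core by core when sorted by demand, against $p$ restricted to the cores carrying large jobs in $Y$ — this is essentially the same exchange/sorting argument as in the proof of Theorem~\ref{joint2}, and it gives a bound of $K$ for the large-job part. For the small jobs: I would argue that whenever a core's cache allocation is determined by a \emph{small} job $j$ (i.e., $j$ is more demanding than the large job already on that core), the demand $x_j$ can be charged against the cache $Y$ must devote to jobs of demand at least $x_j$. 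Here the ordering matters: $A_\frac{3}{2}$ processes small jobs by non-increasing $x_j$ and fills cores in non-increasing $r_i$ order, and one exceeds load $1$ on each filled core, so a ``$>\!(i-1)$ versus $\le\!(i-1)$'' contradiction of the type used in Theorem~\ref{joint2} should show the small-job cache is at most $K$. Adding the two parts yields at most $2K$.

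The delicate point in the small-job accounting is that a core's allocation is set by its \emph{maximum}-demand job, which could be either its large job or its first small job, so the charging scheme must avoid double-counting: I would like to say the total cache used is at most (sum over cores of large-job demand) $+$ (sum over cores whose max is a small job of that small job's demand), and then bound the first sum by $K$ as above and the second sum by $K$ via the Theorem~\ref{joint2}-style argument applied to the set of small jobs together with the fact that in $Y$ every job of demand $\ge t$ must lie on a core of cache $\ge t$. The hard part will be making the second bound rigorous when the cores into which $A_\frac{3}{2}$ packs small jobs are ordered by $r_i$ rather than by cache, and when some of those cores never have their allocation upgraded by a small job at all; I would handle this by restricting attention to the cores whose allocation \emph{is} upgraded, noting these are loaded to more than $1$ by small jobs of demand at least the threshold, and comparing with how $Y$ must spread the same high-demand small jobs over its cores. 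Once both halves are in place, the theorem follows by summation.
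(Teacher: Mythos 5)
Your proposal follows essentially the same route as the paper's proof: the same makespan and core-count arguments, and the same split of the cache bound into a large-job part (bounded by $K$ because $Y$ places distinct large jobs on distinct cores with sufficient cache) and a small-job part (bounded by $K$ by a Theorem~\ref{joint2}-style contradiction adapted to residual capacities). The one ingredient you leave implicit is exactly how the paper closes the gap you flag: since both $A_{\frac{3}{2}}$ and $Y$ put at most one large job per core, the sorted residuals $r_1\ge\cdots\ge r_c$ of $A_{\frac{3}{2}}$ have maximal prefix sums over all makespan-$1$ placements of the large jobs, so $\sum_{l=1}^{i-1}r_l$ dominates the space $Y$ has for small jobs on any $i-1$ of its cores, which yields the $J_1\subseteq J_2$ volume contradiction.
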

\begin{proof}
Let $Y=(p,S)$ be  the cache partition and job assignment  with makespan $1$ whose existence is assumed by the lemma. The existence of $Y$ implies that there are at most $c$ large jobs in our input and that the total volume of all the jobs is at most $c$. Therefore algorithm  $A_\frac{3}{2}$ uses at most $c$ cores to assign the large jobs. Furthermore, when  $A_\frac{3}{2}$ assigns the small jobs it loads each core, except maybe the last one, with a load of at least $1$ and thus uses at most $c$ cores.
Algorithm  $A_\frac{3}{2}$ provides a solution with makespan at most $\frac{3}{2}$ since it can only exceed a load of $1$ on any core by the load of a single small job.

Let $z$ be the cache partition generated by  $A_\frac{3}{2}$. Let $C_l$ be the set of cores whose most cache demanding job is a large job and $C_s$ be the set of cores whose most cache demanding job is a small job.
For core $i\in C_l$, Let $j_i$ be the most cache demanding job assigned to core $i$, so we have $z(i)=x_{j_i}$. The solution $Y=(p,S)$ is a valid solution thus $x_{j_i} \leq p(S(j_i))$ so $z(i) \leq p(S(j_i))$. If $j_1,j_2$ are two large jobs then $S(j_1)\neq S(j_2)$ and we get that $\sum\limits_{i\in C_l} z(i) \leq \sum\limits_{i\in C_l} p(S(j_i)) \leq \sum\limits_{i=1}^{c} p(i) = K$.

In the rest of the proof we index the cores in the solution of  $A_\frac{3}{2}$ such that $r_1 \geq r_2 \ldots \geq r_c$.This is the same order in which   $A_\frac{3}{2}$ assigns small jobs to the cores. In $Y$ we assume that the cores are indexed such that $p(i)\geq p(i+1)$.
We now prove the $z(i)\leq p(i)$ for any core $i\in C_s$. Assume, to the contrary, that for some $i$, $z(i) > p(i)$. Let $\alpha$ be the cache demand of the most cache demanding small job on core $i$ in $Y$.
Let $J_1= \{j \mid a_j\leq\frac{1}{2}, x_j \geq z(i)\}$ and let $J_2= \{j \mid a_j\leq\frac{1}{2}, x_j > \alpha)\}$. Since $\alpha \leq p(i)$ and by our assumption $p(i)<z(i)$ we get that $\alpha <z(i)$ and therefore $J_1 \subseteq J_2$.

$A_\frac{3}{2}$ does not assign all the jobs of $J_1$ to its first $(i-1)$ cores and therefore the total load of the jobs in $J_1$ is greater than  $\sum\limits_{l=1}^{i-1}r_l$.
On the other hand we know that in $Y$, assignment $S$ assigns all the jobs in $J_2$ on its first $i-1$ cores while not exceeding a load of 1. Thus the total load of jobs in $J_2$ is at most the space available for small jobs on the first $(i-1)$ cores in solution $Y$.
Since $r_1 \geq r_2 \ldots \geq r_c$,  and since in any solution each core runs at most one large job, we get that $\sum\limits_{l=1}^{i-1}r_l$ is at least as large as the space available for small jobs in any subset of $(i-1)$ cores in any solution.
It follows that the total load of jobs in $J_2$ is smaller than in $J_1$. This contradicts the fact that $J_1 \subseteq J_2$.

We conclude that for every $i\in C_s$, $z(i)\leq p(i)$. This implies that the total cache allocated to cores in $C_s$ is at most $K$. We previously showed that the total cache allocated to cores in $C_l$ is at most $K$ and thus the total cache used by algorithm $A_\frac{3}{2}$ is at most $2K$.

\end{proof}

\subsection{$\frac{4}{3}$-approximation with $3K$ cache, using dominant matching}\label{slmc_43}
We present a $\frac{4}{3}$ approximate decision algorithm, $A_\frac{4}{3}$, that uses at most $3K$ cache.
The main challenge is assigning the \textit{large jobs}, which here are defined as jobs of load greater than $\frac{1}{3}$.

There are at most $2c$ large jobs in our instance, because we assume there is a solution of makespan at most $1$ that uses $c$ cores.
Algorithm $A_\frac{4}{3}$ matches these large jobs into pairs, and assigns each pair to a different core.
In order to perform the matching, we construct a graph $G$ where each vertex represents a large job $j$ of weight $a_j>\frac{1}{3}$. If needed, we add artificial vertices of weight zero to have a total of exactly $2c$ vertices in the graph.  Each two vertices have an edge between them if the sum of their weights is at most $1$. The weight of an edge is the sum of the weights of its endpoints.

A perfect matching in a graph is a subset of edges such that every vertex in the graph is incident to exactly one edge in the subset. We note that there is a natural bijection between perfect matchings  in the graph $G$ and assignments of makespan at most $1$ of the large jobs to the cores.
The $c$ edges in any perfect matching define the assignment of the large jobs to the $c$ cores as follows: Let $(a,b)$ be an edge in the perfect matching.
If both $a$ and $b$ correspond to large jobs, we assign both these jobs to the same core.
If $a$ corresponds to a large job and $b$ is an artificial vertex, we assign the job corresponding to $a$ to its own core.
If both $a$ and $b$ are artificial vertices, we leave a core without any large jobs assigned to it. Similarly we can injectively map any assignment of the larges  jobs of makespan at most $1$ to a perfect matching in $G$: For each core that has 2 large jobs assigned to it, we select the edge in $G$ corresponding to these jobs, for each core with a single large job assigned to it, we select an edge between the corresponding real vertex and an arbitrary artificial vertex, and for each core with no large jobs assigned to it we select an edge in $G$ between two artificial vertices.

 A \textit{dominant perfect matching} in $G$ is a perfect matching $Q$ such that for every $i$, the $i$ heaviest edges in $Q$ are a maximum weight matching in $G$ of $i$ edges.
The graph $G$ is a threshold graph \cite{MP95}, and in Section \ref{threshold_dominant} we provide a polynomial time algorithm that finds a dominant perfect matching in any threshold graph that has a perfect matching. If there is a solution for the given instance of makespan at most $1$ then the assignment of the large jobs in that solution correspond to a perfect matching in $G$ and thus algorithm $A_\frac{4}{3}$ can apply the algorithm from Section \ref{threshold_dominant} and find a dominant perfect matching, $Q$, in $G$.

Algorithm $A_\frac{4}{3}$ then assigns the small jobs (load $ \leq \frac{1}{3}$) similarly to algorithms $A_2$ and $A_\frac{3}{2}$ described in  Sections \ref{slmc_2} and \ref{slmc_32}, respectively.
It greedily assigns jobs to a core, until the core's load exceeds $1$. Jobs are assigned in a non-increasing order of their cache demand and the algorithm goes through the cores in a non-decreasing order of the sum of loads of the large jobs on each core. Once all the jobs are assigned, the algorithm allocates cache to the cores according to the cache demand of the most demanding job on each core.
Algorithm $A_\frac{4}{3}$ fails if it does not find a dominant perfect matching in $G$ or if the resulting solution uses more than $c$ cores or more than $3K$ cache.

\begin{theorem}\label{proof_43}
If there is a solution that assigns the jobs to $c$ cores with makespan $1$ and uses $K$ cache then algorithm $A_\frac{4}{3}$ assigns the jobs to $c$ cores with makespan at most $\frac{4}{3}$ and uses at most $3K$ cache.
\end{theorem}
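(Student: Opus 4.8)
The plan is to bound separately the three quantities that algorithm $A_\frac{4}{3}$ might violate: the number of cores, the makespan, and the total cache. The makespan bound is the easiest: after the dominant matching is fixed, each core carries large jobs of total load at most $1$ (edges of $G$ have weight at most $1$), and the greedy phase for the small jobs overshoots a load of $1$ on a core only by the load of a single small job, which is at most $\frac{1}{3}$; hence the makespan is at most $\frac{4}{3}$. The core-count bound is also routine: the assumed optimal solution of makespan $1$ on $c$ cores shows the large jobs can be matched into $c$ pairs (so $G$ has a perfect matching and the dominant matching uses exactly $c$ cores), and the total volume of all jobs is at most $c$, so the greedy small-job phase — which puts load at least $1$ on every core except possibly the last — uses at most $c$ cores.

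The substance of the proof is the cache bound, and here I would split the cores into two classes exactly as in Theorems \ref{joint2} and \ref{proof_32}: let $C_l$ be the cores whose most cache-demanding job is large, and $C_s$ the cores whose most cache-demanding job is small. For $C_s$ I would reuse the argument from the proof of Theorem \ref{proof_32} essentially verbatim: index the cores of $A_\frac{4}{3}$ by non-increasing order of remaining room $r_i = 1 - (\text{large load on core }i)$ — which is the order the algorithm processes small jobs — index the cores of the optimal solution $Y$ by non-increasing cache $p(i)$, and assume for contradiction $z(i) > p(i)$ for some $i \in C_s$. Comparing the set $J_1$ of small jobs with cache demand $\ge z(i)$ (which $A_\frac{4}{3}$ fails to fit on its first $i-1$ cores, so their load exceeds $\sum_{l<i} r_l$) with the set $J_2 \supseteq J_1$ of small jobs $Y$ places on its first $i-1$ cores (so their load is at most the small-job room on those $i-1$ cores of $Y$, which is at most $\sum_{l<i} r_l$ since each core of $Y$ also carries at most one large job and the $r_l$ are the $i-1$ largest rooms over all solutions), yields the contradiction $J_1 \subseteq J_2$. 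This shows $\sum_{i \in C_s} z(i) \le K$.

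The new difficulty — and the step I expect to be the main obstacle — is bounding $\sum_{i \in C_l} z(i)$, because now a core may carry \emph{two} large jobs and its cache is set by the more demanding of the two, so the clean injection from Theorem \ref{proof_32} (one large job per core) breaks. This is exactly where the \emph{dominant} property of the matching $Q$ must be used. For a core $i \in C_l$ holding large jobs forming edge $e_i$ of $Q$, its cache $z(i)$ equals the cache demand of one endpoint of $e_i$; I would argue that, ordering the $C_l$-cores by non-increasing $z(i)$, the $t$ most cache-demanding such cores correspond to large jobs that in the optimal solution $Y$ occupy at least $\lceil t/2 \rceil \cdot$-worth — more precisely, I would show that for each threshold $\tau$, the number of cores of $Q$ both of whose endpoints can be "charged" at cache level $\ge \tau$ is controlled by how many large jobs of cache demand $\ge \tau$ exist, and that the dominant matching pairs these up as favorably as any matching (this is the "$i$ heaviest edges form a maximum-weight $i$-matching" property, transferred from edge weights to the threshold structure of $G$, using that in a threshold graph the vertex order by weight is also the order by cache demand up to the instance's correlation). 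Summing the charges, the total cache used by $C_l$ is at most $2K$: each of the (at most) $2c$ large jobs contributes its cache demand, these demands are each at most $p(S(j))$ in $Y$, and $\sum_j p(S(j))$ over all large jobs is at most $2K$ since $Y$ assigns at most $\lceil\,\cdot\,\rceil$ — at most two large jobs share a core only when... — more carefully: at most three large jobs of load $>\frac13$ fit on a core of $Y$, but pairing them via $Q$ and using dominance lets us bound the cache of each $C_l$-core by a demand that is "paid for" within a budget of $2K$. Combining, $A_\frac{4}{3}$ uses at most $K + 2K = 3K$ cache, completing the proof.
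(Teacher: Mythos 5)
Your makespan and core-count arguments match the paper's and are fine. The cache bound, however, inverts where the difficulty actually lies, and as written it has a genuine gap.

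First, the part you flag as ``the main obstacle'' --- bounding $\sum_{i\in C_l} z(i)$ --- does not need the dominant property at all, and the elaborate charging scheme you sketch (which trails off without a complete argument) is unnecessary. The paper's bound is two lines: for $i\in C_l$ let $j_i$ be the most cache-demanding large job on core $i$, so $z(i)=x_{j_i}\le p(S(j_i))$; the jobs $j_i$ are distinct across cores, and since a core of $Y$ with makespan $1$ hosts at most \emph{two} jobs of load $>\frac{1}{3}$ (not three, as you write --- three such jobs already exceed load $1$), the map $i\mapsto S(j_i)$ is at most two-to-one, whence $\sum_{i\in C_l} z(i)\le 2\sum_i p(i)=2K$.

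Second, the part you dismiss as ``essentially verbatim'' reuse of Theorem \ref{proof_32} is exactly where the dominant matching is needed, and your justification there is false. You claim the small-job room on the first $i-1$ cores of $Y$ is at most $\sum_{l<i} r_l$ ``since each core of $Y$ also carries at most one large job.'' With the threshold at $\frac{1}{3}$ a core of $Y$ may carry two large jobs, and different pairings of the large jobs yield genuinely different multisets of residual rooms $r_l$ --- so it is no longer automatic that the algorithm's sorted prefix sums $\sum_{l=1}^{i-1} r_l$ dominate the free space of any $i-1$ cores in any solution. This is precisely what the dominance of $Q$ buys: for every $i$ the $i$ heaviest edges of $Q$ form a maximum-weight $i$-edge matching, so the suffix sums $\sum_{l=i}^{c} s_l$ are maximal over all large-job assignments of makespan at most $1$; since the total large-job volume is fixed, the prefix sums $\sum_{l=1}^{i} r_l$ are maximal, and only then does the $J_1\subseteq J_2$ contradiction go through. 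Without this lemma your $C_s$ argument does not close, so the proposal as it stands does not establish the $3K$ bound.
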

\begin{proof}
Let $Y=(p,S)$ be a solution of makespan at most $1$, that uses $c$ cores and $K$ cache.

Algorithm  $A_\frac{4}{3}$ provides a solution with makespan at most $\frac{4}{3}$ since it may only exceed a load of $1$ on any core by the load of a single small job.

Algorithm  $A_\frac{4}{3}$ uses at most $c$ cores to assign the large jobs because the assignment is based on a perfect matching of size $c$ in $G$.
 The existence of $Y$ implies that the total load of all jobs is at most $c$. When $A_\frac{4}{3}$ assigns the small jobs it exceeds a load of $1$ on all cores it processes, except maybe the last one, and therefore we get that $A_\frac{4}{3}$ uses at most $c$ cores.

Let $z$ be the cache partition generated by  $A_\frac{4}{3}$. Let $C_l$ be the set of cores whose most demanding job is a large job and $C_s$ be the set of cores whose most demanding job is a small job.

Consider any core $i\in C_l$. Let $j$ be the most cache demanding large job assigned to core $i$. Job $j$ runs in solution $Y$ on some core $S(j)$. Therefore $z(i)=x_j\leq p(S(j))$. Since each core in $Y$ runs at most two large jobs, we get that the total cache allocated by our algorithm to cores in $C_l$ is at most $2K$.

Consider the large jobs assigned to cores according to the dominant perfect matching $Q$. Denote by $s_i$ the load on core $i$ after the large jobs are assigned (and before the small jobs are assigned) and let $r_i=1-s_i$.
W.l.o.g. we assume the cores in $A_\frac{4}{3}$ are indexed such that $r_1\geq\ldots\geq r_c$.
For every $i$, $\sum\limits_{l=i}^{c} s_l$ is at least as large than this sum in any assignment of the large jobs of makespan at most $1$ because any such assignment defines a perfect matching in graph $G$ and if $\sum\limits_{l=i}^{c} s_l$ is larger in some other assignment then $Q$ is not a dominant perfect matching in $G$.
Since the total volume of all large jobs is fixed, we get that for every core $i$ the amount of free volume on cores $1$ till $i$, $\sum\limits_{l=1}^{i} r_l$, is maximal and can not be exceeded by any other assignment of the large jobs of makespan at most $1$.

W.l.o.g we assume that the cores in solution $Y=(p,S)$ are indexed such that $p(i)\geq p(i+1)$.
Let $i$ be any core in $C_s$. We show that $z(i)\leq p(i)$. Assume, to the contrary, that $z(i) > p(i)$.
Let $\alpha$ be the cache demand of the most demanding small job assigned to core $i$ in solution $Y$.
Let $J_1=\{j \mid a_j\leq\frac{1}{3}, x_j\geq z(i)\}$ and $J_2=\{j\mid a_j\leq\frac{1}{3},x_j > \alpha\}$. Since  $\alpha \leq p(i) < z(i)$, we get that $J_1 \subseteq J_2$.

Solution $Y$ assigns all the jobs in $J_2$ to its first $(i-1)$ cores, without exceeding a makespan of $1$. Therefore the total volume of jobs in $J_2$ is at most the total available space solution $Y$ has on its first $(i-1)$ cores after assigning the large jobs. Since we know that for every $i$, $\sum\limits_{l=1}^{i} r_l$ is maximal and can not be exceeded by any assignment of the large jobs of makespan at most $1$, we get that the total volume of jobs in $J_2$ is at most  $\sum\limits_{l=1}^{i} r_l$. Algorithm $A_\frac{4}{3}$ does not assign all the jobs in $J_1$ to its first $(i-1)$ cores, and since $A_\frac{4}{3}$ loads each of the first $(i-1)$ cores with at least $1$, we get that the total volume of jobs in $J_1$ is greater than $\sum\limits_{l=1}^{i} r_l$. So we get that the total volume of jobs in $J_2$ is less than the total volume of jobs in $J_1$ but that is a contradiction to the fact that $J_1\subseteq J_2$. Therefore we get that $z(i)\leq p(i)$, for every $i\in C_s$.  It follows that the total cache allocated by our algorithm to cores in $C_s$ is at most $K$ and this concludes the proof that our algorithm allocates a total of at most $3K$ cache to all cores.
\end{proof}

\subsection{Approximate optimization algorithms for the single load, minimal cache model }\label{apn_approx_des_to_opt}
We presented approximation algorithms for the decision version of the joint cache partition and job assignment problem in the single load and minimal cache demand model.
If there is a solution with makespan $m$, algorithms $A_2$, $A_\frac{3}{2}$ and $A_\frac{4}{3}$ find a solution of makespan $2m$, $\frac{3m}{2}$ and $\frac{4m}{3}$, that uses $K$, $2K$ and $3K$ cache, respectively.
We now show how to transform these algorithms into approximate optimization algorithms using a standard binary search technique \cite{LST90}.

\begin{lemma}\label{bin_opt}
Given $m$, $K$ and $c$, assume there is a polynomial time approximate decision algorithm that if there is a solution of makespan $m$, $K$ cache and $c$ cores, returns a solution of makespan $\alpha m$, $\beta K$ cache and $c$ cores,  where $\alpha$ and $\beta$ are at least $1$. Then, there is a polynomial time approximation algorithm that finds a solution of makespan $\alpha m_{opt}$, $\beta K$ cache and $c$ cores, where $m_{opt}$ is the makespan of the optimal solution with $K$ cache and $c$ cores.
\end{lemma}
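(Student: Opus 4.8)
The plan is to run a binary search over candidate makespan values, calling the approximate decision algorithm at each step, and to argue that the search can be confined to a polynomial number of iterations by exploiting the discrete structure of the problem. First I would establish an a priori range $[L,U]$ for $m_{opt}$: since the optimal solution uses $c$ cores and $K$ cache, a trivial upper bound is $U = \sum_j a_j$ (put every job on one sufficiently-cached core, or more carefully note that some core carries at least the average load so $m_{opt} \ge U/c =: L$). Here for the single-load minimal-cache model $a_j$ is job $j$'s load; for the general step-function reductions one uses the analogous quantities. Thus $m_{opt} \in [L, U]$ with $U/L \le c$.

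Next I would observe that the only makespan values that matter are sums of subsets of the $a_j$'s, so the optimal makespan is one of at most $2^n$ values — but more usefully, we do not need to hit $m_{opt}$ exactly. I would run binary search on the real interval $[L,U]$: maintain an interval $[\ell, u]$ known to contain $m_{opt}$, test the midpoint $m = (\ell+u)/2$ with the decision algorithm, and recurse into $[\ell, m]$ if it reports a solution (of makespan $\le \alpha m$, cache $\le \beta K$) and into $[m, u]$ otherwise. After $O(\log(U/L) + \log(1/\eta)) = O(\log c + \log(1/\eta))$ iterations the interval has width at most $\eta L$; one then returns the solution found at the smallest tested value $m^\star$ for which the decision algorithm succeeded. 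Since the decision algorithm fails only when there is genuinely no solution of makespan $m^\star$, we have $m_{opt} > m^\star - \eta L \ge m^\star(1-\eta)$ (using $m^\star \ge L$), i.e. $m^\star \le m_{opt}/(1-\eta)$, and the returned solution has makespan at most $\alpha m^\star \le \frac{\alpha}{1-\eta} m_{opt}$ using $\beta K$ cache and $c$ cores. Taking $\eta$ to be an inverse polynomial (e.g. $\eta = 1/n$), the $(1-\eta)^{-1}$ factor is absorbed into the statement's $\alpha$ up to an arbitrarily small additive loss; alternatively, since the relevant candidate makespans are finitely many subset-sums, one can choose $\eta$ small enough (polynomially bounded bit-size, as the input numbers have polynomial encoding length) that $m^\star$ is forced to equal $m_{opt}$ exactly, giving makespan exactly $\alpha m_{opt}$ as stated.

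The main obstacle is the last point: arguing that the binary search can be made to terminate with the \emph{exact} optimum rather than a $(1+\eta)$-approximation of it. The clean way is to note that $m_{opt}$ is a sum of a subset of the input loads, hence any two distinct candidate makespans differ by at least the granularity $2^{-\mathrm{poly}}$ determined by the bit-lengths of the inputs; running the binary search until the interval width drops below this gap pins down $m_{opt}$ exactly, and only $O(\mathrm{poly})$ iterations are needed. Each iteration is a single polynomial-time call to the decision algorithm, so the whole procedure is polynomial, which completes the proof. (If one is content with a $(1+\eta)$-approximation for arbitrarily small $\eta$, the exactness discussion can be skipped entirely — as is done for the ordered-unrelated-machines binary search invoked in the proof of Theorem \ref{reduce_approx}.)
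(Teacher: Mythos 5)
Your proposal is correct and follows essentially the same route as the paper: a binary search over $[\,U/c,\;\sum_j a_j\,]$ driven by the one-sided decision oracle, with the exact $\alpha m_{opt}$ guarantee obtained from the discreteness of achievable makespans given the input precision. The paper implements the last step by scaling all loads by $2^{\phi}$ so the search runs over integers with the update $L:=m+1$ on failure until $L=U$ (which is the clean way to make your ``pins down $m_{opt}$ exactly'' claim rigorous, since searching over the reals until the interval width drops below the granularity gap still leaves the successful test point slightly above $m_{opt}$); otherwise the arguments coincide.
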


\begin{proof}
Let's temporarily assume that the loads of all jobs are integers.
This implies that for any cache partition and job assignment the makespan is an integer.

Our approximate optimization algorithm performs a binary search for the optimal makespan and maintains a search range $[L,U]$.
Initially, $U= \sum\limits_{j=1}^{n}a_j $ and $L= \left \lceil \frac{1}{c} U \right \rceil$.
Clearly these initial values of $L$ and $U$ are a lower and an upper bound on the optimal makespan, respectively.
Let $A$ be the approximate decision algorithm whose existence is assumed in the lemma's statement.
In each iteration, we run algorithm $A$ with parameters $K$, $c$ and $m=\lfloor \frac{L+U}{2} \rfloor$.
If $A$ succeeds and returns a solution with makespan at most $\alpha m$ we update the upper bound $U:=m$.
If $A$ fails, we know there is no solution of makespan at most $m$, and we update the lower bound $L:=m+1$.
It is easy to see that the binary search maintains the invariant that after any iteration, if the search range is $[L,U]$ then $m_{opt}\in[L,\alpha U]$ and we have a solution of makespan at most $\alpha U$. The binary search stops when $L=U$.

The makespan of the solution when the binary search stops is at most $\alpha U=\alpha L \leq \alpha m_{opt}$. The binary search stops after $O(\log_2(\sum\limits_{j=1}^{n}a_j))$ iterations, and since $A$ runs in polynomial time, we get that our algorithm runs in polynomial time.
This shows that our binary search algorithm is a polynomial time $\alpha$-approximation algorithm.

If the loads in our instance are not integers, let $\frac{1}{2^\phi}$ be the precision in which the loads are given.
By multiplying all loads by $2^\phi$ we get an equivalent instance where all the loads of the jobs are integers.
Note that this only adds $\phi$ iterations to the binary search and our algorithm still runs in polynomial time.
\end{proof}

The following theorem follows immediately from Lemma \ref{bin_opt}.

\begin{theorem}
Using the approximate decision algorithms presented in this section, we obtain polynomial time approximate optimization algorithms for the single load, minimal cache demand problem with approximation factors $2$,\,$\frac{3}{2}$ and $\frac{4}{3}$ that use $K$, $2K$ and $3K$ cache,  respectively.
\end{theorem}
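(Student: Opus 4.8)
The plan is to instantiate Lemma \ref{bin_opt} three times, once for each of the approximate decision algorithms $A_2$, $A_\frac{3}{2}$ and $A_\frac{4}{3}$ developed in Sections \ref{slmc_2}, \ref{slmc_32} and \ref{slmc_43}. For each algorithm I would first check that it satisfies the hypothesis of Lemma \ref{bin_opt}: given $m$, $K$ and $c$ it is a polynomial time procedure that, whenever the input admits a solution of makespan $m$ using $K$ cache and $c$ cores, outputs a solution of makespan $\alpha m$ using $\beta K$ cache and $c$ cores, and otherwise fails; and, by the contrapositive of the corresponding correctness theorem, a failure certifies that no solution of makespan $m$ using $K$ cache and $c$ cores exists. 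The three relevant $(\alpha,\beta)$ pairs are $(2,1)$ from Theorem \ref{joint2}, $(\tfrac{3}{2},2)$ from Theorem \ref{proof_32}, and $(\tfrac{4}{3},3)$ from Theorem \ref{proof_43}.

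The second step is to observe that each of the three algorithms runs in polynomial time: $A_2$ and $A_\frac{3}{2}$ only sort the jobs and make a single greedy pass over the cores, while $A_\frac{4}{3}$ additionally invokes the dominant perfect matching routine of Section \ref{threshold_dominant}, which is polynomial. Consequently Lemma \ref{bin_opt} applies directly and produces, in each of the three cases, a polynomial time algorithm returning a solution of makespan $\alpha m_{opt}$ using $\beta K$ cache and $c$ cores, where $m_{opt}$ is the optimal makespan achievable with $K$ cache and $c$ cores. Substituting the three pairs $(\alpha,\beta)$ yields exactly the claimed $2$-, $\tfrac{3}{2}$- and $\tfrac{4}{3}$-approximation optimization algorithms using $K$, $2K$ and $3K$ cache respectively.

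Since Lemma \ref{bin_opt} is already established, there is essentially no genuine obstacle; the only point requiring care is the bookkeeping in the first step, namely confirming that ``the algorithm fails'' is equivalent to ``there is no solution of makespan $m$ with $K$ cache and $c$ cores.'' This is precisely the contrapositive of Theorems \ref{joint2}, \ref{proof_32} and \ref{proof_43}, and it is exactly the property the binary search in Lemma \ref{bin_opt} uses when it raises the lower bound $L$. I would also note in passing that non-integral loads need no extra handling, as the scaling argument is already internal to the proof of Lemma \ref{bin_opt}.
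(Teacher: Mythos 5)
Your proposal is correct and matches the paper's approach exactly: the paper states that the theorem ``follows immediately from Lemma \ref{bin_opt}'', i.e.\ one instantiates the lemma with $A_2$, $A_{\frac{3}{2}}$ and $A_{\frac{4}{3}}$ and the pairs $(\alpha,\beta)=(2,1),(\tfrac{3}{2},2),(\tfrac{4}{3},3)$ supplied by Theorems \ref{joint2}, \ref{proof_32} and \ref{proof_43}. Your additional bookkeeping (polynomial running time of the decision algorithms and the contrapositive reading of their correctness theorems, which is what justifies raising $L$ on failure) is exactly the content the paper leaves implicit.
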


\subsection{Dominant perfect matching in threshold graphs}\label{threshold_dominant}

Let $G=(V,E)$ be an undirected graph with $2c$ vertices where each vertex $x\in V$ has a weight $w(x) \geq 0$. The edges in the graph are defined by a threshold $t>0$ to be $E=\{ (x,y) \mid w(x)+w(y) \leq t, x\neq y\}$. Such a graph $G$ is known as a threshold graph \cite{CH73,MP95}. We say that the \textit{weight} of an edge $(x,y)$ is $w(x,y)=w(x)+w(y)$.

A perfect matching $A$ in $G$ is a subset of the edges such that every vertex in $V$ in incident to exactly one edge in $A$.
Let $A_{i}$ denote the $i$-th heaviest edge in $A$. We assume, w.l.o.g, that there is some arbitrary predefined order of the edges in $E$ that is used, as a secondary sort criteria, to break ties in case several edges have the same weight. In particular, this implies that $A_{i}$ is uniquely defined.

\begin{definition}
A perfect matching $A$ dominates a perfect matching $B$ if for every $x\in\{1,\ldots,c\}$ $\sum\limits_{i=1}^{x} w(A_{i}) \geq \sum\limits_{i=1}^{x} w(B_{i})$
\end{definition}

\begin{definition}
A perfect matching $A$ is a dominant matching if $A$ dominates any other perfect matching $B$.
\end{definition}

Let $A$ and $B$ be two perfect matchings in $G$.
We say that $A$ and $B$ \textit{share a prefix of length $l$} if $A_{i}=B_{i}$ for $i \in \{1,\ldots,l\}$.
The following greedy algorithm finds a dominant perfect matching in a threshold graph $G$ that has a perfect matching.
We start with $G_0=G$.
At step $i$, the algorithm selects the edge $(x,y)$ with maximum weight in the graph $G_i$. If there are several edges of maximum weight, then $(x,y)$ is the first by the predefined order on $E$.
The graph $G_{i+1}$ is obtained from $G_i$ by removing vertices $x$, $y$ and all edges incident to $x$ or $y$.
The algorithm stops when it selected $c$ edges and $G_c$ is empty.

\begin{lemma}\label{not_stuck}
For every $x\in\{0,\ldots,c-1\}$, If graph $G_x$ has a perfect matching, then the graph $G_{x+1}$ has a perfect matching.
\end{lemma}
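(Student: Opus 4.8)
The plan is to show that the greedy step of the algorithm — removing the endpoints of a maximum-weight edge — never destroys the existence of a perfect matching. So suppose $G_x$ has a perfect matching $N$, and let $(a,b)$ be the maximum-weight edge of $G_x$ chosen by the algorithm. If $(a,b) \in N$, then $N \setminus \{(a,b)\}$ is a perfect matching of $G_{x+1}$ and we are done. Otherwise, $a$ is matched in $N$ to some vertex $a'$ and $b$ is matched in $N$ to some vertex $b'$, with $a' \neq b$, $b' \neq a$, and $a' \neq b'$ (all vertices distinct, since $(a,b)\notin N$). The idea is to perform a local swap: replace the two edges $(a,a')$ and $(b,b')$ of $N$ by the single edge $(a,b)$ together with an edge on the remaining pair $\{a',b'\}$, i.e.\ claim that $(a',b')$ is also an edge of $G_x$; then $N' = \bigl(N \setminus \{(a,a'),(b,b')\}\bigr) \cup \{(a,b),(a',b')\}$ is a perfect matching of $G_x$ that uses $(a,b)$, and $N' \setminus \{(a,b)\}$ is the desired perfect matching of $G_{x+1}$.

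The key step is verifying that $(a',b')$ is an edge, and this is exactly where the threshold structure is used. Since $(a,b)$ is the \emph{maximum}-weight edge of $G_x$, we have $w(a)+w(b) \ge w(a)+w(a')$, hence $w(b) \ge w(a')$; and similarly $w(a)+w(b) \ge w(b)+w(b')$, hence $w(a) \ge w(b')$. Adding these, $w(a')+w(b') \le w(a)+w(b) \le t$, so $(a',b')$ satisfies the threshold condition and is an edge of $G_x$ (it is a genuine edge since $a' \neq b'$). This completes the swap argument.

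A couple of minor points to tidy up: one must make sure the four vertices $a,b,a',b'$ are genuinely four distinct vertices, which follows because $(a,b) \notin N$ forces $a'\neq b$ and $b'\neq a$, and $a'\neq b'$ since $N$ is a matching and $a\neq b$; also $a'\neq a$ and $b'\neq b$ trivially. The main obstacle is really just recognizing the right local modification — the weight inequalities then fall out immediately from maximality of $(a,b)$ and the additive threshold definition of the edge set. Note the argument does not need the artificial zero-weight vertices or any global structure beyond "edges are determined by an additive threshold," so it applies to an arbitrary threshold graph.
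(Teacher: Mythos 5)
Your proof is correct and follows essentially the same route as the paper's: handle the case $(a,b)\in N$ directly, and otherwise use the maximality of $(a,b)$ to show the two leftover partners form an edge under the threshold condition, then swap. The only difference is cosmetic (you spell out the distinctness of the four vertices, which the paper leaves implicit).
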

\begin{proof}
Let $M_x$ denote the perfect matching in graph $G_x$.
Let $(a,b)$ be the edge of maximum weight in $G_x$ that we remove, with its vertices and their incident edges, to obtain $G_{x+1}$.
If $(a,b)\in M_x$ then clearly $M_x\setminus \{ (a,b)\}$ is a perfect matching in $G_{x+1}$.
If $(a,b)\not \in M_x$, and since $M_x$ is a perfect matching of $G_x$, there are two vertices $c$ and $d$ such that $(a,c)$ and $(b,d)$ are in $M_x$.
The edge $(a,b)$ is the maximum weight edge in $G_x$ and thus $w(b)\geq w(c)$ and $w(a) \geq w(d)$. Therefore $(c,d)$ must be an edge in $G_x$ because $w(c)+w(d) \leq w(a)+w(b) \leq t$ the threshold defining the edges in our threshold graph.
Let $M_{x+1} = M_x  \setminus \{(a,c),(b,d)\} \cup \{ (c,d)\}$.
It is easy to see that $M_{x+1}$ is a perfect matching of graph $G_{x+1}$.
\end{proof}

\begin{theorem}\label{dominant_matching}
If $G$ is a threshold graph with $2c$ vertices that has a perfect matching, then the greedy algorithm described above finds a dominant perfect matching.
\end{theorem}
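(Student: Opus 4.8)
The plan is to prove the theorem by induction on $c$, exploiting the recursive structure of the greedy algorithm. The base case $c\le 1$ is immediate. For the inductive step, write $(a,b)=A_1$ for the first (maximum-weight) edge picked by the greedy algorithm and let $G_1=G\setminus\{a,b\}$. By Lemma \ref{not_stuck}, $G_1$ has a perfect matching; it is itself a threshold graph with the same threshold $t$, and the greedy algorithm run on $G_1$ produces exactly $A'=\{A_2,\dots,A_c\}$. So by the induction hypothesis $A'$ is a dominant perfect matching of $G_1$. Since $(a,b)$ is a maximum-weight edge of $G$ and every other edge of $A$ also lies in $G$, the largest edge weight occurring in $A$ is $w(a,b)$; hence for every $1\le x\le c$ we have $\sum_{i=1}^{x} w(A_i)=w(a,b)+\sum_{i=1}^{x-1} w(A'_i)$.

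Now fix an arbitrary perfect matching $B$ of $G$; the goal is to show $A$ dominates $B$. First I would normalize $B$ so that it contains $(a,b)$, without decreasing any prefix sum. If $(a,b)\notin B$, then $B$ contains edges $(a,c)$ and $(b,d)$ with $c\neq d$, and, exactly as in the proof of Lemma \ref{not_stuck}, maximality of $w(a,b)$ forces $w(c)\le w(b)$ and $w(d)\le w(a)$, so $w(c)+w(d)\le t$ and $(c,d)\in E$. Put $B'=\bigl(B\setminus\{(a,c),(b,d)\}\bigr)\cup\{(a,b),(c,d)\}$, a perfect matching of $G$ containing $(a,b)$. Here I would invoke the elementary majorization fact that if two entries $p,q$ of a finite list are replaced by $p',q'$ with $p'+q'=p+q$ and $p'\ge\max(p,q)$, then every prefix sum of the list, sorted in non-increasing order, weakly increases. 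Since $w(a,b)+w(c,d)=w(a)+w(b)+w(c)+w(d)=w(a,c)+w(b,d)$ and $w(a,b)\ge w(a,c)$, $w(a,b)\ge w(b,d)$, this applies and gives $\sum_{i=1}^{x} w(B'_i)\ge\sum_{i=1}^{x} w(B_i)$ for all $x$. Moreover $w(a,b)$ is again the largest edge weight in $B'$, so $\sum_{i=1}^{x} w(B'_i)=w(a,b)+\sum_{i=1}^{x-1} w(B''_i)$ where $B''=B'\setminus\{(a,b)\}$ is a perfect matching of $G_1$.

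Chaining these facts with the induction hypothesis --- $A'$ dominates $B''$ in $G_1$ --- gives, for every $1\le x\le c$,
\[
\sum_{i=1}^{x} w(A_i)=w(a,b)+\sum_{i=1}^{x-1} w(A'_i)\ \ge\ w(a,b)+\sum_{i=1}^{x-1} w(B''_i)=\sum_{i=1}^{x} w(B'_i)\ \ge\ \sum_{i=1}^{x} w(B_i),
\]
so $A$ dominates $B$, and since $B$ was arbitrary, $A$ is a dominant perfect matching. The step I expect to be the crux, and the one I would write out in full detail, is the normalization of $B$: one must verify both that the replacement edge $(c,d)$ exists (the threshold property) and that the swap cannot lower any prefix sum of the sorted edge weights (the majorization lemma). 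For the latter I would either prove it in a couple of lines using the identity ``the sum of the $x$ largest entries of a list equals $\max_{|T|=x}\sum_{i\in T}(\cdot)$'' plus a short case split on whether the index set achieving this maximum contains the increased or the decreased position, or simply cite it as the standard Hardy--Littlewood--P\'olya characterization of majorization. Everything else is bookkeeping around the two displayed identities.
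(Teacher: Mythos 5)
Your proof is correct, but it takes a genuinely different route from the paper's. The paper argues by contradiction: it picks a perfect matching $A$ not dominated by the greedy matching $Q$ that shares the longest possible prefix with $Q$, performs the exchange $A' = A \setminus \{(a,d),(b,c)\} \cup \{(a,b),(c,d)\}$, and then verifies by hand --- a case analysis of $\Delta(l) = \sum_{i\le l} w(A'_i) - \sum_{i\le l} w(A_i)$ over the index ranges determined by where $(a,d)$, $(b,c)$, $(c,d)$ sit in the sorted orders --- that $A'$ dominates $A$, contradicting the extremal choice of $A$. You instead induct on $c$, using the observation that the greedy on $G$ is the max edge $(a,b)$ followed by the greedy on $G_1 = G\setminus\{a,b\}$ (which is again a threshold graph with a perfect matching by Lemma \ref{not_stuck}), normalize an arbitrary competitor $B$ to contain $(a,b)$ via the same threshold-based swap, and delegate the ``prefix sums don't decrease'' step to a general two-entry majorization lemma. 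The exchange step and its reliance on the threshold property are identical in both arguments; the difference is purely in how domination after the swap is certified (explicit $\Delta(l)$ bookkeeping versus the clean transfer lemma) and in the global structure (extremal contradiction versus induction). Your version isolates the combinatorial content into a reusable lemma and makes the recursive structure of the greedy explicit; the paper's version avoids stating any auxiliary lemma but pays for it with the four-range case analysis. Both are complete; the only points you should be sure to write out are the ones you already flagged --- the existence of the edge $(c,d)$ (which needs $c\neq d$, guaranteed since $B$ is a matching not containing $(a,b)$) and the proof of the majorization fact, whose three-case argument (the maximizing index set contains both, neither, or exactly one of the two modified positions) goes through exactly as you sketch.
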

\begin{proof}
Lemma \ref{not_stuck} implies that our greedy algorithm is able to select a set of $c$ edges that is a perfect matching in $G$. Denote this matching by $Q$.

Assume, to the contrary, that $Q$ is not a dominant perfect matching in $G$.
Let $A$ be a perfect matching that is not dominated by $Q$ sharing the longest possible prefix with $Q$. Let $x$ denote the length of the shared prefix of $Q$ and $A$. Let $G_x$ denote the graph obtained from $G$ by removing the $x$ edges that are the heaviest in both $A$ and $Q$, their vertices and all edges incident to these vertices.

Let $(a,b)=Q_{x+1}$. Since $A$ and $Q$ share a maximal prefix of length $x$, $A_{x+1}\neq(a,b)$ .
Since $(a,b)$ is of maximum weight in $G_x$, it follows that $(a,b)\not \in A$ (otherwise, it would have been $A_{x+1}$).
The set of edges $\{A_{x+1},\ldots,A_{c}\}$ form a perfect matching  of $G_x$ so there must be two edges and two indices $l_1>x$ and $l_2 > x$, such that $A_{l_1}=(a,d), A_{l_2}=(b,c)$. We assume w.l.o.g. that $l_1<l_2$.
The edge $(a,b)$ is of maximum weight in $G_x$ therefore $w(a)\geq w(c)$ and $w(b)\geq w(d)$. It follows that $w(c,d) \leq w(a,b) \leq t$, and therefore $(c,d)\in G_x$.
Let $A'=A \setminus \{(a,d),(b,c)\} \cup \{(a,b),(c,d)\}$.
Clearly, $A'$ is a perfect matching in $G$, $A'_{x+1}=(a,b)$ and therefore $A'$ shares a prefix of length $x+1$ with $Q$.
If $A'$ dominates $A$, then since $Q$ does not dominate $A$, it follows that $Q$ does not dominate $A'$. Thus $A'$ is a perfect matching that shares a prefix of length $x+1$ with $Q$ and is not dominated by $Q$. This is a  contradiction to the choice of $A$.  We finish the proof by showing that $A'$ dominates $A$.

Let $l_3$ be the index such that $A'_{l_3}=(c,d)$. Since $w(b)\geq w(d)$, $l_3>l_2$.
Let $\Delta(l)= \sum\limits_{i=1}^{l} w(A'_{i}) - \sum\limits_{i=1}^{l} w(A_{i})$.
The matchings $A'$ and $A$ share a prefix of length $x$, so for every $1\leq l \leq x$, $\Delta(l)=0$. For $x+1 \leq l < l_1$, $\Delta(l) = w(a,b) - w(A_{l}) \geq 0$ since $(a,b)$ is the edge of maximum weight in $G_x$.
For $l_1 \leq l < l_2$, $\Delta(l) = w(a,b) - w(a,d)\geq 0$ also by the maximality $(a,b)$.
For $l_2 \leq l < l_3$, $\Delta(l) = w(A'_{l}) - w(c) - w(d)$ which is non-negative because $l<l_3$ and therefore $w(A'_{l}) \geq w(A'_{l_3})= w(c)+w(d)$.
For $l\geq l_3$, $\Delta(l)=0$.
This shows that $A'$ dominates $A$ and concludes our proof that $Q$ is a dominant perfect matching in $G$.
\end{proof}

\subsubsection{On dominant perfect matchings in $d$-uniform hypergraphs}

The problem of finding a dominant perfect matching in a $d$-uniform threshold hypergraph\footnote{ A \textit{$d$-uniform threshold hypergraph} is defined on a set of vertices, $V$, each with a non-negative weight $w(v)$. The set of edges, $E$, contains all the subsets $S\subset V$ of size $d$ such that the sum of the weights of the vertices in $S$ is at most some fixed threshold $t>0$. } that has a perfect matching is interesting in the context of the single load, minimal cache version of the joint cache partition and job assignment problem. If we can find such a matching then an algorithm similar to Algorithm $A_{\frac{4}{3}}$ in Section \ref{slmc_43} would give a solution that uses $(d+1)K$ cache and approximates the makespan up to a factor of $\frac{d+2}{d+1}$.

However, the following example shows that in a $3$-uniform threshold hypergraph that has a perfect matching,  a dominant perfect matching does not necessarily exist.
 Let $\epsilon > 0$ be an arbitrarily small constant. Consider a hypergraph with 12 vertices, 3 vertices of each weight in $\{\frac{1}{3}, \frac{2}{9},\frac{4}{9}-\epsilon,\epsilon\}$. Each triplet of vertices is an edge if the sum of its weights is at most $1$. This hypergraph has a perfect matching. In fact, let's consider two perfect matchings in this hypergraph.
Matching $A$ consists of the edges $(\frac{1}{3},\frac{1}{3},\frac{1}{3})$, $(\frac{4}{9}-\epsilon, \frac{4}{9}-\epsilon, \epsilon)$, $(\frac{4}{9}-\epsilon, \frac{2}{9},\frac{2}{9})$ and $(\frac{2}{9},\epsilon,\epsilon)$.
Matching $B$ consists of three edges of the form $(\frac{1}{3},\frac{2}{9},\frac{4}{9}-\epsilon)$ and one edge of the form $(\epsilon,\epsilon,\epsilon)$.
It is easy to check that $A$ and $B$ are valid perfect matchings in this hypergraph.
Any dominant perfect matching in this hypergraph must contain the edge $(\frac{1}{3},\frac{1}{3},\frac{1}{3})$ in order to dominate $A$, since this is the only edge of weight $1$ in this hypergraph.
The sum of the two heaviest edges in matching $B$ is $2-2\epsilon$ and therefore any dominant perfect matching must have an edge of weight at least $1-2\epsilon$, as otherwise the matching will not dominate matching $B$. But, if the edge $(\frac{1}{3},\frac{1}{3},\frac{1}{3})$ is in the dominant matching, then all edges disjoint from $(\frac{1}{3},\frac{1}{3},\frac{1}{3})$ have a weight smaller than $1-2\epsilon$. Thus no dominant perfect matching exists in this hypergraph.

Matching $A$ in the example above is the perfect matching found by applying the greedy algorithm to this hypergraph.
It is interesting to note that in a $3$-uniform threshold hypergraph, the greedy algorithm does not necessarily find a perfect matching at all.
This is because Lemma \ref{not_stuck} does not extend to $3$-uniform threshold hypergraphs.
Let $\epsilon > 0$ be an arbitrarily small constant.
Consider a hypergraph with 9 vertices, 3 vertices of each weight in $\{\frac{1}{3}, \frac{2}{9},\frac{4}{9}-\epsilon\}$.
Each triplet of vertices is an edge if the sums of its weights is at most $1$.
This hypergraph has a perfect matching since the 3 edges of the form $(\frac{1}{3},\frac{2}{9},\frac{4}{9}-\epsilon)$ are a perfect matching in this hypergraph.
However the greedy algorithm first selects the edge $(\frac{1}{3},\frac{1}{3},\frac{1}{3})$ and then selects an edge of the form $(\frac{2}{9},\frac{2}{9},\frac{4-\epsilon}{9})$.
The remaining hypergraph now contains three vertices and no edges, so the greedy algorithm is stuck and fails to find a perfect matching.

\subsection{PTAS for jobs with correlative single load and minimal cache demand}\label{correl_ptas}

The main result in this section is a polynomial time approximation scheme for instances of the single load minimal cache demand problem, where there is a correlation between the load and the cache demand of jobs with non-zero cache demand. This special case is motivated by the observation that often there is some underlying notion of a job's ``hardness'' that affects both its load and its minimal cache demand.

Consider an instance of the single load minimal cache demand problem such that for any two jobs $j,j'$ such that $x_j$ and $x_{j'}$ are non-zero, $a_j \leq a_{j'} \iff x_j \leq x_{j'}$.
We call a job $j$ such that $x_j>0$ a \textit{demanding job} and a job $j$ such that $x_j=0$ a \textit{non-demanding job}.
We consider the following decision problem:
We want to decide if there is a cache partition of $K$ cache to $c$ cores and an assignment of jobs to the cores such that the job's minimal cache demand is satisfied and that the resulting makespan is at most $m$? By scaling down the loads of the jobs by $m$, we assume w.l.o.g that $m=1$.

Let $\epsilon>0$. We present an algorithm that if there is a cache partition and a job assignment with makespan at most $1$, returns a cache partition and a job assignment with makespan at most $(1+2\epsilon)$.
Otherwise, our algorithm either decides that there is no solution of makespan at most $1$ or returns a solution of makespan at most $(1+2\epsilon)$.
Combining this algorithm with a binary search, we obtain a PTAS.

If there is a job $j$ such that $a_j>1$ then our algorithm decides that there is no solution of makespan at most $1$.
Thus we assume that for any $j$, $a_j\leq 1$.

Let $J=J_1\cup J_2$, $J_1=\{j\in J \mid a_j \geq \epsilon\}$, $J_2=J\backslash J_1$.
In the first phase, we deal only with jobs in $J_1$.
For each $j\in J_1$ let $u_j=\max\{u\in\mathbb{N} \mid \epsilon + u\epsilon^2 \leq a_j \}$.
We say that $\epsilon + u_j\epsilon^2$ is the \textit{rounded-down load} of job $j$.

Let $U_D=\{u_j \mid j\in J_1,\, x_j>0\}$ and $U_{ND}=\{u_j \mid j\in J_1,\, x_j=0\}$.
An \textit{assignment pattern} of a core is a table that indicates for each $u\in U_D$ how many demanding jobs of rounded-down load $\epsilon + u\epsilon^2$ are assigned to the core and for each $u\in U_{ND}$ how many non-demanding jobs of rounded-down load $\epsilon + u\epsilon^2$ are assigned to the core.
Note that an assignment pattern of a core does not identify the actual jobs assigned to the core.
We only consider assignment patterns whose rounded-down load is at most $1$.

A \textit{configuration of cores} is a table indicating how many cores we have of each possible assignment pattern.
A configuration of cores $T$ is \textit{valid} if for every $u\in U_D$, the number of demanding jobs in $J_1$ whose $u_j=u$ equals the sum of the numbers of demanding jobs with $u_j=u$ in all assignment patterns in $T$ and, similarly, for every $u\in U_{ND}$, the number of non-demanding jobs in $J_1$ whose $u_j=u$ equals the sum of the numbers of non-demanding jobs with $u_j=u$ in all assignment patterns in $T$.

The outline of our algorithm is as follows.
The algorithm enumerates over all valid configurations of cores.
For each valid configuration $T$, we find an actual assignment of the jobs in $J_1$ that matches $T$ and minimizes the total cache used.
We then proceed to assign the jobs in $J_2$, in a way that guarantees that if there a solution of makespan $1$ and $K$ cache that matches this configuration of cores, then we obtain a solution of makespan at most $(1+2\epsilon)$ and at most $K$ cache. If our algorithm does not generate a solution of makespan at most $(1+2\epsilon)$ and at most $K$ cache, for all valid configurations of cores, then our algorithm decides that no solution of makespan at most $1$ exists.

Let $T$ be a valid configuration of cores.
For each core $i\in\{1,\ldots,c\}$, let $q_i$ be the maximal rounded-down load of a demanding job assigned to core $i$ according to the assignment pattern of core $i$ in $T$.
Let $\alpha_i$ be the number of demanding jobs of rounded-down load $q_i$ on core $i$, according to $T$.
We assume w.l.o.g that the cores are indexed such that $q_i \geq q_{i+1}$.
Let $Q=\{q_i \mid i\in\{1,\ldots,c\}\}$.
For each $q\in Q$, let $s(q)$ be the index of the first core $i$ with $q_i=q$ and let $e(q)$ be the index of the last core $i$ with $q_i=q$.
Assume that the cores $s(q),\ldots,e(q)$ are indexed such that $\alpha_{s(q)}\geq \ldots \geq \alpha_{e(q)}$.
Let $J_{1}(q)=\{j\in J_1 \mid x_j\neq 0, \epsilon+u_j \epsilon^2 = q\}$, the set of all demanding jobs in $J_1$ whose rounded down load is $q$.
Let $Y(q)$ be the set of the $\sum\limits_{i=s(q)}^{e(q)} \alpha_{i} $ jobs of smallest cache demands in $J_{1}(q)$.

Our algorithm builds an assignment matching $T$ of minimal cache usage among all assignments matching $T$.
To do so, our algorithm goes over $Q$ in a decreasing order and distributes the jobs in $Y(q)$ to the cores $s(q),\ldots,e(q)$ in this order of the cores such that core $i \in [s(q),e(q)]$, in turn, gets the $\alpha_i$ most cache demanding jobs in $Y(q)$ that are not yet assigned.
After we assign the demanding jobs with the maximal rounded-down load on each core, our algorithm arbitrarily chooses the identity of all other jobs in the configuration $T$.
These are non-demanding jobs and demanding jobs whose rounded-down load is not of the maximal rounded-down load on their core.
Each core is allocated cache according to the cache demand of the most cache demanding job that is assigned to it.

The algorithm continues with the jobs in $J_2$. It first assigns the
demanding jobs in $J_2$, in the following greedy manner. Order these
jobs from the most cache demanding to the least cache demanding. For
each core, we consider two load values: its \textit{actual load}
which is the sum of the actual loads of jobs in $J_1$ assigned to
the core, and its \textit{rounded down load} which is the sum of
rounded down loads of jobs in $J_1$ assigned to the core. We order
the cores such that first we have all the cores that already had
some cache allocated to them in the previous phase of the algorithm,
in an arbitrary order. Following these cores, we order the cores
with no cache allocated to them, from the least loaded core to the
most loaded core, according to their rounded down loads. These cores
are either empty or have only non demanding jobs, from $J_1$,
assigned to them. The algorithm assigns the jobs to the cores in
these orders (of the jobs and of the cores) and stops adding more
jobs to a core and moves to the next one when the core's actual load
exceeds $1+\epsilon$. After all these jobs are assigned, the
algorithm adjusts the cache allocation of the cores whose most cache
demanding job is now a job of $J_2$.

Finally, it assigns the non-demanding jobs in $J_2$.
Each such job is assigned arbitrarily to a core whose actual load does not already exceed $1+\epsilon$.

\begin{lemma}\label{ptas_poly}
The number of  valid configurations of cores is $O(c^{O(1)})$.
\end{lemma}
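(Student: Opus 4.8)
The plan is to show that the number of distinct \emph{assignment patterns} is bounded by a constant depending only on $\epsilon$, so that a configuration of cores --- which merely records how many of the $c$ cores realize each pattern --- is a vector of constantly many non-negative integers summing to $c$, of which there are only polynomially many. Two elementary facts drive everything: every job has $a_j\le 1$ (instances containing a job of load exceeding $1$ were already rejected), and every job $j\in J_1$ has $a_j\ge\epsilon$.

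First I would bound the number of job ``types'' that can occur in an assignment pattern. For $j\in J_1$ the rounded-down load satisfies $\epsilon+u_j\epsilon^2\le a_j\le 1$, hence $u_j\le(1-\epsilon)/\epsilon^2$; thus $U_D$ and $U_{ND}$ are both subsets of $\{0,1,\ldots,\lfloor(1-\epsilon)/\epsilon^2\rfloor\}$, and the total number of types appearing in an assignment pattern (a demanding/non-demanding flag together with a value $u$) is at most $D:=2(\lfloor(1-\epsilon)/\epsilon^2\rfloor+1)$, a constant. Next, since each job in $J_1$ has rounded-down load at least $\epsilon$ and we only consider assignment patterns whose rounded-down load is at most $1$, every assignment pattern assigns at most $\lfloor 1/\epsilon\rfloor$ jobs in total; in particular each of its $D$ type-counts is an integer in $\{0,\ldots,\lfloor 1/\epsilon\rfloor\}$. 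Hence the number of assignment patterns is at most $N:=(\lfloor 1/\epsilon\rfloor+1)^{D}$, again a constant that depends only on $\epsilon$.

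Finally, a configuration of cores assigns to each of the at most $N$ assignment patterns the number of cores that realize it (an empty core corresponding to the all-zero pattern), so it is a vector of at most $N$ non-negative integers whose sum is $c$; in particular each entry lies in $\{0,\ldots,c\}$. The number of such vectors is at most $(c+1)^{N}=O(c^{N})$, which is $O(c^{O(1)})$ because $N$ is a constant. There is no genuine obstacle in this argument; the only points requiring care are to treat $\epsilon$ consistently as a fixed constant (so that quantities exponential in $1/\epsilon$, such as $N$, count as constants) and to derive both the bound on the number of types and the bound on the pattern size from the same two facts, $a_j\le 1$ for all $j$ and $a_j\ge\epsilon$ for $j\in J_1$.
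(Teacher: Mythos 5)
Your proof is correct and follows essentially the same route as the paper: bound the number of load types $|U_D|+|U_{ND}|$ by a constant using $a_j\le 1$, bound the number of jobs per pattern by $1/\epsilon$ using $a_j\ge\epsilon$, conclude the number of assignment patterns is a constant $N$, and count configurations as vectors of $N$ entries in $\{0,\ldots,c\}$. Your write-up is in fact slightly more explicit than the paper's in the final counting step, but the argument is the same.
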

\begin{proof}
We first consider the number of assignment patterns with rounded-down load at most $1$.
Since for each job $j$, $a_j\leq 1$, the size of $U_D$ and the size of $U_{ND}$ are at most $\left\lfloor\frac{1-\epsilon}{\epsilon^2}\right\rfloor=O(\frac{1}{\epsilon^2})=O(1)$.
In an assignment pattern of load at most $1$, there are at most $\frac{1}{\epsilon}$ jobs in $J_1$ assigned to each core and thus we get that the number of possible assignment patterns is at most $O((\frac{1}{\epsilon})^{(\frac{1}{\epsilon^2})})=O(1)$.
Since the number of assignment patterns we consider is $O(1)$, it follows that the number of possible configurations of cores is $O(c^{O(1)})$.
\end{proof}

Since our algorithm spends a polynomial time per configuration of cores then Lemma \ref{ptas_poly} implies that our algorithm runs in polynomial-time.

\begin{lemma}\label{take_smallest}
For any configuration of cores $T$ there is an assignment matching $T$ of minimal cache usage among all assignments matching $T$, that for each $q\in Q$ assigns the $\sum\limits_{i=s(q)}^{e(q)} \alpha_{i}$ least cache demanding jobs in $J_1(q)$ (i.e. the set of jobs Y(q)) to the cores $s(q),\ldots,e(q)$.
\end{lemma}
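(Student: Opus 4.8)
The plan is to begin with an assignment $\mathcal A$ of the jobs of $J_1$ to the cores that matches $T$ and has minimum total cache usage among all assignments matching $T$ (there are only finitely many assignments matching $T$, and for the valid configurations the algorithm considers at least one exists, so the minimum is attained), and then to repeatedly apply an exchange argument that drives $\mathcal A$ toward the claimed structure without increasing the cache used or destroying the match with $T$.

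First I would isolate the consequence of the correlation hypothesis that drives everything. If $q=\epsilon+u\epsilon^2$ and $q'=\epsilon+u'\epsilon^2$ are rounded-down loads with $q<q'$ then $u'\ge u+1$, so any demanding job $j$ with rounded-down load $q$ satisfies $a_j<\epsilon+(u+1)\epsilon^2\le q'\le a_{j'}$ for any demanding job $j'$ with rounded-down load $q'$, whence $x_j<x_{j'}$ by the correlation. Consequently, on any core $i$ carrying at least one demanding job of $J_1$, the most cache demanding such job is one of the $\alpha_i$ demanding jobs of rounded-down load $q_i$; in particular the cache allocated to core $i$, and thus the total cache used by $\mathcal A$, is a function only of which demanding jobs of rounded-down load $q$ are placed on the cores of $C_q:=\{s(q),\dots,e(q)\}$, taken over all $q\in Q$ (cores with no demanding $J_1$-job contribute $0$). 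Moreover, since $\mathcal A$ matches $T$, the cores of $C_q$ together receive exactly $\sum_{i=s(q)}^{e(q)}\alpha_i=|Y(q)|$ demanding jobs of rounded-down load $q$, all of them from $J_1(q)$, while the remaining $|J_1(q)|-|Y(q)|$ jobs of $J_1(q)$ sit on cores $i$ with $q_i>q$, where by the above they do not affect the cache allocation.

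The exchange step is the following. Suppose that for some $q\in Q$ the set of rounded-down-load-$q$ jobs placed on $C_q$ is not exactly $Y(q)$; then there is a job $j\in J_1(q)\setminus Y(q)$ on a core of $C_q$ and a job $j'\in Y(q)$ on a core $i'$ with $q_{i'}>q$, and by the choice of $Y(q)$ we have $x_{j'}\le x_j$. Swapping the host cores of $j$ and $j'$ preserves $T$ (both have rounded-down load $q$, so every core keeps its counts of jobs per rounded-down load and its value $q_i$, hence every $C_{q''}$ is unchanged), leaves the allocation of $i'$ unchanged (it is still forced by a job of rounded-down load $q_{i'}>q$, strictly more cache demanding than $j$), and does not increase the allocation of the core of $C_q$ losing $j$ and gaining $j'$, since its allocation is the maximum cache demand among its $\alpha_i$ rounded-down-load-$q$ jobs and $x_{j'}\le x_j$. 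So the new assignment still matches $T$ and uses no more cache, hence is again of minimum cache usage; and the integer potential $\Phi=\sum_{q\in Q}\bigl|\{\,j\in Y(q):j\text{ is placed on }C_q\,\}\bigr|$ has strictly increased while remaining bounded by $\sum_{q\in Q}|Y(q)|\le|J_1|$. After finitely many swaps we reach a minimum-cache assignment matching $T$ in which, for every $q\in Q$, the rounded-down-load-$q$ jobs on the cores $s(q),\dots,e(q)$ are exactly the jobs of $Y(q)$, which is the assertion of the lemma.

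The part to be careful about is the bookkeeping of the second paragraph: pinning down precisely that the total cache used is a function only of the rounded-down-load-$q$ jobs sitting on $C_q$ (for each $q$) and is monotone in their cache demands. This is exactly where the correlation is used — it guarantees that a job of strictly smaller rounded-down load is strictly less cache demanding, so that relocating such a job onto a core whose allocation is already dictated by a larger rounded-down load never raises that core's allocation, which is what legitimizes the swap. The verification that the swap respects $T$, and the termination argument via $\Phi$, are then routine.
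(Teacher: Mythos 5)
Your proposal is correct and follows essentially the same route as the paper's proof: start from a minimum-cache assignment matching $T$, locate a job $j\in J_1(q)\setminus Y(q)$ on a core of $[s(q),e(q)]$ together with a job $j'\in Y(q)$ on a core of strictly larger rounded-down load, swap them, and argue via the correlation hypothesis that neither core's cache allocation increases. Your additions --- spelling out why the correlation forces strict monotonicity of cache demands across rounded-down load classes, and the potential $\Phi$ guaranteeing termination of the repeated exchange --- only make explicit what the paper leaves implicit.
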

\begin{proof}
Consider a job assignment $S$ of minimal cache usage that matches $T$.
Assume that for some $q\in Q$ assignment $S$ does not assign all the jobs in $Y(q)$ to the cores $s(q),\ldots,e(q)$.
So there is a core $i\in [s(q),e(q)]$ that runs a job $j\in J_1(q)\setminus Y(q)$.

Since $S$ assigns $\sum\limits_{i=s(q)}^{e(q)} \alpha_{i}$ jobs from $J_1(q)$ to cores $s(q),\ldots, e(q)$ and since jobs in $J_1(q)$ cannot be assigned to cores $i'>e(q)$, it follows that there is a core $i'<s(q)$ and a job $j'\in Y(q)$ such $S(j')=i'$.
Suppose we switch the assignment of jobs $j$ and $j'$ and run job $j$ on core $i'$ and job $j'$ on core $i$. Let $S'$ denote the resulting assignment.
The cache required by core $i'$ does not increase, as it runs demanding jobs of rounded down load greater than $q$ and therefore of cache demand greater than the cache demand of job $j$.
By the choice of the jobs $j$ and $j'$ we know that $x_{j'} \leq x_{j}$ and therefore the cache required by core $i$ in $S'$ can only decrease compared to the cache required by core $i$ in $S$.
It follows that the cache usage of $S'$ is at most that of $S$ and since $S$ is of the minimal cache usage of all assignments that match $T$, we get that the cache usage of $S'$ must be the same as of $S$.

By repeating this argument as long as there is a job that violates Lemma \ref{take_smallest}, we obtain an assignment as required.
\end{proof}

\begin{lemma}\label{by_order}
For any configuration of cores $T$, Let $S$ be an assignment matching $T$ such that  for each $q\in Q$ and for each core $i\in[s(q),e(q)]$,
if we index the jobs in $Y(q)$ from the most cache demanding to the least cache demanding, assignment $S$ assigns to core $i$ the jobs in $Y(q)$ of indices $\sum\limits_{j=s(q)}^{i-1} \alpha_j +1,\ldots, \sum\limits_{j=s(q)}^{i} \alpha_j$.
Assignment $S$ is of minimal cache usage, among all assignments matching $T$.
\end{lemma}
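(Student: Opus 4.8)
The plan is to reduce the statement, via Lemma \ref{take_smallest}, to a purely combinatorial fact about partitioning a fixed set of jobs into groups of prescribed sizes so as to minimize the sum of the per‑group maximum cache demands, and then to prove that fact by a rearrangement (domination) argument. First I would invoke Lemma \ref{take_smallest}: there is an assignment of minimal cache usage matching $T$ that, for every $q\in Q$, assigns exactly the set $Y(q)$ to the cores $s(q),\ldots,e(q)$. Hence it suffices to show that $S$ has minimal cache usage among all assignments that match $T$ \emph{and} assign $Y(q)$ to $s(q),\ldots,e(q)$ for every $q$. For such an assignment the cache usage decomposes as a sum over $q\in Q$: the intervals $[s(q),e(q)]$ are pairwise disjoint, the sets $Y(q)$ are pairwise disjoint, and — using the correlation between load and cache demand together with the fact that $q$ is, by definition, the maximal rounded‑down load of a demanding job on any core $i\in[s(q),e(q)]$ — every other job on such a core (a non‑demanding job, or a demanding job of rounded‑down load $q'<q$, for which $a_{j'}<q'+\epsilon^2\le q\le a_j$ for every $j\in J_1(q)$ and hence $x_{j'}\le x_j$ by correlation) has cache demand no larger than the largest cache demand among the $\alpha_i$ jobs of $Y(q)$ it receives. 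Consequently the cache allocated to core $i\in[s(q),e(q)]$ is exactly that maximum, and the total cache usage is $\sum_{q\in Q}$ of the sum, over $i\in[s(q),e(q)]$, of that maximum, so each $q$ can be handled independently.

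For a fixed $q$, write $g=e(q)-s(q)+1$ and index the jobs of $Y(q)$ as $y_1,\ldots,y_{|Y(q)|}$ in non‑increasing order of cache demand; the group sizes are $\alpha_{s(q)}\ge\cdots\ge\alpha_{e(q)}$. The assignment $S$ puts $\{y_1,\ldots,y_{\alpha_{s(q)}}\}$ on core $s(q)$, the next $\alpha_{s(q)+1}$ jobs on core $s(q)+1$, and so on, so its $g$ group maxima are $y_{I_1},\ldots,y_{I_g}$ where $I_t=1+\sum_{r=1}^{t-1}\alpha_{s(q)+r-1}$; since $I_1<I_2<\cdots<I_g$ these maxima are already listed in non‑increasing order. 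I would then prove that for \emph{any} matching assignment putting $Y(q)$ on these cores, if $m_1\ge m_2\ge\cdots\ge m_g$ are its group maxima sorted in non‑increasing order, then $m_t\ge y_{I_t}$ for every $t$; summing over $t$ gives that $S$ attains the smallest per‑$q$ cost, and then summing over $q$ finishes the proof. The claim $m_t\ge y_{I_t}$ follows because any $t-1$ of the $g$ groups together contain at most $\alpha_{s(q)}+\cdots+\alpha_{s(q)+t-2}=I_t-1$ jobs, so the $I_t$ most cache‑demanding jobs $y_1,\ldots,y_{I_t}$ cannot all lie in fewer than $t$ groups; hence at least $t$ groups each contain some $y_k$ with $k\le I_t$, and each such group has maximum $\ge y_k\ge y_{I_t}$.

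I expect the genuinely delicate step to be the per‑$q$ decomposition of the cache usage — specifically, justifying that the cache demand of a core $i\in[s(q),e(q)]$ is governed entirely by the jobs of $Y(q)$ assigned to it, which is exactly where the correlation hypothesis is needed and where one must argue that demanding jobs of smaller rounded‑down load, non‑demanding jobs, and (not yet present) $J_2$ jobs cannot dominate the cache requirement. Once the problem is reduced to the "sort the items, sort the group sizes, take greedy prefixes" setting, the domination argument above is routine and no further obstacle remains; combining it with Lemma \ref{take_smallest} yields that $S$ is of minimal cache usage among all assignments matching $T$.
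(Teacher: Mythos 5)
Your proof is correct, but it takes a genuinely different route from the paper's. Both arguments start by invoking Lemma \ref{take_smallest} to restrict attention to assignments that place exactly $Y(q)$ on the cores $s(q),\ldots,e(q)$ for every $q$. From there the paper proceeds by contradiction with a double exchange argument: it picks a minimal-cache assignment $S'$ satisfying Lemma \ref{take_smallest} that agrees with $S$ on the longest prefix of $Y(q)$ (in demand order), locates the first job $j$ on which they disagree, and performs a swap (with a case analysis on whether core $S(j)$ already carries an earlier job of $Y(q)$) to produce an $S''$ of no greater cache usage agreeing with $S$ on a longer prefix, contradicting the choice of $S'$. You instead make explicit the per-$q$ decomposition of the total cache usage into sums of group maxima --- using the correlation hypothesis to argue that on each core $i\in[s(q),e(q)]$ the most demanding job is one of its $\alpha_i$ jobs from $Y(q)$ --- and then prove a direct majorization lower bound: for any partition of the sorted list $y_1\ge\cdots\ge y_{|Y(q)|}$ into groups of the prescribed sizes, the sorted group maxima satisfy $m_t\ge y_{I_t}$, by the counting observation that any $t-1$ groups hold at most $I_t-1$ jobs. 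Your route buys a non-contradictory, quantitative statement (every matching assignment is dominated term by term), isolates exactly where the correlation hypothesis enters (the paper uses it only implicitly when asserting that swaps do not increase a core's cache), and avoids the somewhat delicate extremal choice of $S'$; the paper's exchange argument, on the other hand, requires no decomposition claim and reuses the machinery already set up for Lemma \ref{take_smallest}. Both proofs are valid.
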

\begin{proof}
Assume to the contrary that assignment $S$ is not of minimal cache usage, among all assignments matching $T$.
Let $S'$ be an assignment whose existence is guaranteed by Lemma \ref{take_smallest}.
Since $S$ and $S'$ have different cache usages, there exists $q\in Q$ such that $S$ and $S'$ differ on their assignment of the jobs in $Y(q)$.
We index the jobs in $Y(q)$ from the most cacn demanding to the least cache demanding.
Let $j\in Y(q)$ be the first job (most cache demanding) in $Y(q)$ such that $S(j)\neq S'(j)$.
We select $S'$ such that it maximizes $j$ among all assignments satisfying Lemma \ref{take_smallest} that disagree with $S$ on the assignment of the jobs in $Y(q)$.

Denote $i=S(j)$ and $i'=S'(j)$.
Since $S$ and $S'$ both assign $\alpha_i$ jobs from $Y(q)$ to core $i$ and since $j$ is the first job in $Y(q)$ on which $S$ and $S'$ disagree, then there is a job $j_2\in Y(q)$, $j_2>j$ such that $S'(j_2)=i$.

We first assume that there is a job $j_1<j$ such that $S(j_1)=i$.
Let $S''$ be the assignment such that $S''(j)=i$, $S''(j_2)=i'$ and for any job $h\not \in \{j_,j_2\}$, $S''(h)=S'(h)$.
The cache required by core $i'$ in $S''$ is at most the cache required by core $i'$ in $S'$, since $j<j_2$.
Since $j_1<j$ and $S(j_1)=i$, we know that $S'(j_1)=i$ and also $S''(j_1)=i$.
This implies that in $S''$, core $i$ requires the same amount of cache as in $S'$.
It follows that $S''$ is also an assignment of minimal cache usage, and that it satisfies Lemma \ref{take_smallest}. Since $S''(j)=S(j)$, we get a contradiction to the way we selected $S'$.
Thus $S$ is  of minimal cache usage, among all assignments matching $T$.

We now assume that $j$ is the first job in $Y(q)$ such that $S(j)=i$.
Let $S''$ be the following assignment.
Any job that is assigned by $S'$ to a core different than $i$ and $i'$ is assigned by $S''$ to the same core.
For any job $x$ such that $S'(x)=i'$,  $S''(x)=i$.
All the $\alpha_{i'}$ least cache demanding jobs assigned by $S'$ to core $i$ are assigned by $S''$ to core $i'$.
Note that $\alpha_i \geq \alpha_{i'}$ and therefore assignment $S''$ is well defined.

Since $S$ and $S'$ agrees on the assignment of jobs $\hat{j} < j$ in $Y(q)$ and assign them to cores $l<i$, then job $j$ is the most cache demanding job assigned to cores $l\geq i$ by $S'$ and $S''$.
Therefore in assignment $S'$, core $i'$ requires $x_j$ cache and in assignment $S''$ core $i$ requires $x_j$ cache.
In assignment $S''$, core $i'$ is assigned a set of jobs that is a subset of the jobs assigned to core $i$ by $S'$.
Thus the cache required by core $i'$ in assignment $S''$, is at most the cache required by core $i$ in assignment $S'$.
It follows that $S''$ is also an assignment of minimal cache usage, and that it satisfies Lemma \ref{take_smallest}.
This contradicts the choice of $S'$ and concludes the proof that assignment $S$ is of minimal cache usage, among all assignments matching $T$.
\end{proof}

\begin{corollary}\label{actual_opt_ident}
For each configuration of cores $T$ our algorithm builds an actual assignment of minimal cache usage of the jobs in $J_1$ that matches $T$.
\end{corollary}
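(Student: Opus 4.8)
The plan is to derive the corollary directly from Lemma~\ref{by_order}. First I would unwind the description of the algorithm's first phase and verify that the (complete) assignment it produces for the jobs of $J_1$ is exactly an assignment $S$ of the kind considered in Lemma~\ref{by_order}. Concretely, the algorithm walks over $Q$ in decreasing order and, for a fixed $q$, hands out the jobs of $Y(q)$ to the cores $s(q),\ldots,e(q)$ \emph{in this order}, giving each core $i$ in turn the $\alpha_i$ most cache demanding among the still-unassigned jobs of $Y(q)$. Since the cores are visited in the order $s(q), s(q)+1,\ldots,e(q)$ and $Y(q)$ is exhausted greedily from the most to the least cache demanding job, core $i$ receives precisely the jobs of $Y(q)$ whose ranks (from most to least cache demanding) lie in $\sum_{j=s(q)}^{i-1}\alpha_j+1,\ldots,\sum_{j=s(q)}^{i}\alpha_j$. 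This is exactly the distribution prescribed in the statement of Lemma~\ref{by_order}, so the algorithm's assignment is an admissible $S$ for that lemma.

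Second, I would note that the algorithm fixes the identities of all the remaining jobs of the configuration $T$ (the non-demanding jobs of $J_1$ and the demanding jobs whose rounded-down load is not maximal on their core) arbitrarily, and that this freedom is already accounted for by the quantification in Lemma~\ref{by_order}: the lemma asserts that \emph{any} assignment matching $T$ that splits each $Y(q)$ across $s(q),\ldots,e(q)$ in the prescribed index pattern attains the minimum cache usage over all assignments matching $T$. Hence whatever completion the algorithm chooses, the resulting assignment of $J_1$ has minimal cache usage among assignments matching $T$, which is the claim. If one wants to see directly why the completion is irrelevant, it is because by the correlation hypothesis a demanding job of non-maximal rounded-down load on a core has actual load strictly below $q_i$ and hence cache demand no larger than that of the $Y(q_i)$-jobs on the core, while non-demanding jobs have zero cache demand; so the cache allocated to each core is determined entirely by the $Y(q)$-assignment. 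But this reasoning is already subsumed by Lemma~\ref{by_order}.

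I do not expect a real obstacle: the corollary is essentially a restatement of Lemma~\ref{by_order} once the bookkeeping between the algorithm's loop order and the index intervals in the lemma is made explicit. The only point that needs care is confirming that the two orderings of the cores agree -- that the indexing $q_i \ge q_{i+1}$, refined within each block $[s(q),e(q)]$ by $\alpha_{s(q)} \ge \cdots \ge \alpha_{e(q)}$, is exactly the order in which the algorithm processes the cores while distributing $Y(q)$ -- so that ``the $\alpha_i$ most cache demanding unassigned jobs go to core $i$'' really does realize the prescribed prefix split of $Y(q)$.
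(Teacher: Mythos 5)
Your proposal is correct and takes essentially the same route as the paper, whose entire proof of this corollary is the one-line observation that the algorithm's output is an assignment $S$ as in the statement of Lemma~\ref{by_order}. You simply make explicit the bookkeeping (the match between the algorithm's core-processing order and the index intervals in the lemma, and the irrelevance of the arbitrary completion) that the paper leaves implicit.
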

\begin{proof}
The assignment returned by our algorithm is an assignment $S$, as in the statement of Lemma \ref{by_order}.
\end{proof}

\begin{lemma}\label{ptas_aprx_dcsn}
Consider an instance of the correlative single load minimal cache demand problem.
If there is a cache partition and job assignment that schedules the jobs on $c$ cores, uses at most $K$ cache and has a makespan of at most $1$ then our algorithm finds a cache partition and job assignment that schedules the jobs on $c$ cores, uses at most $K$ cache and has a makespan of at most $(1+2\epsilon)$.
\end{lemma}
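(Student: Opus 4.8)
The plan is to take a hypothetical optimal solution $Y = (p^*, S^*)$ of makespan at most $1$ using at most $K$ cache, and show that our algorithm's enumeration over valid configurations of cores will encounter a configuration at least as good, and then argue that the way we assign actual jobs and handle $J_2$ only loses an additive $2\epsilon$ in makespan while never exceeding $K$ cache. First I would extract from $Y$ a configuration of cores $T^*$: for each core, record how many demanding/non-demanding jobs of $J_1$ of each rounded-down load value $\epsilon + u\epsilon^2$ it runs; since $Y$ has makespan $\le 1$, the rounded-down load of every core is $\le 1$, so $T^*$ is a valid configuration that our algorithm enumerates. The key point is that rounding loads down (rather than up) ensures $T^*$ is a legal pattern, and the loss from rounding is bounded: each core runs at most $\frac{1}{\epsilon}$ jobs from $J_1$ (each of load $\ge \epsilon$), and each such job's true load exceeds its rounded-down load by less than $\epsilon^2$, so the true load on any core exceeds its rounded-down load by less than $\frac{1}{\epsilon}\cdot \epsilon^2 = \epsilon$.

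Next I would compare the cache our algorithm uses on configuration $T^*$ against the cache $Y$ uses. By Corollary \ref{actual_opt_ident}, for configuration $T^*$ the algorithm produces an assignment of $J_1$ matching $T^*$ of \emph{minimal} cache usage among all assignments matching $T^*$; since $Y$ restricted to $J_1$ is one such assignment, the algorithm's $J_1$-assignment uses at most as much cache as $Y$ uses (noting that in $Y$ the cache allocated to a core is at least the max cache demand of any job, demanding or not, on it, hence at least that of its demanding $J_1$-jobs). Then for the demanding jobs of $J_2$: these are processed greedily in decreasing cache-demand order, placed first onto cores that already have cache allocated (which can absorb them for free up to actual load $1+\epsilon$) and then onto the not-yet-allocated cores in increasing rounded-down-load order. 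I would argue, by an exchange/monotonicity argument mirroring the proof of Theorem \ref{joint2}, that the cache demand our algorithm ends up needing on each core is dominated by the cache $Y$ allocates to the correspondingly-ranked core, so the total stays $\le K$; the ordering of cores by rounded-down load is what makes this prefix-domination argument go through, exactly as the sorted-by-cache ordering did in the $2$-approximation proof.

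Finally I would check the makespan bound. A core's actual load is its $J_1$ load plus whatever $J_2$ jobs (demanding then non-demanding) got added. The algorithm stops adding $J_2$ jobs to a core once its actual load exceeds $1+\epsilon$, so the only way to exceed $1+\epsilon$ is by overshooting with one last $J_2$ job, which has load $< \epsilon$ (since all $J_2$ jobs have load $< \epsilon$), giving actual load $< 1 + 2\epsilon$. I would need to verify that every $J_2$ job actually gets placed: for demanding $J_2$ jobs, the greedy cannot get stuck because if it could not place a job anywhere then every core has actual load $> 1+\epsilon$, but then the total load exceeds $c(1+\epsilon)$, contradicting that $Y$ (and the matched configuration on $J_1$, plus the true loads of all jobs) fit in makespan near $1$ — here is where I would use that the total true load is at most $c$ and that the rounded-down total on the configuration does not exceed it; for non-demanding $J_2$ jobs the same volume argument applies and they have no cache constraint. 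The main obstacle I anticipate is the cache-domination argument for the demanding $J_2$ jobs: one must carefully set up the two orderings (cores by rounded-down load in our solution, cores by cache allocation in $Y$) and show an inductive prefix inequality $z(i) \le p^*(i)$ analogous to Theorem \ref{joint2}, while correctly accounting for the cores that were "pre-allocated" in the $J_1$ phase versus the fresh cores — reconciling these two groups into a single sorted comparison is the delicate part.
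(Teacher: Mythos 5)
Your proposal matches the paper's proof in all essentials: extract the configuration $T_A$ induced by the optimal solution on $J_1$, invoke Corollary~\ref{actual_opt_ident} to bound the cache used for $J_1$, bound the rounding loss by $\frac{1}{\epsilon}\cdot\epsilon^2=\epsilon$ plus one overshooting $J_2$ job of load below $\epsilon$, use the total-volume argument to show the greedy never gets stuck, and use a prefix-volume domination argument for the demanding jobs of $J_2$. The one step you flag as delicate is handled in the paper not by sorting $Y$'s cores by cache and proving $z(i)\le p^*(i)$, but by pairing each of your cores with the optimal core having the \emph{same assignment pattern} and comparing ``free space'' ($(1+\epsilon)$ minus the actual $J_1$ load in your solution versus $1$ minus the actual $J_1$ load in the optimal one), while the correlation property guarantees that demanding $J_2$ jobs never increase the cache of a core already holding a demanding $J_1$ job --- a bookkeeping detail within the same strategy you describe.
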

\begin{proof}
Let $A$ be a solution of makespan at most $1$ with $c$ cores and $K$ cache, whose existence is assumed by the lemma.
Let $T_A$ be the configuration of the cores corresponding to the assignment of the jobs in $J_1$ by solution $A$ and assume our algorithm currently considers $T_A$ in its enumeration.

We show that our algorithm succeeds in assigning all the jobs to $c$ cores. Let's assume to the contrary that it fails. It can only fail if all cores are assigned an actual load of more than $(1+\epsilon) $ and there are still remaining jobs to assign. This indicates that the total volume to assign is larger than $c(1+\epsilon)$, which contradicts the fact that assignment $A$ is able to assign the jobs to $c$ cores with makespan at most $1$.

Let $S$ denote the assignment of all jobs on $c$ cores that out algorithm returns when it considers $T_A$. 
We know that $S$ matches  $T_A$ for jobs in $J_1$.
We now show that in $S$ each core has an actual load of at most $1+2\epsilon$.
When we restrict $S$ to $J_1$ we know that the rounded down load on each core is at most $1$ and that each core has at most $\frac{1}{\epsilon}$ jobs from $J_1$ assigned to it.
Since the actual load of any job in $J_1$ is at most $\epsilon^2$ larger than its rounded down load, we get that if we restrict assignment $S$ to $J_1$, the actual load on each core is at most $1+ \frac{\epsilon^2}{\epsilon} = 1+ \epsilon$.
The way our algorithm assigns the jobs in $J_2$ implies that the actual load of a core in assignment $S$ can only exceed $1+\epsilon$ by the load of a single job from $J_2$. Therefore the actual load on any core in assignment $S$ is at most $1+2\epsilon$.

We show that assignment $S$ uses at most $K$ cache.
Cache is allocated by our algorithm in two steps: when it decides on the actual assignment of the jobs in $J_1$ that matches $T_A$ and when it assigns the demanding jobs in $J_2$.
Lemma \ref{actual_opt_ident} shows that $S$ restricted to $J_1$ is of minimal cache usage of all assignments matching $T_A$ and thus uses at most the same amount of cache as assignment $A$ restricted to $J_1$.

We show that when we also take into account the demanding jobs in $J_2$, $S$ uses at most the same amount of cache as $A$.
Assume the cores in $S$ are indexed according to the order in which our algorithm assigns demanding jobs from $J_2$ to them.
Assume the cores in $A$ are indexed such that core $i$ in $S$ and core $i$ in $A$ have the same assignment pattern.
For any core in $S$, we say that its \textit{free space} is $(1+\epsilon)$  minus the sum of the actual loads of all jobs in $J_1$ assigned to it by $S$.
For any core in $A$, we say that its free space is $1$ minus the sum of the actual loads of all jobs in $J_1$ assigned to it by $A$.
For any $i$, core $i$ in $S$ has the same rounded down load as core $i$ in $A$ and the actual load of core $i$ in $S$ is at most $\epsilon$ larger than the actual load of core $i$ in $A$. 
Therefore, by the definition of free space, the free space of core $i$ in solution $S$ is at least the free space of core $i$ in solution $A$.

Let $i_2$ be the number of cores in $S$ that have a demanding job from $J_1$ assigned to them.
When our algorithm assigns jobs in $J_2$ to a core $i\leq i_2$,  it does not increase the cache required by core $i$ since any job in $J_1$ is at least as cache demanding as any job in $J_2$.
It follows that the total cache required by cores $1,\ldots,i_2$ in $S$ is at most the total cache required by cores $1,\ldots,i_2$ in A. 

Let $i>i_2$ be a core in $S$  whose cache demand is determined by a job from $J_2$.
We now show that core $i$ in $S$ requires no more cache than core $i$ in $A$. This will conclude the proof that $S$ uses at most $K$ cache. 

The total load of demanding jobs in $J_2$ that $S$ assigns to cores $1,\ldots,i-1$ is at least the sum of the free space of these cores, since our algorithm exceeds an actual load of $1+\epsilon$ on each core before moving the next. The sum of the free space of cores $1,\ldots,i-1$ in $S$ is at least the sum of the free space of the cores $1,\ldots,i-1$ in $A$, which in turn is an upper bound on the total load of demanding jobs from $J_2$ that are assigned in $A$ to cores $1,\ldots,i-1$.
Since our algorithm assigns the demanding jobs in $J_2$ in a non-increasing order of their cache demand we get that the cache demand of the most cache demanding job from $J_2$ on core $i$ in $S$ is at most the cache demand of the most cache demanding job in $J_2$ on core $i$ in $A$. 
\end{proof}

Lemma \ref{ptas_aprx_dcsn} shows that for any $\epsilon'>0$, we have a polynomial time $(1+2\epsilon')$-approximate decision algorithm.
Given $\epsilon>0$, by applying our algorithm with $\epsilon'=\epsilon/2$ we obtain a polynomial time $(1+\epsilon)$-approximate decision algorithm.

By using a binary search similar to the one in Lemma \ref{bin_opt} we obtain an $(1+\epsilon)$-approximation for the optimization problem, using our $(1+\epsilon)$-approximate decision algorithm.
To conclude, we have proven the following theorem.

\begin{theorem}
There is a polynomial time approximation scheme for the joint cache partition and job assignment problem, when the jobs have a correlative single load and minimal cache demand.
\end{theorem}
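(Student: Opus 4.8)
The plan is to assemble the machinery built in this section into a PTAS in the standard two-stage way: first a $(1+\epsilon)$-approximate decision algorithm, then a binary search converting it into a $(1+\epsilon)$-approximation for the optimization problem. For the running time, observe that the algorithm enumerates over all valid configurations of cores and performs only a polynomial amount of work per configuration $T$: for fixed $T$ it builds the minimal-cache actual assignment of the jobs in $J_1$ matching $T$ (sorting each $J_1(q)$ by cache demand and distributing it greedily to the cores $s(q),\dots,e(q)$), then greedily assigns the demanding and then the non-demanding jobs of $J_2$. By Lemma \ref{ptas_poly} the number of valid configurations is $O(c^{O(1)})$, so the whole procedure is polynomial.

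For correctness and the approximation guarantee, first note that by Corollary \ref{actual_opt_ident} (which rests on the exchange arguments of Lemmas \ref{take_smallest} and \ref{by_order}) the $J_1$-assignment the algorithm produces for each configuration $T$ indeed attains the minimum cache usage over all assignments matching $T$. Now if the instance admits a solution $A$ of makespan at most $1$ using $c$ cores and $K$ cache, then when the enumeration reaches the configuration $T_A$ induced by $A$'s assignment of $J_1$, Lemma \ref{ptas_aprx_dcsn} guarantees that the algorithm outputs an assignment of all jobs on $c$ cores of makespan at most $1+2\epsilon$ using at most $K$ cache. Hence the procedure is a $(1+2\epsilon)$-approximate decision algorithm: if it never produces a solution of makespan at most $1+2\epsilon$ with at most $K$ cache over all configurations, it correctly reports that no solution of makespan at most $1$ exists. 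Running it with parameter $\epsilon' = \epsilon/2$ yields a $(1+\epsilon)$-approximate decision algorithm.

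To pass from the decision version to the optimization problem, I would invoke the binary-search reduction of Lemma \ref{bin_opt} with $\alpha = 1+\epsilon$ and $\beta = 1$: after clearing denominators so that all loads are integral, run the $(1+\epsilon)$-approximate decision algorithm at the midpoints of a search interval initialized with upper bound $\sum_j a_j$ and lower bound $\lceil \frac1c\sum_j a_j\rceil$. This produces, in polynomial time, a solution of makespan at most $(1+\epsilon)\,m_{opt}$ using $K$ cache, and since $\epsilon>0$ is arbitrary this is a PTAS.

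The theorem statement is essentially a packaging step; the substantive content lies in the lemmas it cites, and the step I expect to be the real obstacle (if one were proving everything from scratch) is the cache bound inside Lemma \ref{ptas_aprx_dcsn}. There one must show that, after fixing the minimal-cache $J_1$-assignment matching $T_A$, greedily assigning the demanding jobs of $J_2$ in non-increasing order of cache demand over the cores ordered by rounded-down load never drives the total cache above $K$. This relies on a core-by-core ``free space'' monotonicity comparison against $A$ — each core in the algorithm's solution retains at least as much free space relative to the relaxed bound $1+\epsilon$ as the corresponding core in $A$ has relative to $1$ — combined with the minimality from Corollary \ref{actual_opt_ident}; establishing that minimality in turn is exactly the purpose of the exchange arguments in Lemmas \ref{take_smallest} and \ref{by_order}.
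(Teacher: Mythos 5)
Your proposal is correct and follows essentially the same route as the paper: the theorem is indeed just a packaging of Lemma \ref{ptas_aprx_dcsn} (the $(1+2\epsilon')$-approximate decision algorithm, run with $\epsilon'=\epsilon/2$), the polynomiality from Lemma \ref{ptas_poly}, and the binary-search conversion of Lemma \ref{bin_opt}. Your identification of the cache bound in Lemma \ref{ptas_aprx_dcsn} (via the free-space comparison and the minimality from Corollary \ref{actual_opt_ident}) as the substantive step also matches the paper's structure.
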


\section{Step functions with a constant number of load types}\label{sec_const}

Empirical studies \cite{Drepper} suggest that the the load of a job, as a function of available cache, is often similar to a step-function.
The load of the job drops at a few places when the cache size exceeds the working-set required by some critical part.
In between these critical cache sizes the load of the job decreases negligibly with additional cache. The problems we consider in this section are motivated by this observation.

Formally, each job $j\in J$ is described by two load values $l_j< h_j$ and a \textit{cache demand} $x_j\in\{0,\ldots,K\}$. If job $j$ is running on a core with at least $x_j$ cache then it takes $l_j$ time and otherwise it takes $h_j$ time. If a job is assigned to a core that meets its cache demand, $x_j$, we say that it is \textit{assigned as a small job}. If it is assigned to a core that doesn't meet its cache demand we say that it is \textit{assigned as a large job}.
At first we study the case where the number of different load types is constant and then we show a polynomial time scheduling algorithm for the corresponding special case of the ordered unrelated machines scheduling problem.

Let $L=\{ l_j \mid j \in J\}$ and $H=\{h_j \mid j\in J\}$, the sets of small and large loads, respectively.  Here we assume that $|L|$ and $|H|$ are both bounded by a constant.

For each $\alpha\in L$, $\beta\in H$, we say that job $j$ is of \textit{small type $\alpha$} if $l_j=\alpha$ and we say that job $j$ is of \textit{large type $\beta$} if $h_j=\beta$. If job $j$ is of small type $\alpha$ and large type $\beta$ we say that it is of \textit{load type $(\alpha,\beta)$}. Note that jobs $j_1,j_2$ of the same load type may have different cache demands $x_{j_1}\neq x_{j_2}$ and thus if we take cache demands into account the number of different job types is $\Omega(K)$ and not  $O(1)$.

We reduce this problem to the single load minimal cache demand problem studied in Section \ref{sec_slmc}.
For each load type $(\alpha,\beta)$, we enumerate on the number, $x(\alpha,\beta)$, of the jobs of load type $(\alpha,\beta)$ that are assigned as small jobs. For each setting of the values $x(\alpha,\beta)$ for all load types, we create an instance of the single load minimal cache demand problem in which each job corresponds to a job in our original instance.
For each job $j$ which is one of the $x(\alpha,\beta)$ most cache demanding jobs of load type $(\alpha,\beta)$ we create a job of load $\beta$ and cache demand $0$. For each job $j$ of load type $(\alpha,\beta)$ which is not one of the $x(\alpha,\beta)$ most cache demanding job of this load type, we create a job of load $\alpha$ and cache demand $x_j$.
We solve each of the resulting instances using any algorithm for the single load minimal cache demand problem presented in Section \ref{sec_slmc}, and choose the solution with the minimal makespan. We transform this solution back to a solution of the original instance, by replacing each job with its corresponding job in the original instance. Note that this does not affect the makespan or the cache usage.

\begin{lemma}
Given a polynomial time $\alpha$-approximation algorithm for the single load minimal cache demand problem that uses at most $\beta K$ cache, the reduction described above gives a polynomial time $\alpha$-approximation algorithm for the problem  where job loads are step functions with a constant number of load types, that uses at most $\beta K$ cache.
\end{lemma}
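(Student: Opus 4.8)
The plan is to verify two things: that the procedure runs in polynomial time, and that for at least one of the enumerated guesses the resulting single load minimal cache demand instance ``captures'' an optimal solution of the step-function instance, so that the assumed $\alpha$-approximation applied to it yields (after translating back) a step-function solution of makespan at most $\alpha$ times the optimum using at most $\beta K$ cache.

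First I would bound the running time. Since $|L|$ and $|H|$ are constants there are only $O(1)$ load types $(\alpha,\beta)$, and for each type the guessed quantity $x(\alpha,\beta)$ ranges over $\{0,\ldots,n\}$; hence the total number of guesses, and therefore the number of single load minimal cache demand instances constructed, is at most $(n+1)^{|L|\cdot|H|}=n^{O(1)}$. Each instance is built in polynomial time, the assumed algorithm runs in polynomial time on it (using Lemma~\ref{bin_opt} to go from the decision version to the optimization version), and translating a solution back is trivial, so the overall procedure is polynomial.

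The heart of the argument, and the step I expect to be the main obstacle, is an exchange argument showing that some optimal step-function solution has the monotone structure assumed by the reduction. Fix an optimal solution $(p,S)$ with makespan $M$ using $K$ cache. I claim that, without loss of generality, for every load type $(\alpha,\beta)$ the jobs of that type assigned as \emph{large} are exactly the most cache demanding jobs of that type. Suppose not: some more cache demanding job $j$ of type $(\alpha,\beta)$ is assigned as a small job while a less cache demanding job $j'$ of the same type is assigned as a large job, so $x_{j'}\le x_j$. Swap their cores. The core previously holding $j$ has at least $x_j\ge x_{j'}$ cache, so $j'$ is still assigned as a small job there and contributes $l_{j'}=\alpha=l_j$; the core previously holding $j'$ has less than $x_{j'}\le x_j$ cache, so $j$ is still assigned as a large job there and contributes $h_j=\beta=h_{j'}$. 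Thus the partition and every core's load are unchanged, and the swap strictly reduces the number of such inversions, so after finitely many swaps we reach an optimal solution of the required form. This solution corresponds exactly to the enumerated guess with $x(\alpha,\beta)$ equal to the number of type-$(\alpha,\beta)$ jobs it assigns as large: the single load instance built from that guess has the same $c$ and $K$, and reading each large-assigned job as a zero-demand job of load $\beta$ and each small-assigned job as a job of load $\alpha$ with demand $x_j$ turns $(p,S)$ into a feasible solution of that instance with makespan $M$ and cache $K$. Hence that single load instance has optimal makespan at most $M$.

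Finally, when the procedure processes this particular guess, the assumed $\alpha$-approximation returns a solution of the single load instance with makespan at most $\alpha M$ using at most $\beta K$ cache. Translating it back to the step-function instance keeps the same cache partition, and each zero-demand load-$\beta$ job becomes a type-$(\alpha,\beta)$ job that either lands on a core meeting its demand (contributing $l_j=\alpha<\beta$) or not (contributing $h_j=\beta$), while every other job keeps its contribution; so no core's load increases. We therefore obtain a step-function solution of makespan at most $\alpha M$ using at most $\beta K$ cache. Since the procedure outputs the minimum-makespan solution over all successful guesses, and every solution it considers uses at most $\beta K$ cache, the output is an $\alpha$-approximation using at most $\beta K$ cache, as claimed.
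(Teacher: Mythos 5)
Your proof is correct and follows essentially the same route as the paper: enumerate the counts $x(\alpha,\beta)$, argue that some optimal solution assigns as large jobs exactly the most cache demanding jobs of each type, and then apply the assumed algorithm to the matching single load instance. The only difference is that you spell out the swap argument justifying the ``w.l.o.g.'' step, which the paper merely asserts; that is a welcome bit of extra rigor but not a different approach.
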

\begin{proof}
Consider an instance of the joint cache partition and job assignment problem with load functions that are step functions with a constant number of load types.
Assume there is a solution $A$ for this instance of makespan $m$ that uses at most $K$ cache. Let $x(\alpha,\beta)$ be the number of jobs of load type $(\alpha,\beta)$ that are assigned in $A$ as large jobs. W.l.o.g we can assume that that for each $(\alpha,\beta)$, the $x(\alpha,\beta)$ jobs that are assigned as large jobs are the $x(\alpha,\beta)$ most cache demanding jobs of load type $(\alpha,\beta)$. The existence of $A$ implies that when our algorithm considers the same values for $x(\alpha,\beta)$, for each $(\alpha,\beta)$, it generates an instance of the single load cache demand problem that has a solution of makespan at most $m$ and at most $K$ cache.
Applying the $\alpha$-approximation algorithm for the single load minimal cache demand problem, whose existence in assumed by the lemma, on this instance yields a solution of makespan at most $\alpha m$ that uses at most $\beta K$ cache. This solution is transformed to a solution of our original instance without affecting the makespan or the cache usage.

Our algorithm runs in polynomial time since the size of the enumeration is $O(n^{|L||H|})$.
\end{proof}

\begin{corollary}
For instances in which the load functions are step functions with a constant number of load types there are polynomial time approximation algorithms that approximate the makespan up to a factor of $2$, $\frac{3}{2}$ and $\frac{4}{3}$ and use at most $K$, $2K$ and $3K$, respectively.
\end{corollary}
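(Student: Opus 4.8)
The plan is to read the corollary directly off the reduction Lemma above together with the three approximation algorithms for the single load minimal cache demand problem established in Section~\ref{sec_slmc}. Recall that Theorems~\ref{joint2}, \ref{proof_32} and \ref{proof_43}, combined with the binary-search reduction of Lemma~\ref{bin_opt}, yield polynomial time approximate optimization algorithms for the single load minimal cache demand problem with approximation factors $2$, $\frac{3}{2}$ and $\frac{4}{3}$, using $K$, $2K$ and $3K$ cache respectively. Instantiating the preceding Lemma with the three pairs $(\alpha,\beta)\in\{(2,1),(\frac{3}{2},2),(\frac{4}{3},3)\}$ transports each of these guarantees to the step-function setting with a constant number of load types, which is exactly the statement of the corollary.

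Concretely, I would fix one of the three single-load algorithms and run the reduction described before the Lemma: enumerate over all tuples $(x(\alpha,\beta))_{\alpha\in L,\ \beta\in H}$, a set of size $O(n^{|L||H|})$ and hence polynomial since $|L|$ and $|H|$ are constants; for each tuple build the corresponding single load minimal cache demand instance (the $x(\alpha,\beta)$ most cache demanding jobs of load type $(\alpha,\beta)$ become load-$\beta$ jobs with cache demand $0$, the remaining jobs of that type become load-$\alpha$ jobs with their original cache demand $x_j$); solve it with the chosen algorithm; and keep the best solution over all tuples, translating it back to the original instance by replacing each job with the job it corresponds to. The Lemma already establishes that this back-translation changes neither the makespan nor the cache usage, and that the optimal step-function solution's own split of jobs into small and large is one of the enumerated tuples, so the best returned solution is within the claimed factor.

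There is essentially no hard step: the corollary is an immediate specialization of the Lemma, applied three times. The only points worth double-checking — and all of them are handled inside the Lemma's proof — are (i) that restricting attention to tuples in which the $x(\alpha,\beta)$ jobs assigned as large jobs are the $x(\alpha,\beta)$ most cache demanding jobs of their load type loses no generality, since for a fixed count it is never worse to run the most cache demanding jobs as large jobs; (ii) that the enumeration remains polynomial, which is where $|L|,|H|=O(1)$ is used; and (iii) that the map back to the original instance is load- and cache-neutral. Carrying out the three instantiations then gives precisely the $2$-approximation with $K$ cache, the $\frac{3}{2}$-approximation with $2K$ cache, and the $\frac{4}{3}$-approximation with $3K$ cache.
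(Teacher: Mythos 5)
Your proposal is correct and matches the paper's intent exactly: the corollary is stated without a separate proof precisely because it is the immediate instantiation of the preceding reduction lemma with the three $(\alpha,\beta)$ pairs $(2,1)$, $(\tfrac{3}{2},2)$, $(\tfrac{4}{3},3)$ supplied by the single load minimal cache demand algorithms of Section~\ref{sec_slmc} together with the binary search of Lemma~\ref{bin_opt}. The three points you flag for verification are indeed the ones discharged in the lemma's proof, so nothing further is needed.
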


\subsection{The corresponding special case of ordered unrelated
machines}\label{apn_const} Recall that if we fix the cache partition
in an instance of the joint cache partition and job assignment
problem then we obtain an instance of the ordered unrelated machines
scheduling problem. For the case where the load functions are step
functions with a constant number of load types, the resulting
ordered unrelated machines instance can be solved in polynomial time
using the dynamic programming algorithm described below. The dynamic
program follows a structure similar to the one used in \cite{leah},
where polynomial time approximation schemes are obtained for several
variants of scheduling with restricted processing sets.

In this special case of the ordered unrelated scheduling problem job $j$ runs in time $l_j$ on some prefix of the machines, and in time $h_j$ on the suffix (we assume that the machines are ordered in non-increasing order of their strength/cache allocation).  For simplicity, we assume $x_j$ is given as the index of the first machine on which job $j$ has load $h_j$. If job $j$ takes the same amount of time to run regardless of cache, we assume $x_j=c+1$ and its load on any machine is $l_j$. As before, we assume that $L=\{ l_j \mid j \in J\}$ and $H=\{h_j \mid j\in J\}$ are of constant size.

We design a polynomial time algorithm that finds a job assignment that minimizes the makespan.
The algorithm does a binary search for the optimal makespan, as in Section \ref {apn_approx_des_to_opt}, using an algorithm for the following decision problem: Is there an assignment of the jobs $J$ to the $c$ machines with makespan at most $M$?
By scaling the loads, we assume that $M=1$.

For every machine $m$, we define $S_m=\{j\in J\;\mid\;x_j=m+1 \}$, the set of all jobs that are large on machine $m+1$ and small on any machine $i\leq m$.
Let $S_m(\alpha,\beta)=\{j\in S_m \mid l_j=\alpha, \, h_j=\beta \}$ and $b_m(\alpha,\beta)=|S_m(\alpha,\beta)|$. It is convenient to think of $b_m$ as a vector in $\{0,\ldots,n\}^{L\times H}$.

Let $a \in \{0,\ldots,n\}^{L\times H}$, $\delta \in \{0,\ldots,n\}^{H}$ and $m$ be any machine.
Let $J(m,a)$ be a set of jobs which contains all the jobs in $\bigcup_{i=1}^m S_i$ together with additional $a(\alpha,\beta)$ jobs of load type $(\alpha,\beta)$ from $\bigcup\limits_{i=m+1}^{c} S_i$, for each load type $(\alpha,\beta)$.
Let $\pi_m(a,\delta)$ be $1$ if we can schedule all the jobs in $J(m,a)$, except for $\delta(\beta)$ jobs of each large load type $\beta$, on the first $m$ machines. Note that since the additional jobs specified by $a$ are small on all machines $1,\ldots,m$, $\pi_m(a,\delta)$ does not depend on the additional jobs' identity.
Our original decision problem has a solution if and only if $\pi_c(\vec{0},\vec{0})=1$.

Consider the decision problem  $\pi_1(a, \delta)$.
We want to decide if it is possible to schedule the jobs in $J(1,a)$, except for $\delta(\beta)$ jobs of each large load type $\beta$, on machine $1$. To decide this, our algorithm chooses the $\delta(\beta)$  jobs of each large job type $\beta$ that have the largest small loads and removes them from $J(1,a)$.
If the sum of the small loads of the remaining jobs is at most $1$,  then $\pi_1(a,\delta)=1$, and otherwise $\pi_1(a,\delta)=0$.

To solve $\pi_m(a,\delta)$ we enumerate, for each load type $(\alpha,\beta)$, on $\xi(\alpha,\beta)$, the number of jobs in $J(m,a)$ of this load type that are assigned as small jobs to machine $m$.
Note that these jobs are either in $S_m(\alpha,\beta)$ or in the additional set of $a(\alpha,\beta)$ jobs of type $(\alpha,\beta)$.
For each $\beta\in H$, we enumerate on the number $\lambda(\beta)$ of jobs in $J(m,a)$ of large load type $\beta$ that are assigned as large jobs to machine $m$.
The following lemma is the basis for our dynamic programming scheme. Its proof is straightforward.

\begin{lemma}\label{cond_lemma}
We can schedule all the jobs in $J(m,a)$ except for $\delta(\beta)$ jobs of large load type $\beta$ (for each $\beta\in H$) on machines $1,\ldots, m$ with makespan at most $1$ such that $\xi(\alpha,\beta)$ jobs of load type $(\alpha,\beta)$ are assigned to machine $m$ as small jobs and $\lambda(\beta)$ jobs of large load type $\beta$ are assigned to machine $m$ as large jobs if and only if the following conditions hold:

\begin{itemize}
\item For each $(\alpha,\beta)\in L\times H,\,  \xi(\alpha,\beta) \leq a(\alpha,\beta) + b_m(\alpha,\beta)$: The number of jobs of each load type that we assign as small jobs to machine $m$ is at most the number of jobs in $J(m,a)$ of this load type that are small on machine $m$.
\item $\sum\limits_{\beta\in H} \lambda(\beta) \beta \;+\;\sum\limits_{(\alpha,\beta)\in L\times H} \xi(\alpha,\beta) \alpha \;\leq 1$. The total load of the jobs assigned to  machine $m$ is at most $1$.
\item Let $a' = a + b_m  - \xi$ and $\delta'=\delta+\lambda$ then
$\pi_{m-1}(a',\delta')=1$.  The jobs in $J(m-1,a')$, except for $\delta'(\beta)$ jobs of large load $\beta$ for each $\beta\in H$,  can be scheduled on machines $1,\ldots,m-1$ with makespan at most $1$.
\end{itemize}

\end{lemma}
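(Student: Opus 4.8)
The plan is to prove both implications by ``peeling off'' machine $m$ from a schedule, using throughout that $\pi_{m'}(\cdot,\cdot)$ depends only on the \emph{number} of additional jobs of each load type, not on their identity (those jobs are small on all of machines $1,\ldots,m'$); this gives the freedom to name the additional jobs conveniently on both sides of the equivalence.

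I would first record two elementary facts about which jobs of $J(m,a)$ may sit on machine $m$. Since the machines are ordered by non-increasing strength and $x_j$ is the index of the first machine on which $j$ is large, a job $j\in J(m,a)$ is small on machine $m$ exactly when $x_j\ge m+1$, i.e.\ when $j\in S_m$ or $j$ is an additional job; hence at most $a(\alpha,\beta)+b_m(\alpha,\beta)$ jobs of load type $(\alpha,\beta)$ can be placed on machine $m$ as small jobs, which is the first condition. A job $j\in J(m,a)$ is large on machine $m$ exactly when $x_j\le m$, i.e.\ when $j\in\bigcup_{i=1}^{m-1}S_i$, and it then contributes $h_j$. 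Consequently, once $\xi$ and $\lambda$ are prescribed, the load on machine $m$ equals $\sum_{(\alpha,\beta)\in L\times H}\xi(\alpha,\beta)\alpha+\sum_{\beta\in H}\lambda(\beta)\beta$, so ``makespan at most $1$ on machine $m$'' is precisely the second condition.

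For the forward direction, take a schedule of $J(m,a)$ minus $\delta(\beta)$ large-$\beta$ jobs on machines $1,\ldots,m$ that realizes the prescribed $\xi$ and $\lambda$, and delete machine $m$. Removing the $\xi(\alpha,\beta)$ small-on-$m$ jobs of each type from $J(m,a)$ leaves all of $\bigcup_{i=1}^{m-1}S_i$ together with exactly $a(\alpha,\beta)+b_m(\alpha,\beta)-\xi(\alpha,\beta)=a'(\alpha,\beta)$ jobs of type $(\alpha,\beta)$ with $x_j\ge m+1$, which we may regard as the additional jobs of $J(m-1,a')$; removing in addition the $\lambda(\beta)$ large-on-$m$ jobs deletes $\lambda(\beta)$ large-$\beta$ jobs from $\bigcup_{i=1}^{m-1}S_i$. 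Together with the $\delta(\beta)$ jobs already omitted, the jobs still on machines $1,\ldots,m-1$ then constitute $J(m-1,a')$ minus $\delta(\beta)+\lambda(\beta)=\delta'(\beta)$ large-$\beta$ jobs, scheduled with makespan at most $1$; hence $\pi_{m-1}(a',\delta')=1$, the third condition. The converse runs the same correspondence backwards: given a witnessing schedule for $\pi_{m-1}(a',\delta')$, choose its additional jobs to be $a'(\alpha,\beta)$ of the $a(\alpha,\beta)+b_m(\alpha,\beta)$ jobs of $J(m,a)$ with $x_j\ge m+1$ of that type (possible since $a'\ge 0$ by the first condition), put the remaining $\xi(\alpha,\beta)$ such jobs on machine $m$ as small jobs, and put $\lambda(\beta)$ of the large-$\beta$ jobs omitted by the sub-schedule onto machine $m$ as large jobs; by the second condition machine $m$ then carries load at most $1$, exactly $\delta(\beta)$ large-$\beta$ jobs stay unplaced, and we obtain the required schedule of $J(m,a)$ on machines $1,\ldots,m$.

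The only step beyond routine case-checking is matching up the ``omitted large-$\beta$ jobs'' between levels $m$ and $m-1$ — in the converse, realizing $\lambda(\beta)$ of them as large jobs on machine $m$ (which requires $x_j\le m$), and in the forward direction, keeping the previously-omitted $\delta(\beta)$ jobs disjoint from the $\lambda(\beta)$ jobs peeled off $\bigcup_{i<m}S_i$. This is handled by a standard exchange argument exploiting that $l_j<h_j$ and that a job with $x_j\ge m+1$ is small on every machine $1,\ldots,m$: one may modify a witnessing sub-schedule, without increasing its makespan, so that among its omitted jobs of each large type the ones that are large on machine $m$ come first, after which exactly $\lambda(\beta)$ of them can be moved onto machine $m$. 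I expect this normalization of the leftover-job accounting to be the only subtle point; the rest is the direct bijection described above.
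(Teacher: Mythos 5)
Your setup facts and your forward direction are sound: identifying which jobs of $J(m,a)$ are small/large on machine $m$, reading off the first two conditions, and peeling machine $m$ off a witnessing schedule so that the leftover small-on-$m$ jobs become the additional jobs of $J(m-1,a')$ and the omitted set grows by exactly the $\lambda(\beta)$ jobs taken off $\bigcup_{i<m}S_i$ — all of that is correct (the paper itself declares the proof straightforward and gives none, so there is nothing to compare against). The problem is the converse, and specifically the step you yourself flag as the only subtle point. The claimed normalization — that a witnessing sub-schedule for $\pi_{m-1}(a',\delta')$ can always be modified, without increasing its makespan, so that its omitted jobs of large type $\beta$ preferentially have $x_j\le m$ — is false. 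Take $c=m=2$ and three jobs: $j_1$ of load type $(0.1,1)$ with $x_{j_1}=2$, $j_2$ of type $(0.9,1)$ with $x_{j_2}=3$, and $j_3$ of type $(0.9,5)$ with $x_{j_3}=3$; set $a=\delta=0$, $\xi\equiv 0$, $\lambda(1)=1$, $\lambda(5)=0$. Conditions 1 and 2 hold, and $\pi_1(a',\delta')=1$ is witnessed only by omitting $j_2$ and placing $j_1,j_3$ on machine $1$ (load $1$); omitting $j_1$ instead forces load $1.8$ on machine $1$. So every witnessing sub-schedule omits a job that is \emph{small} on machine $2$, and no exchange can produce one omitting $j_1$. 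Indeed, under your literal reading of ``assigned to machine $m$ as a large job'' (requiring $x_j\le m$) the lemma itself is false for this instance: the only job large on machine $2$ is $j_1$, so realizing $\lambda(1)=1$ literally forces $j_2,j_3$ onto machine $1$ with load $1.8$, yet all three conditions hold. Your swap analysis also reveals why the exchange cannot be repaired: when the scheduled job $j$ with $x_j\le m$ is small on its machine $i<x_j$, the omitted job $j'$ replaces load $l_j$ by $l_{j'}$, and nothing forces $l_{j'}\le l_j$.

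The resolution is not an exchange argument but a weakening of what ``$\lambda(\beta)$ jobs assigned as large jobs'' must mean, which is exactly how the paper's backtracking phase operates: machine $m$ reserves $\lambda(\beta)$ slots of size $\beta$, and each slot is filled by \emph{any} omitted job of large type $\beta$, which contributes $h_j=\beta$ if it happens to be large on machine $m$ and $l_j<\beta$ otherwise (the paper says explicitly that the jobs in these slots ``may be large on their machine \ldots or they may be small and have a load smaller than $\beta$''). With that reading the converse is immediate: place the $\xi$ leftover small-on-$m$ jobs and any $\lambda(\beta)$ of the $\delta'(\beta)$ omitted large-type-$\beta$ jobs on machine $m$; condition 2 bounds its load by $1$ since each slotted job contributes at most $\beta$, and exactly $\delta(\beta)$ jobs of each large type remain unscheduled. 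No normalization of the omitted set is needed. You should then verify (as you essentially already did) that the forward direction still holds with $\lambda(\beta)$ counting the jobs genuinely large on machine $m$ — that is the direction needed for completeness of the dynamic program, while the slot-based converse gives soundness. As written, your proof of the converse rests on a false claim and does not go through.
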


The algorithm for solving $\pi_m(a,\delta)$  sets $\pi_m(a,\delta)=1$ if it finds $\lambda$ and $\xi$ such that the conditions in  Lemma \ref{cond_lemma} are met. If the conditions are not met for all $\lambda$ and $\xi$ then $\pi_m(a,\delta)=0$.

Our dynamic program solves $\pi_m(a,\delta)$ in increasing order of $m$ from $1$ to $c$ and returns the result of  $\pi_{c}(\vec{0},\vec{0})$.
The correctness of the dynamic program follows from Lemma \ref{cond_lemma} and from the fact that for $m=1$, our algorithm chooses the jobs that it does not assign to machine $1$ such that the remaining load on machine $1$ is minimized. Therefore we set $\pi_1(a,\delta)=1$ if and only if there is a solution of makespan at most $1$.

By adding backtracking links, our algorithm can also construct a schedule with makespan at most $1$.
We maintain links between each $\pi_m(a,\delta)$ that is $1$ to a corresponding $\pi_{m-1}(a',\delta')$ that is also $1$, according to the last condition in Lemma \ref{cond_lemma}.
Tracing back the links from $\pi_{c}(\vec{0},\vec{0})$ gives us an assignment with makespan at most $1$ as follows.
Consider a link between $\pi_m(a,\delta)$ and  $\pi_{m-1}(a',\delta')$. This defines $\lambda= \delta' - \delta$ and $\xi = a + b_m - a'$.
For each $(\alpha,\beta)$ we assign to machine $m$,  $\xi(\alpha,\beta)$ arbitrary jobs of load type $(\alpha,\beta)$ from $\bigcup\limits_{i=m}^{c} S_i$ that we have not assigned already, and we reserve $\lambda(\beta)$ slots of load $\beta$ on machine $m$ to be populated with jobs later.
Our algorithm guarantees that the load on machine $m$ is at most $1$. When we reach $\pi_1(a,\delta)$, for some $a$ and $\delta$, in the backtracking phase, we have $\delta(\beta)$ slots of size $\beta$ allocated on machines $2,\ldots,m$.
The $\delta(\beta)$ jobs of large load $\beta$ with the largest small loads in $J(1,a)$ are assigned to these slots. Note that these jobs may be large on their machine and have a load of $\beta$ or they may be small and have a load smaller than $\beta$. In any case, the resulting assignment assigns all the jobs in $J$ and has a makespan of at most $1$.

The number of problems $\pi_m(a,\delta)$ that our dynamic program solves is $O(c n^{|L||H|})=O(c n^{O(1)})$.
To solve each problem, we check the conditions in Lemma \ref{cond_lemma} for $O(n^{|L||H|})$ possible $\lambda$'s and $\xi$'s. This takes $O(1)$ per $\lambda$ and $\xi$ since we already computed $\pi_{m-1}(a',\delta')$ for every $a'$ and $\delta'$. Thus the total complexity of this algorithm is polynomial. This concludes the proof of the following theorem.

\begin{theorem}
Our dynamic programming algorithm is a polynomial-time exact optimization algorithm for the special case of the ordered unrelated machines scheduling problem, where each job $j$ has load $l_j$ on some prefix of the machines, and load $h_j\geq l_j$ on the corresponding suffix.
\end{theorem}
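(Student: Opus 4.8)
The plan is to assemble into a single argument the correctness and complexity facts already developed in this section. The core object is the decision procedure for a fixed target makespan, which after scaling we take to be $1$. First I would prove, by induction on $m$ from $1$ to $c$, that the value $\pi_m(a,\delta)$ computed by the dynamic program equals $1$ precisely when the jobs of $J(m,a)$, after discarding $\delta(\beta)$ jobs of each large load type $\beta$, can be scheduled on machines $1,\dots,m$ with makespan at most $1$. The base case $m=1$ holds because, among all ways to discard $\delta(\beta)$ jobs of each large type $\beta$ from $J(1,a)$, discarding those with the largest small loads minimizes the residual load on machine $1$; hence the greedy choice the algorithm makes on machine $1$ is optimal, and $\pi_1(a,\delta)=1$ iff a makespan-$1$ schedule exists. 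The inductive step is exactly Lemma \ref{cond_lemma}: any schedule of $J(m,a)$ on the first $m$ machines is captured by the counts $\xi(\alpha,\beta)$ of jobs of each load type placed on machine $m$ as small jobs and $\lambda(\beta)$ of each large type placed as large jobs, the three listed conditions are necessary and sufficient for such a $(\xi,\lambda)$ to extend to a feasible schedule, and the algorithm sets $\pi_m(a,\delta)=1$ iff some feasible $(\xi,\lambda)$ makes $\pi_{m-1}(a+b_m-\xi,\ \delta+\lambda)=1$. Consequently $\pi_c(\vec{0},\vec{0})=1$ iff all jobs in $J$ can be scheduled with makespan at most $1$.

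Second I would spell out schedule reconstruction: for every state whose value is $1$ we store a backtracking link to a witnessing predecessor, and tracing links from $\pi_c(\vec{0},\vec{0})$ recovers the pair $(\xi,\lambda)$ chosen at each machine $m$; this specifies how many jobs of each load type to place on machine $m$ (their identities among the not-yet-assigned jobs of $\bigcup_{i\ge m} S_i$ are irrelevant, since those extra jobs are small on all of $1,\dots,m$) and how many slots of each large load $\beta$ to reserve. When backtracking reaches $m=1$, the reserved large-load slots on machines $2,\dots,m$ are filled with the discarded large-type jobs of largest small load, as in the base case. The resulting assignment covers all of $J$ and, by the correctness of each step, has makespan at most $1$.

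Third, the complexity bound: a state is a triple $(m,a,\delta)$ with $m\le c$, $a\in\{0,\dots,n\}^{L\times H}$ and $\delta\in\{0,\dots,n\}^{H}$, so there are $O\!\left(c\, n^{|L||H|+|H|}\right)=O(c\, n^{O(1)})$ states because $|L|$ and $|H|$ are constants; solving each state enumerates $O(n^{O(1)})$ pairs $(\xi,\lambda)$ and checks the conditions of Lemma \ref{cond_lemma} in $O(1)$ time using the already-filled table for $m-1$. Hence the decision procedure runs in polynomial time. Finally, to obtain an exact optimization algorithm I would wrap this in a binary search over the makespan exactly as in Lemma \ref{bin_opt} and Section \ref{apn_approx_des_to_opt}: the optimal makespan lies in $\left[\frac{1}{c}\sum_{j} l_j,\ \sum_{j} h_j\right]$, and after clearing denominators to the precision in which the $l_j,h_j$ are given it takes integer values, so $O\!\left(\log\left(\sum_j h_j\right)+\phi\right)$ invocations of the exact decision procedure locate it. This produces a polynomial-time exact optimization algorithm and proves the theorem.

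The one genuinely delicate point—and the step I would be most careful about—is the justification that tracking only the integer count vectors $a$ and $\delta$, rather than job identities, loses no information: this is where both the prefix/suffix structure of the load functions and the constancy of $|L|,|H|$ are essential, and it is precisely what makes the base-case greedy optimal and the conditions of Lemma \ref{cond_lemma} both necessary and sufficient. Everything else is bookkeeping, but this reduction from an $\Omega(K)$-sized "true" job-type space to an $O(1)$-dimensional count space is the heart of why the dynamic program is correct and polynomial.
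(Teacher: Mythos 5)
Your proposal is correct and follows essentially the same route as the paper: an induction on $m$ whose base case is the greedy discard of the $\delta(\beta)$ large-type jobs with largest small loads on machine $1$, whose inductive step is Lemma \ref{cond_lemma}, combined with backtracking links for reconstruction, a polynomial count of states and transitions, and a binary search over the (integer-scaled) makespan to pass from decision to optimization. Your explicit accounting of the $n^{|H|}$ factor from $\delta$ in the state space and your remark on why count vectors suffice in place of job identities are slightly more careful articulations of points the paper treats implicitly, but the argument is the same.
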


\section{Joint dynamic cache partition and job scheduling}\label{sec_variants}

We consider a generalization of the joint cache partition and job assignment problem that allows for dynamic cache partitions and dynamic job assignments.
We define the generalized problem as follows.
As before, $J$ denotes the set of jobs, there are $c$ cores and a total cache of size $K$.
Each job $j\in J$ is described by a non-increasing function $T_j(x)$.

A dynamic cache partition $p=p(t,i)$ indicates the amount of cache allocated to core $i$ at time unit $t$\footnote{To simplify the presentation we assume that time is discrete.}.
For each time unit $t$, $\sum\limits_{i=1}^{c} p(t,i) \leq K$.
A dynamic assignment $S=S(t,i)$ indicates for each core $i$ and time unit $t$, the index of the job that runs on core $i$ at time $t$. If no job runs on core $i$ at time $t$ then $S(t,i)=-1$. If $S(t,i)=j\neq -1$ then for any other core $i_2\neq i$, $S(t,i_2)\neq j$.
Each job has to perform 1 \textit{work unit}. If job $j$ runs for $\alpha$ time units on a core with $x$ cache, then it completes $\frac{\alpha}{T_j(x)}$ work. A partition and schedule $p,S$ are \textit{valid} if all jobs complete their work.
Formally, $p,S$ are valid if for each job $j$, $\sum\limits_{<t,i>\in S^{-1}(j) } \frac{1}{T_j(p(t,i))} = 1$.
The \textit{load} of core $i$ is defined as the maximum $t$ such that $S(t,i)\neq -1$.
The makespan of $p,S$ is defined as the maximum load on any core. The goal is to find a valid dynamic cache partition and dynamic job assignment with a minimal makespan.

It is easy to verify that dynamic cache partition and dynamic job assignment, as defined above, generalize  the static partition and static job assignment.
The partition is static if for every fixed core $i$, $p(t,i)$ is constant with respect to $t$.
The schedule is a static assignment if for every job $j$, there are times $t_1<t_2$ and a core $i$ such that $S^{-1}(j)=\{<t,i> \mid t_1\leq t \leq t_2\}$.

We consider four variants of the joint cache partition and job assignment problem.
The static partition and static assignment variant studied so far, the variant in which the cache partition is dynamic and the job assignment is static, the variant in which the job assignment is dynamic and the cache partition is static and the variant in which both are dynamic.

Note that in the variant where the cache partition is dynamic but the job assignment is static we still have to specify for each core, in which time units it runs each job that is assigned to this core. That is, we have to specify a function $S(t,i)$ for each core $i$. This is due to the fact that different schedules of the same set of jobs assigned to a particular core, when the cache partition is dynamic, may have  different loads, since jobs may run with different cache allocations. When the cache partition is also static, the different schedules of the same set of jobs on a particular core have the same load, and it suffices to specify which jobs are assigned to which core.

We study the makespan improvement that can be gained by allowing a dynamic solution.
We show that allowing a dynamic partition and a dynamic assignment can improve the makespan by a factor of at most $c$, the number of cores.
We also show an instance where by using a dynamic partition and a static assignment we achieve an improvement factor arbitrarily close to $c$.
We show that allowing a dynamic assignment of the jobs, while keeping the cache partition static, improves the makespan by at most a factor of $2$, and that there is an instance where an  improvement of $2-\frac{2}{c}$ is achieved, for $c\geq2$.

Given an instance of the joint cache partition and job assignment problem, we denote by $O_{SS}$ the optimal static cache partition and static job assignment, by $O_{DS}$ the optimal dynamic cache partition and static job assignment, by $O_{SD}$ the optimal static cache partition and dynamic job schedule and by $O_{DD}$ the optimal dynamic cache partition and dynamic job schedule. For any solution $A$ we denote its makespan by $M(A)$.

\begin{lemma}\label{dsub}
For any instance of the joint cache partition and job assignment problem,
$M(O_{SS})\leq cM(O_{DD})$.
\end{lemma}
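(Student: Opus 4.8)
### Proof plan for Lemma \ref{dsub}

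The plan is to take an optimal dynamic–dynamic solution $O_{DD}$ with makespan $M = M(O_{DD})$ and convert it into a static–static solution whose makespan is at most $cM$. The natural idea is a two‑step transformation: first fix the cache partition, then fix the assignment.

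\textbf{Step 1: fix the cache partition.} In $O_{DD}$, the cache vector $p(t,\cdot)$ varies over the time interval $[0,M]$. I would pick a single time $t^\star$ and freeze the partition to $p^\star = p(t^\star,\cdot)$. The key point is a pigeonhole/averaging argument: for each core $i$, consider its ``work profile'' over time. Since $T_j$ is non‑increasing, a core with \emph{more} cache only runs its jobs faster, so if we give core $i$ at least $p(t,i)$ cache for the whole horizon, every piece of work it did at time $t$ in $O_{DD}$ still completes within the same time budget. The difficulty is that no single $t^\star$ need dominate all cores simultaneously. Instead I would argue as follows: partition $[0,M]$ into the $c$ (or fewer) maximal subintervals on which nobody's cache allocation is larger than at some reference snapshot — more carefully, since $\sum_i p(t,i)\le K$ at every $t$, for each core $i$ there is a time $t_i$ with $p(t_i,i)\ge \frac1c\sum_t\cdots$ — hmm, this needs care. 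The cleaner route: just take $p^\star(i) = \max_t p(t,i)$ divided appropriately. Actually the simplest correct statement is: there exists a \emph{static} partition that gives each core enough cache to do, in time $cM$, all the work it did in $O_{DD}$ — namely set $p^\star(i)$ to be the cache level at which core $i$ spent the most time, scaled so that $\sum p^\star(i)\le K$. I expect this bookkeeping — matching "total cache $\le K$ at each instant" to "total cache $\le K$ statically" while losing only a factor $c$ — to be the main obstacle, and the resolution is to observe that averaging $p(t,i)$ over $t\in[0,M]$ gives a static partition using at most $K$ cache whose per‑core allocation, by a Markov‑type inequality applied per core, exceeds any fixed threshold for at least a $1/c$ fraction of the time is too weak; so instead one chooses, for each core, the single best instantaneous allocation and pays $\times c$ in the schedule length when re‑serializing. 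This is exactly where the factor $c$ enters.

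\textbf{Step 2: fix the assignment.} Once the partition is static, a job $j$ in $O_{DD}$ may still migrate among cores and be preempted. Using the static partition $p^\star$, I would re‑collect, for each core $i$, all the work fragments that $O_{DD}$ ran on core $i$ (at various cache levels) and now run them contiguously on core $i$ with the fixed cache $p^\star(i)$. Since $p^\star(i)\ge p(t,i)$ for the relevant times (by construction in Step 1), each fragment runs no slower, so the total busy time of core $i$ does not increase beyond what Step 1 already accounts for. A migrating job must now be pinned to one core: assign job $j$ to \emph{some} core on which $O_{DD}$ ran a positive amount of $j$'s work. The total work of $j$ is $1$ unit, spread over $\le c$ cores, so on at least one core $O_{DD}$ completed $\ge 1/c$ of $j$'s work; running the whole of $j$ on that core (with cache $p^\star$) costs at most $c$ times what that fragment cost, again giving the factor $c$.

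\textbf{Putting it together.} Combining, each core in the resulting static–static solution is busy for at most $c\cdot M(O_{DD})$ time units: its own fragments contribute $\le M$ after Step 1, and the extra full jobs that got pinned to it are each bounded using the $1/c$‑fraction argument, all fitting within the $cM$ budget. Hence $M(O_{SS}) \le c\,M(O_{DD})$. I would present Steps 1 and 2 as a single combined construction — define $p^\star(i)$ as the cache level at which core $i$ is busiest in $O_{DD}$, define $S^\star(j)$ as a core handling the largest share of job $j$, and then verify the makespan bound directly — since trying to decouple the two steps cleanly is what creates the apparent obstacle above. The verification is a short calculation using monotonicity of $T_j$ and the fact that a $1/c$ fraction blown up by $c$ is at most the whole.
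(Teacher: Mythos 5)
Your Step 1 has a genuine gap that the rest of the argument inherits. You need a \emph{static} partition $p^\star$ with $\sum_i p^\star(i)\le K$ such that $p^\star(i)$ dominates $p(t,i)$ at the times relevant to the fragments you keep on core $i$; but defining $p^\star(i)$ as the maximum (or ``busiest'') instantaneous allocation of core $i$ can give $\sum_i p^\star(i)$ as large as $cK$ --- consider a dynamic partition that rotates all $K$ cache among the cores. You cannot repair this by ``scaling so that $\sum p^\star(i)\le K$'': since $T_j$ is an arbitrary non-increasing function, reducing a core's cache below the level a job actually used can blow up its running time unboundedly (e.g.\ $T_j(x)=\infty$ for $x<K$), and paying extra \emph{time} cannot compensate for exceeding the \emph{cache} budget. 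You flag this obstacle yourself but never resolve it; the paper's own Lemma on the $(c-\epsilon)$ lower bound for dynamic versus static partitions shows that no pointwise-dominating static partition within budget $K$ exists in general, so this route cannot be pushed through as stated. (Your Step 2 accounting --- pin job $j$ to a core that did a $\ge 1/c$ fraction of its work and pay a factor $c$ --- is fine \emph{given} a dominating static partition, and correctly yields $c$ rather than $c^2$, but it rests on the broken Step 1.)

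The paper avoids the issue entirely with a one-line construction: assign \emph{all} jobs to a single core and give that core all $K$ cache. At every instant of $O_{DD}$ each job runs with at most $K$ cache, so by monotonicity of $T_j$ its full run on the single core takes at most the total time $\alpha_j$ it occupied cores in $O_{DD}$; summing, $\sum_j \alpha_j$ is the total busy time of $c$ cores over a horizon of $M(O_{DD})$, hence at most $cM(O_{DD})$. The key idea you are missing is to abandon the multi-core structure rather than try to simulate the dynamic partition core by core.
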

\begin{proof}
Let $A$ be the trivial static partition and schedule, that assigns all jobs to the first core and allocates all the cache to this core.
Let's consider any job $j$ that takes a total of $\alpha$ time to run in the solution $O_{DD}$.
Whenever a fraction of job $j$ runs on some core with some cache partition, it has at most $K$ cache available to it.
Therefore, in solution $A$, when we run job $j$ continuously on one core with $K$ cache,  it take at most $\alpha$ time.
Since the total running time of all the jobs in solution $O_{DD}$ is at most $cM(O_{DD})$, we get $M(O_{SS})\leq M(A) \leq c M(O_{DD})$.
\end{proof}

\begin{corollary}
For any instance of the joint cache partition and job assignment problem,
 $M(O_{SS})\leq cM(O_{DS})$.
\end{corollary}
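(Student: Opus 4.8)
The plan is to obtain this as an immediate consequence of Lemma \ref{dsub} together with the observation that $O_{DS}$ optimizes over a subclass of the solutions over which $O_{DD}$ optimizes. First I would note that every dynamic cache partition together with a \emph{static} job assignment is, in particular, a valid dynamic cache partition together with a \emph{dynamic} job assignment: a static assignment is just the special case of a dynamic assignment in which each job occupies a contiguous block of time units on a single core (this is exactly the inclusion spelled out right after the definition of the generalized problem). Hence the feasible set defining $O_{DS}$ is contained in the feasible set defining $O_{DD}$, and since both minimize the same makespan objective we get $M(O_{DD}) \le M(O_{DS})$.

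Combining this with Lemma \ref{dsub}, which gives $M(O_{SS}) \le c\,M(O_{DD})$, yields
\[
M(O_{SS}) \le c\,M(O_{DD}) \le c\,M(O_{DS}),
\]
as claimed.

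There is no real obstacle here: the only point that needs a line of justification is the class inclusion $O_{DS}$-solutions $\subseteq$ $O_{DD}$-solutions, and that has already been argued in the text when we verified that the dynamic model generalizes the static one. Everything else is a one-step chaining of inequalities.
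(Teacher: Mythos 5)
Your proposal is correct and follows exactly the paper's argument: observe that $M(O_{DD})\leq M(O_{DS})$ because the static-assignment solutions form a subclass of the dynamic ones, then chain this with Lemma \ref{dsub}. The only difference is that you spell out the class inclusion explicitly, which the paper treats as obvious.
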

\begin{proof}
Clearly, $M(O_{DS})\geq M(O_{DD})$ for any instance. Combine this with Lemma \ref{dsub} and we get that $M(O_{SS})\leq cM(O_{DS})$
\end{proof}

\begin{lemma}\label{dslb}
For any $\epsilon>0$ there is an instance of the joint cache partition and job assignment problem, such that $M(O_{SS})> (c-\epsilon) M(O_{DS})$.
\end{lemma}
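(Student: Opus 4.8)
The plan is to build, for each $\epsilon>0$, an explicit instance on which every static partition with a static assignment is almost $c$ times slower than some dynamic partition with a static assignment, thereby showing that the bound $M(O_{SS})\le cM(O_{DD})\le cM(O_{DS})$ coming from Lemma~\ref{dsub} is essentially tight. The idea driving the construction is that a single job can be used to \emph{pin} the entire cache onto one core in any static partition — forcing all remaining work onto that same core — whereas a dynamic partition can satisfy this job in one time unit and afterwards split the cache evenly over all $c$ cores.

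I would use $c$ cores, total cache $K=c$, and, for a large integer $N$ to be fixed at the end, the following $c+1$ jobs: a pinning job $A$ with $T_A(c)=1$ and $T_A(x)=\infty$ for $0\le x<c$, and $c$ light jobs $B_1,\dots,B_c$ with $T_{B_i}(x)=N$ for $x\ge1$ and $T_{B_i}(0)=\infty$; all of these load functions are non-increasing, as the model requires. For the dynamic upper bound I would exhibit one concrete solution: assign $A$ and $B_1$ to core $1$ and $B_i$ to core $i$ for $i\ge2$; in time unit $1$ give all $c$ cache units to core $1$ and run $A$ there (it completes, since $1/T_A(c)=1$); in time units $2,\dots,N+1$ give one cache unit to every core and run $B_i$ on core $i$ (each completes, since $N\cdot(1/T_{B_i}(1))=1$). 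This is a valid dynamic partition — the cache used equals $K=c$ at each time unit — and a valid static assignment — every job occupies a contiguous block of time units on a single core — and its makespan is $N+1$, so $M(O_{DS})\le N+1$.

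For the static lower bound I would argue that every static solution of finite makespan must pile everything on one core: to run $A$ at all, the core holding $A$ needs cache at least $c$, and since the allocations sum to $K=c$ that core must receive all the cache and every other core receives $0$; then each $B_i$, needing at least one cache unit, must also be assigned to that same core, whose load is therefore at least $\sum_{i=1}^{c}T_{B_i}(c)=cN$. Hence $M(O_{SS})\ge cN$, so $M(O_{SS})/M(O_{DS})\ge cN/(N+1)$, which exceeds $c-\epsilon$ as soon as $N>c/\epsilon-1$; choosing $N=\lceil c/\epsilon\rceil$ finishes the proof (the case $\epsilon\ge c$ being trivial, since then $c-\epsilon\le 0<M(O_{SS})$).

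I expect the only genuine obstacle to be the conceptual step of finding this instance: one must realize that the separation has to come from a job whose cache demand equals the whole cache $K$ — so that a static partition is forced to dedicate everything to one core and cannot use the other $c-1$ cores for any cache-demanding work — while all the remaining demanding jobs are light enough that a dynamic partition can serve them simultaneously on all $c$ cores at a later time. Once the instance is written down, verifying the two makespan bounds is entirely routine.
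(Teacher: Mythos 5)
Your proof is correct and follows essentially the same route as the paper's: a job whose cache demand equals the entire cache $K$ forces any finite-makespan static partition to concentrate all cache (and hence all the cache-demanding work) on a single core, while a dynamic partition serves that job first and then splits the cache evenly among all $c$ cores, giving a ratio that tends to $c$. Your instance is a slightly cleaner variant --- the paper's light jobs can still run (slowly) with no cache, so its static lower-bound argument has to reason about which static partitions and assignments are optimal, whereas your $T_{B_i}(0)=\infty$ makes that step immediate --- but the mechanism and the resulting bound are the same.
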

\begin{proof}
Let $b$ be an arbitrary constant.
Let's consider the following instance with two types of jobs.
There are $c$ jobs of type $1$, such that for each such job $j$, $T_j(x)=\infty$, for $x<K$ and $T_j(K)=1$.
There are $bc$ jobs of type $2$, such that for each such job $j$, $T_j(x)=bc$ if $x<\frac{K}{c}$ and $T_j(x)=1$ if $x\geq \frac{K}{c}$.

Consider the following solution. The static job assignment runs $b$ jobs of type $2$ on each core. After $b$ time units, it runs the $c$ jobs of type $1$ on core $1$.
The dynamic cache partition starts with each core getting $\frac{K}{c}$ cache. The cache partition changes after $b$ time units and core $1$ gets all the cache.
This solution has a makespan of $b+c$ and therefore $M(O_{DS})\leq b+c$.

There is an optimal static cache partition and static job assignment that allocates to each core $0,\frac{K}{c}$ or $K$ cache, because otherwise we can reduce the amount of cache allocated to a core without changing the makespan of the solution.
This implies that there are only two static cache partitions that may be used by this solution optimal static solution: the partition in which $p(i)=\frac{K}{c}$ for each core $i$, and the partition that gives all the cache to a single core.
It is easy to see that if we use the cache partition where $p(i)=\frac{K}{c}$ we get a solution with an infinite makespan because of the jobs of type $1$.
Therefore this optimal static solution uses a cache partition that gives all the cache to a single core.
Given this partition, the optimal job assignment is to run all the $c$ jobs of type $1$ on the core with all the cache, and assign to that core additional  $bc-(c-1)$ jobs type $2$.
So the load on that core is $bc+1$.
Each of the $c-1$ cores with no cache is assigned exactly one job of type $2$, and each such core has a load of $bc$.
Therefore the ratio  $\frac{M(O_{SS})}{M(O_{DS})} \geq \frac{bc+1}{b+c}$. The lower  bound on this ratio approaches $c$ as $b$ approaches infinity.
Since $b$ is an arbitrarily chosen constant, we can choose it large enough such that we get a lower bound that is greater than $c-\epsilon$, for any $\epsilon>0$.
\end{proof}

\begin{corollary}
For any $\epsilon>0$ there is an instance of the joint cache partition and job assignment problem, such that $M(O_{SS})> (c-\epsilon) M(O_{DD})$.
\end{corollary}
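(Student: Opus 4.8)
The plan is to reuse the very instance constructed in the proof of Lemma \ref{dslb} and combine the bound proved there with the elementary observation that $M(O_{DS}) \ge M(O_{DD})$ holds on \emph{every} instance. This inequality is immediate: a static job assignment is in particular a (degenerate) dynamic job assignment, so the optimal dynamic-partition/static-assignment solution $O_{DS}$ is itself a feasible dynamic-partition/dynamic-assignment solution; hence the optimum over the larger class satisfies $M(O_{DD}) \le M(O_{DS})$. In other words, forcing the assignment to be static can only increase the makespan.

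First I would dispose of the degenerate regime. If $\epsilon \ge c$ then $(c-\epsilon)M(O_{DD}) \le 0 < M(O_{SS})$ for any instance with at least one job, so the instance of Lemma \ref{dslb} trivially works. Thus from now on assume $0 < \epsilon < c$, so that $c-\epsilon > 0$; separating this case out is the only real point of care, since the final step multiplies an inequality by $c-\epsilon$ and we need that factor to be positive.

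Next I would invoke Lemma \ref{dslb}: it supplies an instance of the joint cache partition and job assignment problem with $M(O_{SS}) > (c-\epsilon)M(O_{DS})$. Applying the monotonicity observation $M(O_{DS}) \ge M(O_{DD})$ to this instance and multiplying through by $c-\epsilon>0$ gives $(c-\epsilon)M(O_{DS}) \ge (c-\epsilon)M(O_{DD})$. Chaining the two inequalities yields $M(O_{SS}) > (c-\epsilon)M(O_{DD})$, which is exactly the claim.

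I do not anticipate any genuine obstacle: the whole argument is the one-line deduction ``$O_{DD}$ optimizes over a superset of the solutions available to $O_{DS}$, so its optimum is no larger,'' plus the already-established lower bound of Lemma \ref{dslb}. The corollary is therefore immediate and needs no new construction.
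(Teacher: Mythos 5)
Your proposal is correct and follows exactly the paper's own argument: take the instance from Lemma \ref{dslb} and chain the bound $M(O_{SS}) > (c-\epsilon)M(O_{DS})$ with the trivial inequality $M(O_{DS}) \geq M(O_{DD})$. The only addition is your explicit handling of the degenerate case $\epsilon \geq c$ and the sign of $c-\epsilon$, which the paper leaves implicit.
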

\begin{proof}
Consider the same instance as in the proof of Lemma \ref{dslb}.
For that instance, $M(O_{SS})> (c-\epsilon) M(O_{DS})$.
It follows that  $M(O_{SS})> (c-\epsilon) M(O_{DD})$ for the instance in  Lemma \ref{dslb} , since $M(O_{DS})\geq M(O_{DD})$.
\end{proof}

\begin{lemma}\label{sdub}
For any instance of the joint cache partition and job assignment problem, $M(O_{SS})\leq 2M(O_{SD})$.
\end{lemma}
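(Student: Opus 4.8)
The plan is to take the optimal static-partition, dynamic-assignment solution $O_{SD}$, keep its cache partition $p$ fixed, and show that the set of jobs it runs can be re-bundled into a \emph{static} assignment on the same partition with makespan at most $2M(O_{SD})$. The key point is that with a static partition, every core $i$ has a fixed cache allocation $p(i)$, so a job $j$ that $O_{SD}$ ever runs on core $i$ (possibly split across time, possibly across several cores) runs there at a fixed rate $1/T_j(p(i))$; splitting a job across time on one core gains nothing over running it contiguously, and splitting it across cores only ever helps by moving work to a core with larger cache. So first I would argue that there is a static \emph{fractional} assignment on partition $p$ of makespan at most $M(O_{SD})$: for core $i$, let its ``work budget'' be $L_i := M(O_{SD})$ and observe that $O_{SD}$ schedules, on core $i$, a total of $\sum_{j} (\text{fraction of }j\text{ run on }i) \le L_i$ in time units; collecting, over all cores, the fractions $y_{ij}\in[0,1]$ of each job $j$ run on each core $i$ gives $\sum_i y_{ij}=1$ for every $j$ and $\sum_j y_{ij} T_j(p(i)) \le M(O_{SD})$ for every $i$.

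Next I would invoke the standard rounding argument for makespan-type assignment LPs (this is exactly the situation handled by Lenstra–Shmoys–Tardos \cite{LST90}, whose $2$-approximation works by rounding a fractional assignment): a fractional assignment whose fractional load on every machine is at most $M$ and in which every job placed fractionally on machine $i$ has processing time at most $M$ there can be rounded to an integral assignment of makespan at most $2M$. Here the ``processing time at most $M$'' hypothesis holds automatically: if $O_{SD}$ runs any positive fraction of job $j$ on core $i$, then in particular $T_j(p(i))$ cannot exceed the total time core $i$ is busy in $O_{SD}$, which is at most $M(O_{SD})$; hence we only ever assign $j$ integrally to a core where $T_j(p(i)) \le M(O_{SD})$. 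Rounding then yields an integral (hence static) assignment $S$ on the static partition $p$ with $M(p,S) \le 2M(O_{SD})$, and since $O_{SS}$ is optimal among static--static solutions, $M(O_{SS}) \le M(p,S) \le 2M(O_{SD})$.

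The main obstacle is making the first step fully rigorous: one has to be careful that ``the fraction of job $j$ that $O_{SD}$ completes while running on core $i$'' is well-defined and that these fractions sum to $1$ (they do, because $O_{SD}$ is a valid schedule, so $\sum_{\langle t,i\rangle \in S^{-1}(j)} 1/T_j(p(t,i)) = 1$ and $p(t,i)=p(i)$ is constant in $t$), and that the per-core time used is $\sum_j y_{ij} T_j(p(i))$, which is at most the load on core $i$ in $O_{SD}$ and hence at most $M(O_{SD})$. Once the fractional assignment with the bounded-processing-time property is in hand, the rounding is entirely standard and can be cited. (A self-contained alternative to citing the rounding: the fractional solution restricted to any single core, together with the integrality of such bipartite-like assignment polytopes, lets one peel off jobs greedily and overshoot each core by at most the largest single job on it, which is $\le M(O_{SD})$ — giving the same factor $2$.)
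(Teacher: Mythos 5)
Your overall route is the same as the paper's: freeze the static partition $p$, read off from $O_{SD}$ the fractions $x_{ij}$ of each job's work done on each core, observe that these form a fractional assignment of per-core load at most $M(O_{SD})$ for the induced unrelated-machines instance, and invoke the Lenstra--Shmoys--Tardos rounding to obtain an integral, hence static, assignment of makespan at most $2M(O_{SD})$. The first half of your argument (that the $x_{ij}$ are well defined because $p(t,i)$ is constant in $t$, that they sum to one for each job by validity of the schedule, and that $\sum_j x_{ij}T_j(p(i))$ is at most the busy time of core $i$) is fine and matches the paper's proof.

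The gap is in your claim that the hypothesis needed for the rounding --- that every job is fractionally supported only on cores where its processing time is at most $M(O_{SD})$ --- ``holds automatically.'' Your justification is that if $O_{SD}$ runs any positive fraction of $j$ on core $i$ then $T_j(p(i))$ is at most the total busy time of core $i$. That is false: if $j$ runs on core $i$ for $n_{ij}$ time units it completes only the fraction $x_{ij}=n_{ij}/T_j(p(i))$ there, so core $i$ is busy for at least $x_{ij}\,T_j(p(i))$ time units, not $T_j(p(i))$. Concretely, with $T_j(p(1))=100$ and $T_j(p(2))=2$, a dynamic schedule may run $j$ for one time unit on core $1$ (doing $1/100$ of its work) and finish it on core $2$, giving a makespan of about $3$ while the pair $(1,j)$ sits in the support with processing time $100$; a rounding that sends $j$ integrally to core $1$ then has makespan $100$. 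What non-simultaneity actually yields is the per-job bound $\sum_i x_{ij}T_j(p(i))\le M(O_{SD})$, hence only $T_j(p(i))\le M(O_{SD})/x_{ij}$ on supported pairs, and repairing this (e.g., by discarding slow pairs and rescaling the remaining fractions) costs an additional constant factor rather than preserving the factor $2$. To be fair, the paper's own proof takes the identical route and is equally silent on this point, so you have in effect reproduced its argument; but the explicit justification you supplied for the one step that genuinely needs care does not hold, and as written your proof does not establish the factor $2$.
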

\begin{proof}
Consider any instance of the joint cache partition and job assignment problem and let $O_{SD}=(p,S)$.
Let $x_{ij}$ be the fraction of job $j$'s work unit that is carried out by core $i$. Formally, $x_{ij}= \frac{|\{t \mid (t,i)\in S^{-1}(j) \} |   }{T_j(p(i))}$.
Let's consider the instance of scheduling on unrelated machines where job $j$ runs on core $i$ in time $T_j(p(i))$.
Since for every job $j$, $\sum\limits_{i=1}^{c} x_{ij} = 1$ then $x_{ij}$ is a fractional assignment for that instance of the unrelated machines scheduling problem.
The makespan of this fractional solution is $M(O_{SD})$.
Let $y$ be the optimal fractional assignment of the defined instance of unrelated machines. We know that if we apply Lenstra's rounding theorem \cite{LST90} to $y$, we get an integral assignment for the unrelated machines scheduling instance, denoted by $z$, such that the makespan of $z$ is at most twice the makespan of $y$ and therefore at most twice the makespan of $x$.
Assignment $z$ is a static job assignment and therefore $(p,z)$ is a solution to the joint static cache partition and static job assignment problem of our original instance, with makespan at most twice $M(O_{SD})$. It follows that $M(O_{SS})\leq 2M(O_{SD})$.
\end{proof}

\begin{lemma}\label{sdlb}
For $c\geq 2$, there is an instance of the joint partition and scheduling problem such that $\frac{M(O_{SS})}{M(O_{SD})} = 2-\frac{2}{c}$.
\end{lemma}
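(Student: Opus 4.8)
The plan is to exhibit a single instance on $c$ cores for which a static cache partition together with a \emph{dynamic} schedule finishes in time $1$, while every static partition together with a static assignment needs time $2-\tfrac2c$; this matches the upper bound of Lemma~\ref{sdub} up to the $2/c$ term. I would take all load functions to be \emph{constant}, so that the cache allocation is irrelevant and, under any static partition, the problem reduces to scheduling $c+1$ jobs on $c$ identical machines --- non-preemptively for $O_{SS}$, preemptively with migration for $O_{SD}$. Concretely: one ``big'' job $B$ with $T_B(x)\equiv 1$, and $c$ ``medium'' jobs, each with load $\equiv 1-\tfrac1c$. (If one insists that the cache play a genuine role, one may instead use step functions with threshold $K/c$; see the last paragraph. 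I use the continuous-time reading of the model; scaling all loads by $c$ makes the data integral, and McNaughton's rule below still applies verbatim.)

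First I would show $M(O_{SS})=2-\tfrac2c$. For the upper bound, place $B$ alone on one core and spread the $c$ medium jobs as evenly as possible over the remaining $c-1$ cores; one core then carries two medium jobs, with load $2\bigl(1-\tfrac1c\bigr)=2-\tfrac2c$, and since $c\ge 2$ this dominates the load $1$ of $B$'s core, so this assignment has makespan $2-\tfrac2c$. For the lower bound, consider the core on which a given static assignment places $B$: if that core also runs a medium job, its load is at least $1+\bigl(1-\tfrac1c\bigr)=2-\tfrac1c>2-\tfrac2c$; otherwise all $c$ medium jobs lie on the other (at most $c-1$) cores, so by pigeonhole some core runs at least two of them and has load at least $2\bigl(1-\tfrac1c\bigr)=2-\tfrac2c$. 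Hence every static assignment has makespan at least $2-\tfrac2c$, and the assignment above attains it, so $M(O_{SS})=2-\tfrac2c$.

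Next I would show $M(O_{SD})=1$. The lower bound is immediate: $B$ must receive a total of one time unit of processing at unit rate and can run on at most one core at any instant, so every (even dynamic) solution has makespan at least $1$. For the upper bound, fix any static partition, run $B$ on one core throughout $[0,1]$, and schedule the $c$ medium jobs preemptively (with migration) on the other $c-1$ cores by McNaughton's wrap-around rule: the largest medium load is $1-\tfrac1c\le 1$ and the average load over these $c-1$ cores is $\tfrac{c(1-1/c)}{c-1}=1$, so McNaughton yields a feasible schedule of makespan $\max\bigl(1-\tfrac1c,\,1\bigr)=1$. Thus $M(O_{SD})\le 1$, hence $M(O_{SD})=1$, and $\tfrac{M(O_{SS})}{M(O_{SD})}=2-\tfrac2c$, as claimed.

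The construction is elementary; the only point that needs care is the choice of the medium load $1-\tfrac1c$, which must be simultaneously small enough that the $c$ medium jobs preemptively pack into $c-1$ cores within time $1$ --- this is the tight case of McNaughton's bound, which pins $M(O_{SD})=1$ --- and large enough that the unavoidable doubling of a medium job on some core in the static regime yields load exactly $2-\tfrac2c$ rather than something smaller. The only real case analysis is the short pigeonhole argument above, which also disposes of the alternative of co-locating $B$ with a medium job. If one replaces the constant loads by step functions with threshold $K/c$, the lower-bound argument needs one extra observation: a core allocated less than $K/c$ cache can run no job, so at most $c-1$ cores are usable whenever the partition is not the uniform one, and a one-line computation checks that this never beats $2-\tfrac2c$ for $c\ge2$; the rest is unchanged.
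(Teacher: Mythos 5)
Your construction is exactly the instance the paper uses ($c$ cache-independent jobs of load $1-\frac{1}{c}$ plus one unit job), and the argument is essentially the same; the only differences are cosmetic --- the paper achieves makespan $1$ dynamically by placing one medium job on each of the $c$ cores and splitting the unit job into staggered $\frac{1}{c}$-pieces, whereas you keep the unit job whole and wrap the medium jobs over the other $c-1$ cores via McNaughton's rule, and you additionally spell out the pigeonhole lower bound on $M(O_{SS})$ that the paper leaves implicit. The proposal is correct.
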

\begin{proof}
Consider the following instance. There are $c$ jobs, where each takes  $1-\frac{1}{c}$ time regardless of the cache allocation, and one job that takes $1$ time unit, regardless of cache.
The optimal static schedule for this instance assigns two jobs of size $1-\frac{1}{c}$ to the first core, assigns one job of size $1-\frac{1}{c}$ to each of the cores $2,\ldots,c-1$, and assigns the unit sized job to the last core. This yields a makespan of $2-\frac{2}{c}$.
The optimal dynamic assignment assigns one job of size $1-\frac{1}{c}$ fully to each core, and then splits the unit job equally among the cores, to yield a makespan of $1$. Notice that this can be scheduled in a way the the unit job will never run simultaneously on more than one core. This is achieved by running the $i$th fraction of size $\frac{1}{c}$ of the unit job on core $i$ at time $\frac{i-1}{c}$. The other jobs, that are fully assigned to a single core, are paused and resumed later, if necessary, to accommodate the fractions of the unit sized job.
Therefore in this instance the ratio   $\frac{M(O_{SS})}{M(O_{SD})}$  is exactly $2-\frac{2}{c}$.
\end{proof}

\bibliography{references}

\end{document}